\theoremstyle{plain}
\newtheorem{thm}{\protect\theoremname}[section]
\newtheorem{theorem}[thm]{Theorem}
\newtheorem{lem}[thm]{Lemma}
\newtheorem{definition}[thm]{Definition}
\newtheorem{corollary}[thm]{Corollary}
\newtheorem{fact}[thm]{Fact}
\newtheorem{proposition}[thm]{Proposition}
\newtheorem{question}[thm]{Open Question}
\newcommand\slarge{\@setfontsize\slarge{14}{14}}
  \theoremstyle{definition}
  \newtheorem{defn}[thm]{\protect\definitionname}
\newcommand{\ignore}[1]{}%
\renewcommand{\SAT}{\textsf{CNF-SAT}\xspace}
\newcommand{\CNF}{\textsf{CNF}\xspace}
\newcommand{\eq}{{\sf Equality}\xspace}
\newcommand{\multeq}{{\sf Multi-Equality}\xspace}
\newcommand{\seleq}{{\sf Selective-Equality}\xspace}
\newcommand{\SZ}{\textsc{SumZero}}
\newcommand{\SEp}{\textsc{$\mathbb Z_p$-SumZero}}
\newcommand{\EQU}{\textsc{Eq}}
\newcommand{\EQUN}{\textsc{SelEq}}
\newcommand{\EQ}{\textsc{MultEq}}
\newcommand{\tEQ}{\text{\textsc{MultEq}}}
\newcommand{\SD}{\textsc{Disj}}
\newcommand{\cA}{\mathcal{A}}
\newcommand{\cF}{\mathcal{F}}
\newcommand{\cZ}{\mathcal{Z}}
\newcommand{\SMP}{\textsf{SMP}\xspace}
\renewcommand{\PCP}{\textsf{PCP}\xspace}
\newcommand{\cC}{\mathcal{C}}
\newcommand{\tS}{\widehat{S}}
\newcommand{\N}{\mathbb{N}}
\newcommand{\Z}{\mathbb{Z}}
\theoremstyle{plain}
\newtheorem{hypothesis}{Hypothesis}
\DeclareMathOperator*{\var}{var}
\DeclareMathOperator*{\ext}{ext}
\providecommand{\definitionname}{Definition}
\providecommand{\theoremname}{Theorem}
\renewcommand{\W}{\mbox{\sf W}\xspace}
\renewcommand{\FPT}{\mbox{\sf FPT}\xspace}
\newcommand{\SETH}{\mbox{\sf SETH}\xspace}
\newcommand{\ETH}{\mbox{\sf ETH}\xspace}
\newcommand{\gapETH}{\mbox{\sf Gap-ETH}\xspace}
\newcommand{\threesum}{\mbox{\sf 3-SUM}\xspace}
\newcommand{\ksum}{\mbox{\sf $k$-SUM}\xspace}
\newcommand{\subsetsum}{\mbox{\sf SUBSET-SUM}\xspace}
\newcommand{\mip}{\mbox{\sf MIP}\xspace}
\newcommand{\sumz}{\mbox{\sf Sum-Zero}\xspace}
\newcommand{\sumzp}{\mbox{\sf Sum-Zero($\mathbb{Z}_p$)}\xspace}
\newcommand{\Wone}{\mbox{\sf $\W[1]\neq \FPT$}\xspace}
\newcommand{\domset}{\mbox{\sf DomSet}\xspace}
\newcommand{\maxcover}{\mbox{\sf MaxCover}\xspace}
\newcommand{\minlabel}{\mbox{\sf MinLabel}\xspace}
\newcommand{\PSP}{\mbox{\sf PSP}\xspace}
\newcommand{\clique}{\mbox{\sf Clique}\xspace}
\title{\textbf{On the Parameterized Complexity of  \\
Approximating Dominating Set}}
\date{}
\author{
  Karthik C.\ S.\thanks{
    Weizmann Institute of Science, Israel.
    Email: \texttt{karthik.srikanta@weizmann.ac.il}}
\and 
  Bundit Laekhanukit\thanks{
    Shanghai University of Finance and Economics, China \& 
    Simons Institute for the Theory of Computing, USA.
    Email: \texttt{blaekhan@mpi-inf.mpg.de}
  }
\and
  Pasin Manurangsi\thanks{
    University of California, Berkeley, USA.
    Email: \texttt{pasin@berkeley.edu}
  }
}
\begin{document}

\maketitle
\vspace{-25pt}

\begin{abstract}
We study the parameterized complexity of approximating the $k$-Dominating Set (\domset) problem where an integer $k$ and a graph $G$ on $n$ vertices are given as input, and the goal is to find a dominating set of size at most $F(k) \cdot k$ whenever the graph $G$ has a dominating set of size $k$. When such an algorithm runs in time $T(k) \cdot \poly(n)$ (i.e., FPT-time) for some computable function $T$, it is said to be an \emph{$F(k)$-FPT-approximation algorithm} for $k$-\domset. Whether such an algorithm exists is listed in the seminal book of Downey and Fellows (2013) as one of the ``most infamous'' open problems in Parameterized Complexity. This work gives an almost complete answer to this question by showing the non-existence of such an algorithm under \Wone and further providing tighter running time lower bounds under stronger hypotheses. Specifically, we prove the following for every computable functions $T, F$ and every constant $\varepsilon > 0$:
\begin{itemize}
\item Assuming \Wone, there is no \emph{$F(k)$-FPT-approximation algorithm} for $k$-\domset.
\item Assuming the Exponential Time Hypothesis (\ETH), there is no $F(k)$-approximation algorithm for $k$-\domset that runs in $T(k) \cdot n^{o(k)}$ time.
\item Assuming the Strong Exponential Time Hypothesis (\SETH), for every integer $k \geq 2$, there is no $F(k)$-approximation algorithm for $k$-\domset that runs in $T(k) \cdot n^{k - \varepsilon}$ time.
\item Assuming the \ksum Hypothesis, for every integer $k \geq 3$, there is no $F(k)$-approximation algorithm for $k$-\domset that runs in $T(k) \cdot n^{\lceil k/2 \rceil - \varepsilon}$ time.
\end{itemize}


Previously, only constant ratio FPT-approximation algorithms were ruled out under \Wone and $(\log^{1/4 - \varepsilon} k)$-FPT-approximation algorithms were ruled out under \ETH [Chen~and~Lin, FOCS 2016]. Recently, the non-existence of an $F(k)$-FPT-approximation algorithm for any function $F$ was shown under \gapETH [Chalermsook~et~al., FOCS 2017]. Note that, to the best of our knowledge, no running time lower bound of the form $n^{\delta k}$ for any absolute constant $\delta > 0$ was known before even for any constant factor inapproximation ratio.

Our results are obtained by establishing a connection between 
communication complexity and hardness of approximation, generalizing the ideas from a recent breakthrough work of Abboud~et~al. [FOCS 2017]. Specifically, we show that to prove hardness of approximation of a certain parameterized variant of the label cover problem, it suffices to devise a specific protocol for a communication problem that depends on which hypothesis we rely on. Each of these communication problems turns out to be either a well studied problem or a variant of one; this allows us to easily apply known techniques to solve them.

\end{abstract}

\clearpage

\section{Introduction}
\label{sec:intro}

In the {\em dominating set} problem (\domset), we are given an undirected graph $G$ on $n$ vertices and an integer $k$, and the goal is to decide whether there is a subset of vertices $S\subseteq V(G)$ of size $k$ such that every vertex outside $S$ has a neighbor in $S$ (i.e., $S$ {\em dominates} every vertex in $G$ and is thus called a {\em dominating set}). 
Often regarded as one of the classical problems in computational complexity, \domset was first shown to be NP-complete in the seminal work of Karp~\cite{Karp72}\footnote{To be precise, Karp showed NP-completeness of Set Cover, which is well-known to be equivalent to \domset.}. Thus, its optimization variant, namely the {\em minimum dominating set}, where the goal is to find a dominating set of smallest possible size, is also  NP-hard. To circumvent this apparent intractability of the problem, the study of an approximate version was initiated. The quality of an approximation algorithm is measured by the \emph{approximation ratio}, which is the ratio between the size of the solution output by an algorithm and the size of the minimum dominating set. A simple greedy heuristic for the problem, which has by now become one of the first approximation algorithms taught in undergraduate and graduate algorithm courses, was intensively studied and was shown to yield a $(\ln n - \ln \ln n + \Theta(1))$-approximation for the problem~\cite{Johnson74a,Chvatal79,Lovasz75,Srinivasan95,Slavik96}. On the opposite side, a long line of works in hardness of approximation~\cite{LundY94,RazS97,Feige98,AlonMS06,Moshkovitz15} culminated in the work of Dinur and Steurer~\cite{DinurS14}, who showed that obtaining a $(1 - \varepsilon)\ln n$-approximation for the problem is NP-hard for every $\varepsilon > 0$. This essentially settles the approximability of the problem.

Besides approximation, another widely-used technique to cope with NP-hardness is \emph{parameterization}. The parameterized version of \domset, which we will refer to simply as $k$-\domset, is exactly the same as the original decision version of the problem except that now we are not looking for a polynomial time algorithm but rather a {\em fixed parameter tractable} (FPT) algorithm -- one that runs in time $T(k)\cdot \poly(n)$ for some computable function $T$ (e.g., $T(k) = 2^k$ or $2^{2^k}$). Such running time will henceforth be referred to as \emph{FPT time}. Alas, even with this relaxed requirement, $k$-\domset still remains intractable: in the same work that introduced the $\W$-hierarchy, Downey and Fellows~\cite{DowneyF95} showed that $k$-\domset is complete for the class $\W[2]$, which is generally believed to not be contained in \FPT, the class of fixed parameter tractable problems. In the ensuing years, stronger running time lower bounds have been shown for $k$-\domset under strengthened assumptions. Specifically, Chen et al.~\cite{ChenHKX06} ruled out $T(k) \cdot n^{o(k)}$-time algorithm for $k$-\domset assuming the {\em Exponential Time Hypothesis} (\ETH)\footnote{\ETH~\cite{IP01} states that no subexponential time algorithm can solve 3-\SAT; see Hypothesis~\ref{hyp:eth}.}. Furthermore, P\u{a}tra\cb{s}cu and Williams \cite{PatrascuW10} proved, for every $k \geq 2$, that, under the {\em Strong Exponential Time Hypothesis} (\SETH)\footnote{\SETH~\cite{IP01}, a strengthening of \ETH, states that, for every $\varepsilon > 0$, there exists $k = k(\varepsilon) \in \mathbb{N}$ such that no $O(2^{(1 - \varepsilon) n})$-time algorithm can solve $k$-\SAT; see Hypothesis~\ref{hyp:seth}.}, not even $O(n^{k - \varepsilon})$ algorithm exists for $k$-\domset for any $\varepsilon > 0$. Note that the trivial algorithm that enumerates through every $k$-size subset and checks whether it forms a dominating set runs in $O(n^{k + 1})$ time. It is possible to speed up this running time using fast matrix multiplication~\cite{EisenbrandG04,PatrascuW10}. In particular, P\u{a}tra\cb{s}cu and Williams~\cite{PatrascuW10} themselves also gave an $n^{k + o(1)}$-time algorithm for every $k \geq 7$, painting an almost complete picture of the complexity of the problem.

Given the strong negative results for $k$-\domset discussed in the previous paragraph, it is natural to ask whether we can somehow incorporate the ideas from the area of approximation algorithms to come up with a \emph{fix parameter approximation (FPT-approximation) algorithm} for $k$-\domset. To motivate the notion of FPT-approximation algorithms, first notice that the seemingly reasonable $O(\log n)$-approximation given by the greedy algorithm can become unjustifiable when the optimum $k$ is small since it is even possible that the overhead paid is unbounded in terms of $k$. As a result, FPT-approximation algorithms require the approximation ratios to be bounded only in terms of $k$; specifically, for any computable function $F$, we say that an algorithm is an $F(k)$-FPT-approximation algorithm for $k$-\domset if it runs in FPT time and, on any input $(G, k)$ such that the minimum dominating set of $G$ is of size at most $k$, it outputs a dominating set of size at most $F(k) \cdot k$. 

This brings us to the main question addressed in our work:
\emph{Is there an $F(k)$-FPT-approximation algorithm for $k$-\domset for some computable function $F$?}
This question, which dates back to late 1990s (see, e.g., \cite{DowneyFM06}), has attracted significant attention in literature~\cite{DowneyFM06,ChenGG06,DowneyFMR08,CaiH10,DowneyF13,HajiaghayiKK13,ChitnisHK13,BonnetE0P15,ChenL16,ChalermsookCKLMNT17}. In fact, it is even listed in the seminal textbook of Downey and Fellows~\cite{DowneyF13} as one of the six ``most infamous'' open questions\footnote{Since its publication, one of the questions, the parameterized complexity of biclique, has been resolved~\cite{Lin15}.} in the area of Parameterized Complexity. While earlier attempts fell short of ruling out either $F(k)$ that is super constant or all FPT algorithms (see Section~\ref{sec:intro:previous-works} for more details), the last couple of years have seen significant progresses on the problem. In a remarkable result of Chen and Lin~\cite{ChenL16}, it was shown that no FPT-approximation for $k$-\domset exists for any constant ratio unless $\W[1] = \FPT$. They also proved that, assuming \ETH, the inapproximability ratio can be improved to $\log^{1/4 - \varepsilon} k$ for any constant $\varepsilon > 0$. Very recently, Chalermsook~et~al.~\cite{ChalermsookCKLMNT17} proved, under the {\em Gap Exponential Time Hypothesis} (\gapETH)\footnote{\gapETH~\cite{D16,MR16}, another strengthening of \ETH, states that no subexponential time algorithm can distinguish satisfiable 3-\CNF formulae from ones that are not even $(1 - \varepsilon)$-satisfiable for some $\varepsilon > 0$.}, that no $F(k)$-approximation algorithm for $k$-\domset exists for any computable function $F$. 
Such non-existence of FPT-approximation algorithms is referred to in literature as the \emph{total FPT-inapproximability} of $k$-\domset.

Although Chalermsook~et~al.'s result on the surface seems to settle the parameterized complexity of approximating dominating set, several aspects of the result are somewhat unsatisfactory. First, while \gapETH may be plausible, it is quite strong and, in a sense, does much of the work in the proof. Specifically, \gapETH itself already gives the gap in hardness of approximation; once there is such a gap, it is not hard\footnote{One issue grossed over in this discussion is that of \emph{gap amplification}. While \gapETH gives some constant gap, Chalermsook~et~al. still needed to amplify the gap to arrive at total FPT-inapproximability. Fortunately, unlike the NP-hardness regime that requires Raz's parallel repetition theorem~\cite{Raz98}, the gap amplification step in~\cite{ChalermsookCKLMNT17}, while non-trivial, only involved relatively simple combinatorial arguments. (See~\cite[Theorem 4.3]{ChalermsookCKLMNT17}.)} to build on it and prove other inapproximability results. As an example, in the analogous situation in NP-hardness of approximation, once one inapproximability result can be shown, others follow via relatively simple gap-preserving reductions (see, e.g., \cite{PY91}). On the other hand, creating a gap in the first place requires the PCP Theorem~\cite{AS98,ALMSS98}, which involves several new technical ideas such as local checkability of codes and proof composition\footnote{Even in the ``combinatorial proof'' of the PCP Theorem~\cite{Dinur07}, many of these tools still remain in use, specifically in the alphabet reduction step of the proof.}. Hence, it is desirable to bypass \gapETH and prove total FPT-inapproximability under assumptions that do not involve hardness of approximation in the first place. Drawing a parallel to the theory of NP-hardness of approximation once again, it is imaginable that a success in bypassing \gapETH may also reveal a ``PCP-like Theorem'' for parameterized complexity.

An additional reason one may wish to bypass \gapETH for the total FPT-inapproximability of $k$-\domset is that the latter is a statement purely about parameterized complexity, so one expects it to hold under a standard parameterized complexity assumption. Given that Chen and Lin~\cite{ChenL16} proved $\W[1]$-hardness of approximating $k$-\domset to within any constant factor, a concrete question here is whether we can show $\W[1]$-hardness of approximation for every function $F(k)$:

\begin{question}
Can we base the total FPT-inapproximability of $k$-\domset on \Wone?
\end{question}

Another issue not completely resolved by~\cite{ChalermsookCKLMNT17} is the running time lower bound. While the work gives a quite  strong running time lower bound that rules out any $T(k) \cdot n^{o(k)}$-time $F(k)$-approximation algorithm for any computable functions $T$ and $F$, it is still possible that, say, an $O(n^{0.5 k})$-time algorithm can provide a very good (even constant ratio) approximation for $k$-\domset.
Given the aforementioned $O(n^{k - \varepsilon})$ running time lower bound for exact algorithms of $k$-\domset by P\u{a}tra\cb{s}cu and Williams~\cite{PatrascuW10}, it seems reasonable to ask whether such a lower bound can also be established for approximation algorithms:

\begin{question}
Is it hard to approximate $k$-\domset in $O(n^{k - \varepsilon})$-time?
\end{question}

This question has perplexed researchers, as even with the running time of, say, $O(n^{k-0.1})$, no $F(k)$-approximation algorithm is known for $k$-\domset for any computable function $F$.

 
\subsection{Our Contributions}
\label{sec:intro:our-results}

Our contributions are twofold.
Firstly, at a higher level, we prove parameterized inapproximabilty results for $k$-\domset, answering the two aforementioned open questions (and more).
Secondly, at a lower level, we demonstrate a connection between 
communication complexity and parameterized inapproximability,
allowing us to translate running time lower bounds for parameterized problems into 
parameterized hardness of approximation. This latter part of the contribution extends ideas from a recent breakthrough of Abboud et al.~\cite{ARW17}, who discovered similar connections and used them to establish inapproximability for problems in P. In this subsection, we only focus on the first part of our contributions. The second part will be discussed in detail in Section~\ref{sec:comm-connection}.

\subsubsection*{Parameterized Inapproximability of Dominating Set}
Our first batch of results are the inapproximability results for $k$-\domset under various standard assumptions in parameterized complexity and fine-grained complexity: \Wone, \ETH, \SETH and the \ksum Hypothesis. First, we show total inapproximability of $k$-\domset under \Wone. In fact, we show an even stronger\footnote{Note that the factor of the form $(\log n)^{1/\poly(k)}$ is stronger than that of the form $F(k)$. To see this, assume that we have an $F(k)$-FPT-approximation algorithm for some computable function $F$. We can turn this into a $(\log n)^{1/\poly(k)}$-approximation algorithm by first checking which of the two ratios is smaller. If $F(k)$ is smaller, then just run the $F(k)$-FPT-approximation algorithm. Otherwise, use brute force search to solve the problem. Since the latter case can only occur when $n \leq \exp(F(k)^{\poly(k)})$, we have that the running time remains FPT.} inapproximation ratio of $(\log n)^{1/\poly(k)}$:

\begin{theorem}\label{thm:domW1}
Assuming \Wone, no FPT time algorithm can approximate $k$-\domset  to within a factor of $(\log n)^{1/\poly(k)}$.
\end{theorem}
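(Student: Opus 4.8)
The plan is to route the argument through a parameterized, gap version of the Label Cover problem (denoted \maxcover in the paper), following the distributed-PCP-style connection between communication complexity and hardness of approximation introduced by Abboud, Rubinstein and Williams~\cite{ARW17}, but seeded by a $\W[1]$-complete problem rather than by \SAT. Concretely, I would proceed in three stages. First, reduce \kclique, which is $\W[1]$-complete and which we may take in its balanced multipartite form with color classes $V_1, \dots, V_k$, to a \maxcover instance with a constant gap (value $1$ in the completeness case, at most $1/2$ in the soundness case). Second, amplify this gap up to $(\log n)^{1/\poly(k)}$. Third, translate the resulting gap \maxcover instance into its minimization version \minlabel, and thence through the standard reductions to Set Cover and to Dominating Set. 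Each stage is an FPT reduction, hence preserves \Wone, and the gap surviving the second stage is exactly the inapproximability ratio claimed in Theorem~\ref{thm:domW1}.

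For the first stage, split the $k$ color classes into a small number of blocks and view a candidate $k$-clique as a tuple of vertices, one per class, whose restriction to each block is held by a separate ``prover''. The within-block clique tests are purely local and harmless; the cross-block adjacency tests, asking whether $u \in V_i$ is adjacent to $v \in V_j$ for $i, j$ in different blocks, are the non-local content, and this is precisely where a communication problem enters. I would encode these cross-block tests as an equality-style consistency problem. For the $\W[1]$ regime a variant of \eq already suffices, in contrast to the \SETH regime where Set Disjointness is needed, and this is what makes the present case inexpensive. The \maxcover instance is built so that a genuine $k$-clique induces a globally consistent assignment covering every right vertex (completeness), while any assignment covering more than half of the right vertices must agree with its neighbours on many coordinates and hence witness a $k$-clique in $G$ (soundness); the gap $1$ versus $1/2$ is inherited from the error probability of the randomized (or Merlin--Arthur) protocol for \eq that one plugs in. The left side of the resulting instance has only $\poly(k)$ vertices, its alphabet has size $n^{O(1)}$ times a factor governed by the protocol's communication cost, and the whole construction has size $T(k) \cdot \poly(n)$, so it is a legal FPT reduction.

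The second stage produces the ratio $(\log n)^{1/\poly(k)}$ by amplifying the constant gap. On the \maxcover side this is a product-type construction; on the communication side it corresponds to handling many copies of the consistency problem at once, i.e., a gap version of \multeq (given $t$ pairs of strings, distinguish ``all $t$ pairs equal'' from ``at most $\varepsilon t$ pairs equal''). The point is that \multeq inherits an efficient protocol from \eq (recall that \eq has $O(1)$ public-coin communication complexity, independent of the string length), so the protocol cost grows only mildly with the number of copies; chasing the trade-off between how far one can amplify and the requirement that the final instance still have size $T(k) \cdot \poly(n)$ leaves a surviving gap of $(\log n)^{1/\poly(k)}$, where the $\poly(k)$ in the exponent reflects the number of coordinates the \kclique-to-\maxcover reduction introduces. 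This amplification is of the same flavour as the one in Chalermsook~et~al.~\cite{ChalermsookCKLMNT17}, and, as they observe, it needs only elementary combinatorics, not parallel repetition of a PCP.

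The third stage is routine: the standard reduction from \minlabel to Set Cover and thence to \domset turns a gap of $1$ versus $\rho$ for \maxcover into a \domset instance with parameter $\poly(k)$ whose optimum jumps by a factor of essentially $\rho$, up to lower-order terms absorbed into $\poly(k)$, which establishes Theorem~\ref{thm:domW1}. I expect the main obstacle to lie in the first stage: one must verify that an \emph{approximately} good cover --- not merely a perfect one --- still forces enough pairwise adjacencies to reconstruct a clique (this is the soundness analysis), and one must keep the communication cost of the protocol for \eq, and hence the alphabet blow-up incurred by the amplification in the second stage, within the $T(k) \cdot \poly(n)$ budget. Designing the \eq/\multeq-based communication problem so that its near-optimal protocol induces a gap as large as $(\log n)^{1/\poly(k)}$, rather than merely a constant, is the crux; once that protocol and the gap it yields are in hand, the remaining reductions are bookkeeping.
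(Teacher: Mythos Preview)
Your proposal is essentially correct and follows the paper's approach: reduce \kclique to a \PSP for a multi-equality function, translate an \SMP protocol for that function into a gap-\maxcover instance, amplify by repeating the protocol, and then apply the \maxcover-to-\domset reduction of~\cite{ChalermsookCKLMNT17}.

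One point worth sharpening in your first stage. Your phrasing ``split the $k$ color classes into a small number of blocks'' with provers holding \emph{vertices} and cross-block \emph{adjacency} tests is not quite how the paper sets things up, and as stated it is unclear how adjacency becomes an equality predicate. The paper instead assigns one player to each \emph{pair} $\{i,j\}\in\binom{[k]}{2}$, and that player's input is an \emph{edge} of $G$ between $V_i$ and $V_j$; adjacency is thus enforced for free inside each player's input, and the only global test is that, for each $i\in[k]$, all players mentioning index $i$ agree on which vertex of $V_i$ they hold. That is a pure \multeq instance, for which the $O(1)$-bit public-coin protocol based on good binary codes applies directly. With this decomposition your worry about soundness also dissolves: the fraction of left super-nodes covered by a labeling \emph{equals} the protocol's acceptance probability on the corresponding input, so soundness is inherited verbatim from the protocol rather than requiring a separate combinatorial extraction of a clique from an approximate cover. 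The amplification you describe is then just independent repetition of the protocol (Proposition~\ref{prop:repeat}), which is even simpler than the combinatorial amplification in~\cite{ChalermsookCKLMNT17}.
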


Our result above improves upon the constant factor inapproximability result of Chen and Lin~\cite{ChenL16} and resolves the question of whether we can base total FPT inapproximability of $k$-\domset on a purely parameterized complexity assumption. Furthermore, if we are willing to assume the stronger \ETH, we can even rule out all $T(k) \cdot n^{o(k)}$-time algorithms:

\begin{theorem}\label{thm:domETH}
Assuming \ETH, no $T(k) \cdot n^{o(k)}$-time algorithm can approximate $k$-\domset to within a factor of $(\log n)^{1/\poly(k)}$.
\end{theorem}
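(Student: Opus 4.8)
The plan is to follow exactly the route used to prove Theorem~\ref{thm:domW1}, but this time to start the reduction from \kclique and to keep careful track of the running time, so that the \ETH-based lower bound for \kclique of Chen et al.~\cite{ChenHKX06} (no $T(k)\cdot n^{o(k)}$-time algorithm) is transported, via the reduction, all the way to approximating $k$-\domset. Concretely, I would go through the parameterized gap versions of two intermediate problems, \maxcover and \minlabel, i.e. the reduction chain
\[
\kclique \ \longrightarrow\ \maxcover \ \longrightarrow\ \minlabel \ \longrightarrow\ \domset ,
\]
where the last two arrows are essentially the classical set-cover-style gadget reductions (in the spirit of Feige~\cite{Feige98} and Chalermsook et al.~\cite{ChalermsookCKLMNT17}) and the first arrow is the ``distributed PCP'' step that creates the gap.

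For the first arrow, given a \kclique instance on $n$ vertices I would construct a \maxcover instance with $k$ left super-nodes --- one per vertex of the prospective clique --- each carrying the $n$ candidate vertices as labels. The soundness gap is produced, not by a classical PCP, but by plugging in an efficient communication protocol for the relevant consistency-checking task, namely a multi-party \multeq-type problem in which the players jointly verify that the chosen vertices are pairwise adjacent and mutually consistent. As in Abboud et al.~\cite{ARW17}, the number of transcripts of this protocol controls the size of the right side of the \maxcover instance, while the protocol's error probability controls the soundness, hence the gap. Since equality-type problems admit extremely cheap randomized/nondeterministic protocols, the transcript count stays small --- keeping the \maxcover (and hence the final \domset) instance of size $n^{O(1)}$ with the exponent an \emph{absolute} constant, which is what allows the $n^{o(k)}$ lower bound to survive --- while, after a product/repetition-style amplification of the protocol, the soundness can be driven low enough to yield the claimed inapproximability factor $(\log n)^{1/\poly(k)}$.

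For the remaining two arrows I would invoke the standard analysis: a \maxcover value of $1$ corresponds to a dominating set of size $\approx k$, whereas a \maxcover value below $\varepsilon$ forces every dominating set to be larger by a factor governed by $1/\varepsilon$ and the number of right super-nodes; choosing the parameters of the protocol so that this factor is $(\log N)^{1/\poly(k)}$ (with $N$ the number of vertices of the produced graph) and so that the new parameter is $\poly(k)$, we get: a $T(k)\cdot N^{o(k)}$-time $(\log N)^{1/\poly(k)}$-approximation algorithm for $k$-\domset would yield, through the reduction, a $T'(k)\cdot n^{o(k)}$-time algorithm for \kclique, contradicting \ETH.

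The main obstacle is the simultaneous control, in the first arrow, of two competing quantities: the communication cost of the \multeq protocol, which must be $O(\log n)$ up to factors depending only on $k$ (so that the number of transcripts, and therefore the instance size, is $n^{O(1)}$ with a $k$-independent exponent), and the soundness, which must be pushed down to roughly $(\log n)^{-\poly(k)}$ --- and pushed down \emph{cheaply}, so that the amplification does not blow up the transcript count. Exhibiting a protocol realizing the right trade-off between these two is the crux; once it is in hand, essentially all the remaining work --- the gadget reductions, the parameter bookkeeping --- is identical to the proof of Theorem~\ref{thm:domW1}, with the \Wone-hardness of \kclique simply replaced by its \ETH running-time lower bound.
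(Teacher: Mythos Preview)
Your plan has a genuine gap in the parameter accounting. The route through \kclique that you sketch --- essentially reusing the $\W[1]$-hardness reduction --- sends a \kclique instance with parameter $k$ to a \maxcover (and hence \domset) instance whose number of right super-nodes is $k' = \binom{k}{2}$, not $k$: each of the $\binom{k}{2}$ players holds an \emph{edge}, and \EQ\ verifies that the edges agree on shared endpoints. Your alternative description with ``$k$ super-nodes, one per vertex of the clique'' does not fit the \EQ\ framework: if each player holds only a vertex, the predicate ``these $k$ vertices form a clique in $G$'' is \emph{not} an equality-type predicate --- it depends on the full edge relation of $G$, which varies with the instance --- and no cheap \SMP protocol is available for it.

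With the blowup $k \mapsto k' = \Theta(k^2)$, an $N^{g(k')}$-time approximation algorithm for $k'$-\domset with $g(k')=o(k')$ applied to the produced instance (of size $N = \poly_k(n)$) yields only a $T'(k)\cdot n^{O(g(k^2))}$-time algorithm for \kclique. Since $g(k') = o(k')$ merely gives $g(k^2)=o(k^2)$, not $o(k)$, this does \emph{not} contradict the Chen et al.\ \ETH lower bound of $n^{o(k)}$ for \kclique. At best your argument recovers a $T(k)\cdot N^{o(\sqrt{k})}$ lower bound for approximating $k$-\domset --- precisely the exponent in Chen and Lin~\cite{ChenL16}, and strictly weaker than Theorem~\ref{thm:domETH}.

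The paper avoids this loss by \emph{not} going through \kclique. It reduces directly from $3$-\SAT (Proposition~\ref{prop:ETHPSP}): partition the $m=O(n)$ clauses into $k$ equal parts and let each of the $k$ players hold a partial assignment satisfying its part. Because each variable occurs in at most three clauses, hence in at most three parts, the consistency predicate is a genuine \EQ\ over $t(k)=k+\binom{k}{2}+\binom{k}{3}$ unknowns, and the cheap protocol of Section~\ref{sec:multeq} applies. Crucially the \domset parameter stays equal to $k$ while $N = 2^{O(n/k)}$, so an $N^{o(k)}$-time \domset algorithm gives a $2^{o(n)}$-time $3$-\SAT algorithm, directly contradicting \ETH. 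Getting the parameter to be $k$ rather than $\poly(k)$ is exactly the missing idea in your proposal.
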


Note that the running time lower bound and approximation ratio ruled out by the above theorem are exactly the same as those of Charlermsook et al.'s result based on \gapETH~\cite{ChalermsookCKLMNT17}. In other words, we successfully bypass \gapETH from their result completely. Prior to this, the best known \ETH-based inapproximability result for $k$-\domset due to Chen and Lin \cite{ChenL16} ruled out only $(\log^{1/4 + \varepsilon} k)$-approximation for $T(k) \cdot n^{o(\sqrt{k})}$-time algorithms.

Assuming the even stronger hypothesis, \SETH, we can rule out $O(n^{k - \varepsilon})$-time approximation algorithms for $k$-\domset, matching the running time lower bound from~\cite{PatrascuW10} while excluding not only exact but also approximation algorithms. We note, however, that the approximation ratio we get in this case is not $(\log n)^{1/\poly(k)}$ anymore, but rather $(\log n)^{1/\poly(k,e(\varepsilon))}$ for some function $e$, which arises from \SETH and the Sparsification Lemma~\cite{IPZ01}.

\begin{theorem}\label{thm:domSETH}
There is a function $e: \mathbb{R}^+ \to \N$ such that, assuming \SETH, for every integer $k \geq 2$ and for every $\varepsilon > 0$, no $O(n^{k - \varepsilon})$-time algorithm can approximate $k$-\domset to within a factor of $(\log n)^{1/\poly(k,e(\varepsilon))}$.
\end{theorem}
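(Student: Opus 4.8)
The plan is to instantiate the communication\nobreakdash-to\nobreakdash-inapproximability machinery of Section~\ref{sec:comm-connection} with \SETH as the base hypothesis, in exact parallel to how Theorems~\ref{thm:domW1} and~\ref{thm:domETH} are derived from \Wone and \ETH, but with the Set Disjointness communication problem \SD playing the role that \EQ plays there. Fix the target parameter $k \geq 2$ and $\varepsilon > 0$. Starting from $q$-\SAT for a clause width $q = q(\varepsilon,k)$ chosen according to \SETH, I would first apply the Sparsification Lemma~\cite{IPZ01} to reduce, at the cost of a $2^{\delta n}$ blow-up in the number of sub-instances (for an arbitrarily small $\delta$ of our choosing), to formulas $\phi$ on $n$ variables having at most $e_0(\varepsilon)\cdot n$ clauses; the per-variable clause bound $e_0(\varepsilon)$ is precisely what gives rise to the extra argument $e(\varepsilon)$ in the statement.

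Next I would split the $n$ variables of $\phi$ into $k$ blocks $V_1,\dots,V_k$ of size $n/k$, so that a global assignment becomes a tuple $(\alpha_1,\dots,\alpha_k)$ with $\alpha_i$ ranging over the $2^{n/k}$ assignments to $V_i$. The communication problem attached to this decomposition is the natural $k$-party \SD instance: player $i$, holding $\alpha_i$, forms the set $T_i \subseteq [e_0(\varepsilon)\,n]$ of clauses of $\phi$ left unsatisfied by $\alpha_i$ on $V_i$; then $\phi$ is satisfiable iff some tuple achieves $T_1 \cap \dots \cap T_k = \emptyset$. Feeding a protocol for $k$-party \SD into the framework of Section~\ref{sec:comm-connection} yields a \maxcover instance (and, through the framework's subsequent reductions via the minimization label-cover problem \minlabel, a $k$-\domset instance) with $k$ left groups, perfect completeness, and a soundness gap inherited from the protocol; propagating this gap gives a graph $G$ on $N$ vertices for which it is as hard as deciding satisfiability of $\phi$ to distinguish ``$G$ has a dominating set of size $k$'' from ``$G$ has no dominating set of size $(\log N)^{1/\poly(k,e(\varepsilon))}\cdot k$'' (the parameter $N$ here plays the role of ``$n$'' in the theorem statement).

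The whole pipeline must be calibrated so that $N = 2^{(1/k + o(1))\,n}$: the $2^{n/k}$ assignments of a single block are the dominant contribution to the instance size, while the \SD universe has size only $e_0(\varepsilon)\,n$ and — this is essential — the chosen \SD protocol has transcript length $o(n)$, so that every secondary blow-up (the sparsification factor $2^{\delta n}$, the protocol transcript, and the internal gadgets of the \maxcover-to-\domset reduction) fits inside the slack $2^{\Theta(\varepsilon n/k^2)}$ that \SETH leaves above $2^{n/k}$. With this calibration, an $O(N^{k-\varepsilon})$-time algorithm approximating $k$-\domset to within $(\log N)^{1/\poly(k,e(\varepsilon))}$ would decide $\phi$ in time $2^{(k-\varepsilon)(1/k + o(1))\,n} = 2^{(1 - \varepsilon/k + o(1))\,n}$, hence (summing over the $2^{\delta n}$ sparsification sub-instances, with $\delta \ll \varepsilon/k$) would solve $q$-\SAT in time $2^{(1-\varepsilon'')n}$ for some $\varepsilon'' > 0$, contradicting \SETH. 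The heart of the bookkeeping is exactly this: checking that each step's overhead is subexponential in $n$ against the $2^{n/k}$ leading term and affects only the $\poly(k,e(\varepsilon))$ in the exponent of the inapproximability ratio, never the leading $1/k$ in $\log N$.

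The main obstacle, as in the companion theorems, is the protocol for $k$-party \SD: it must simultaneously be cheap enough (transcript length $o(n)$, ideally $\poly(k)\cdot\poly(\log n)$) to keep $N$ at $2^{(1/k+o(1))n}$, and ``gap-inducing'' in the precise sense the framework requires, so that the resulting \maxcover instance has a usable soundness gap — and because the \SETH budget leaves essentially no room for product-type gap amplification, that gap must be produced within the protocol/reduction itself, which is what confines the final ratio to the (still super-constant) $(\log N)^{1/\poly(k,e(\varepsilon))}$ rather than the cleaner $(\log n)^{1/\poly(k)}$ of the \ETH case. I expect the \SD protocol to be a standard efficient disjointness protocol or a mild variant of one, so that once the framework of Section~\ref{sec:comm-connection} is in place the theorem reduces to plugging it in and carrying out the size/running-time accounting above; the one further delicate point is threading the $\varepsilon$-dependence out of the Sparsification Lemma so that it lands inside $e(\varepsilon)$ and not in the $n^{k-\varepsilon}$ exponent, which forces $q$, $e_0(\varepsilon)$, $\delta$, and the amount of gap amplification all to be fixed before $n$ grows.
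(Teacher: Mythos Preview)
Your high-level plan matches the paper's exactly: reformulate \SETH as hardness of the Set Disjointness \PSP via the $k$-way variable split (Proposition~\ref{prop:SETHPSP}), feed an \SMP protocol for \SD into the generic protocol-to-\maxcover reduction (Theorem~\ref{thm:meta-maxcov}, specialized in Corollary~\ref{cor:labSETH}), and finish with the \maxcover-to-\domset reduction (Theorem~\ref{lem:red-maxcov-domset}). The identification $N \approx 2^{n/k}$ and the tracing of $e(\varepsilon)$ to the sparsification constant are correct.

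The gap is in the \SD protocol itself, which you treat as ``a standard efficient disjointness protocol or a mild variant'' with ``transcript length $o(n)$''. No such protocol exists: the randomized lower bounds of~\cite{KS92,R92,BJKS04} force $\Omega(m)$ communication for advice-free \SD, and here $m = \Theta_\varepsilon(n)$. What the paper actually uses (and what Section~\ref{sec:comm-connection} already flags) is a Merlin--Arthur style protocol in which the referee receives a $w$-bit \emph{advice}; concretely, Rubinstein's algebraic-geometric-code protocol~\cite{R17} extended to $k$ players (Theorem~\ref{thm:CC}, Corollary~\ref{cor:CC}). This achieves message length $O((\log m)/k)$ --- which is what makes the left alphabet small enough for Theorem~\ref{lem:red-maxcov-domset} --- but the advice has length $w = m/\alpha = \Theta_\varepsilon(n)$, \emph{linear} in $n$. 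The reduction of Theorem~\ref{thm:meta-maxcov} then outputs $2^{w}$ \maxcover instances, one per advice string, and this $2^{w}$ factor is a real term in the running-time budget: it is absorbed into the $N^{\varepsilon k}$ slack only by taking $\alpha$ large relative to $c_\varepsilon/\varepsilon$, and that choice of $\alpha$ is precisely what pushes the soundness exponent down to $1/\poly(k,e(\varepsilon))$. Your accounting (which lists only the sparsification factor, an $o(n)$ ``transcript'', and the gadget blow-up) omits this enumeration entirely, and without the advice mechanism and the AG-code construction the step ``feed a protocol for $k$-party \SD into the framework'' cannot be carried out at all.
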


Finally, to demonstrate the flexibility of our proof techniques (which will be discussed at length in the next section), we apply the framework to the \ksum Hypothesis\footnote{The \ksum Hypothesis states that, for every $\varepsilon > 0, k \in \mathbb{N}$ such that $k \geq 3$, no $O(n^{\lceil k/2 \rceil - \varepsilon})$-time algorithm solves the \ksum problem; see Hypothesis~\ref{hyp:ksumh}.} which yields an $n^{\lceil k/2 \rceil - \varepsilon}$ running time lower bound for approximating $k$-\domset as stated below.

\begin{theorem}\label{thm:domksum}
Assuming the \ksum Hypothesis, for every integer $k \geq 3$ and for every $\varepsilon > 0$, no $O(n^{\lceil k/2 \rceil - \varepsilon})$-time algorithm can approximate $k$-\domset to within a factor of $(\log n)^{1/\poly(k)}$.
\end{theorem}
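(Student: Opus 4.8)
The plan is to run the ``communication protocol $\Rightarrow$ parameterized inapproximability'' machinery set up earlier in the paper, instantiated for \ksum. That machinery takes a $k$-party one-round public-coin communication problem together with a cheap protocol for it (few shared random bits, short messages, perfect completeness, and soundness bounded away from $1$), and produces a gap instance of \maxcover with $k$ left super-nodes whose size is near-linear in that of the communication problem, up to factors depending only on $k$ and the protocol's parameters; composing with the gap-\maxcover-to-gap-$k$-\domset reduction established earlier yields a gap instance of $k$-\domset. Now a \ksum instance is, in $k$-partite form, just $k$ lists $A_1,\dots,A_k\subseteq[-M,M]$ with $M\le n^{O(k)}$, and it is naturally the $k$-party problem \SZ in which party $i$ holds an element $a_i$ ranging over the $n$ members of $A_i$ (these will be the $n$ labels of the $i$-th left super-node) and the parties must decide whether $\sum_i a_i=0$. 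So all that remains is to supply an efficient protocol for \SZ and then check that nothing inflates by more than an $n^{o(1)}$ factor, so that the $n^{\lceil k/2\rceil}$ barrier of \ksum is inherited by $k$-\domset.

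The protocol is the obvious one: the parties share a uniformly random prime $p$ among the first $R:=2\lceil\log_2(kM)\rceil=O_k(\log n)$ primes, so $p=\poly\log n$; party $i$ sends $m_i:=a_i\bmod p$, and the referee accepts iff $\sum_i m_i\equiv 0\pmod p$. Completeness is $1$, and soundness is at most $1/2$, since if $v:=\sum_i a_i\ne0$ then $|v|\le kM$ has at most $\log_2(kM)=R/2$ distinct prime factors, hence at most half of our $R$ primes divide it. Each message is $O_k(\log\log n)$ bits and there are $R=O_k(\log n)$ shared strings. To boost the gap we pass to the $t$-fold repetition (independent primes, accept iff all $t$ runs accept), which has soundness $2^{-t}$, $R^t$ shared strings, and messages of length $O_k(t\log\log n)$.

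Translating as prescribed by the framework, the \maxcover instance has left super-nodes $U_i=A_i$, one right super-node $W_{\vec p}$ for each $t$-tuple $\vec p$ of primes whose vertices are the accepting transcripts, and an edge from $a_i\in U_i$ to a transcript iff the transcript is consistent with party $i$ receiving $a_i$; a labeling $(a_1,\dots,a_k)$ covers $W_{\vec p}$ iff $a_1+\dots+a_k\equiv0$ modulo each prime of $\vec p$. Choosing $t:=\Theta(\log\log n/\poly(k))$ therefore makes the \maxcover value $1$ in the YES case and $\le(\log n)^{-1/\poly(k)}$ in the NO case, while multiplying the number and sizes of the right super-nodes only by $(R\cdot p^{k-1})^{t}=2^{O_k((\log\log n)^2)}=n^{o(1)}$; so the whole \maxcover instance has $n^{1+o(1)}$ vertices for every fixed $k$. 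Feeding it into the gap-\maxcover-to-gap-$k$-\domset reduction yields a graph on $N=n^{1+o(1)}$ vertices with a dominating set of size $k$ in the YES case and none of size $(\log n)^{1/\poly(k)}\cdot k$ in the NO case. An $O(N^{\lceil k/2\rceil-\varepsilon})$-time $(\log n)^{1/\poly(k)}$-approximation for $k$-\domset would thus decide the original \ksum instance in time $n^{(1+o(1))(\lceil k/2\rceil-\varepsilon)}=O(n^{\lceil k/2\rceil-\varepsilon/2})$, contradicting the \ksum Hypothesis.

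The conceptual ingredient — the random-prime protocol — is short; I expect the real work to be the quantitative bookkeeping: making sure that the polynomial-universe form of \ksum, the protocol parameters $(R,p)$, the $t$-fold amplification, and the label-cover-to-\domset step compound to only an $n^{o(1)}$ overhead in $N$. This near-linearity is exactly what lets the $\lceil k/2\rceil$ exponent survive: the cruder reductions with $n^{\poly(k)}$ blow-up that are good enough for the \Wone and \ETH versions (Theorems~\ref{thm:domW1} and~\ref{thm:domETH}) would completely wash it out. A secondary point to check is that the random-prime protocol genuinely fits the one-round public-coin model the framework expects, so that the \maxcover translation goes through as a black box.
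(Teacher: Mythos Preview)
Your high-level plan is exactly the paper's: reformulate \ksum as $\PSP(k,\cF^{\SZ})$, exhibit an \SMP protocol for $\SZ$, plug it into Theorem~\ref{thm:meta-maxcov} to get a gap-\maxcover instance, and finish with Theorem~\ref{lem:red-maxcov-domset}. (Minor point: you have the sides of the label cover swapped --- in the paper's convention the $k$ players become the \emph{right} super-nodes and the random strings index the \emph{left} super-nodes --- but that is only naming.)

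The real gap is quantitative, and it is precisely the one the paper warns about. The \maxcover$\to$\domset reduction of Theorem~\ref{lem:red-maxcov-domset} is \emph{not} near-linear: it produces a graph on $|W|+\sum_{j}k^{|U_j|}$ vertices, i.e.\ it blows up by $k^{2^{\ell k}}$ where $\ell$ is the per-player message length. So one needs $\ell k$ strictly below $\log m\approx\log\log n$ (this is the condition ``$\ell<(\log m)/(\beta k)$ for some $\beta>1$'' in Corollary~\ref{cor:labksum}). Your random-prime protocol sends $a_i\bmod p$ with $p=\poly\log n$, hence already $\ell=\Theta(\log\log n)$ \emph{before} any repetition; after your $t=\Theta(\log\log n/\poly(k))$ repetitions, $\ell k=\Theta_k((\log\log n)^2)$, so $2^{\ell k}=(\log n)^{\omega(1)}$ and the \domset instance has $k^{2^{\ell k}}=n^{\omega(1)}$ vertices, not $n^{1+o(1)}$. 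Your claim that ``feeding it into the gap-\maxcover-to-gap-$k$-\domset reduction yields a graph on $N=n^{1+o(1)}$ vertices'' is therefore false, and the $\lceil k/2\rceil$ exponent does not survive.

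The paper fixes exactly this: Nisan's protocol by itself (your protocol) has $\ell=\Theta(\log m)$, which is too long; Viola's protocol has $\ell=O(\log k)$ but uses $O(m)$ randomness, which is too much. Theorem~\ref{thm:RandKSum} composes the two --- pick a random small prime $p$ as you do, then run Viola's $O(\log k)$-bit protocol for $\SEp$ rather than sending $a_i\bmod p$ in full --- obtaining $\ell=O(\log k)$ \emph{independent of $m$} with only $O(\log m)$ randomness. Only then can one repeat $\Theta\!\big((\log m)/(k\log k)\big)$ times to drive the soundness to $(1/m)^{1/O(k\log k)}$ while keeping $\ell k\le(\log m)/2$, which is what Corollary~\ref{cor:rep-ksum-protocol} records and what Corollary~\ref{cor:labksum} needs. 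In short, the missing idea is Viola's $O(\log k)$-message protocol for \textsc{Sum-Zero} modulo a prime; without it the bookkeeping you flagged as ``the real work'' cannot be made to close.
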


We remark here that the \ksum problem is known to be $\W[1]$-hard~\cite{DowneyF95,AbboudLW14} and our proof of Theorem~\ref{thm:domksum} indeed yields an alternative proof of $\W[1]$-hardness of approximating $k$-\domset (Theorem~\ref{thm:domW1}). Nevertheless, we provide a different self-contained $\W[1]$-hardness reduction directly from \clique since the ideas there are also useful for our ETH-hardness result (Theorem~\ref{thm:domETH}).



The summary of our results and those from previous works are shown in Table~\ref{tab:summary}.

%

\begin{table}[h]
\begin{center}
\resizebox{\textwidth}{!}{
\begin{tabular}{c}
{\bf Summary of Previous Works and The Results in This Paper}\\
\begin{tabular}{c| c | c | c}
Complexity Assumption & Inapproximability Ratio & Running Time Lower Bound&Reference\\
\hline
 \multirow{ 2}{*}{\Wone} 
  & Any constant 
  & $T(k) \cdot \poly(n)$
  & \cite{ChenL16}\\
  & $(\log n)^{1/\poly(k)}$ 
  & $T(k) \cdot \poly(n)$
  & This paper\\ \hline
  \multirow{ 2}{*}{\ETH}
  & $(\log k)^{1/4+\epsilon}$ 
  & $T(k) \cdot n^{o(\sqrt{k})}$
  & \cite{ChenL16}\\
  & $(\log n)^{1/\poly(k)}$ 
  & $T(k) \cdot n^{o(k)}$
  & This paper\\\hline
  \gapETH
  & $(\log n)^{1/\poly(k)}$
  & $T(k) \cdot n^{o(k)}$
  & \cite{ChalermsookCKLMNT17}\\\hline
 \multirow{ 2}{*}{\SETH}
  & Exact 
  & $O(n^{k-\epsilon})$
  & \cite{PatrascuW10}\\
  & $(\log n)^{1/\poly(k,e(\varepsilon))}$ 
  & $O(n^{k-\epsilon})$
  & This paper\\\hline
  \ksum Hypothesis
  & $(\log n)^{1/\poly(k)}$ 
  & $O(n^{\lceil k/2 \rceil - \epsilon})$
  & This paper\\
\end{tabular}
\end{tabular}}
\end{center}
\caption{Summary of our and previous results on $k$-\domset. We only show those whose inapproximability ratios are at least some constant greater than one (i.e., we exclude additive inapproximability results). Here $e:\mathbb{R}^+ \to \N$ is some   function, $T: \N \to \N$ can be any computable function and $\varepsilon$ can be any positive constant.
The \gapETH has been bypassed in this paper, and prior to this paper, the \ksum Hypothesis had never been used in proving inapproximability of $k$-\domset.}
\label{tab:summary}
\end{table}

\subsection{Comparison to Previous Works}
\label{sec:intro:previous-works}


In addition to the lower bounds previously mentioned, the parameterized inapproximability of $k$-\domset has also been investigated in several other works~\cite{DowneyFMR08,ChitnisHK13,HajiaghayiKK13,BonnetE0P15}. Specifically, Downey et al.~\cite{DowneyFMR08} showed that obtaining an \emph{additive} constant approximation for $k$-\domset is $\W[2]$-hard. On the other hand, in~\cite{HajiaghayiKK13,ChitnisHK13}, the authors ruled out $(\log k)^{1 + \varepsilon}$-approximation in time $\exp(\exp((\log k)^{1 + \varepsilon}))\cdot\poly(n)$ for some fixed constant $\varepsilon>0$ by assuming \ETH and the {\em projection game conjecture} proposed in \cite{Moshkovitz15}. Further, Bonnet~et~al.~\cite{BonnetE0P15} ruled out $(1+\varepsilon)$-FPT-approximation, for some fixed constant $\varepsilon>0$, assuming \gapETH\footnote{The authors assume the same statement as \gapETH (albeit, with imperfect completeness) but have an additional assertion that it is implied by \ETH (see Hypothesis~1 in \cite{BonnetE0P15}). It is not hard to see that their assumption can be replaced by \gapETH.}. We note that, with the exception of $\W[2]$-hardness results~\cite{DowneyF95,DowneyFMR08}, our results subsume all other aforementioned lower bounds regarding $k$-\domset, both for approximation~\cite{ChitnisHK13,HajiaghayiKK13,BonnetE0P15,ChenL16,ChalermsookCKLMNT17} and exact algorithms~\cite{ChenHKX06,PatrascuW10}. 

%
%
%

While our techniques will be discussed at a much greater length in the next section (in particular we compare our technique with \cite{ARW17} in Section~\ref{subsec:framework}), we note that our general approach is to first show inapproximability of a parameterized variant of the {\em Label Cover} problem called \maxcover and then reduce \maxcover to $k$-\domset. The first step employs the connection between communication complexity and inapproximability of \maxcover, whereas the second step follows directly from the reduction in \cite{ChalermsookCKLMNT17}
(which is in turn based on \cite{Feige98}). While \maxcover was not explicitly defined until~\cite{ChalermsookCKLMNT17}, its connection to $k$-\domset had been implicitly used both in the work of P\u{a}tra\cb{s}cu and Williams~\cite{PatrascuW10} and that of Chen and Lin~\cite{ChenL16}.

From this perspective, the main difference between our work and~\cite{PatrascuW10,ChenL16,ChalermsookCKLMNT17} is the source of hardness for \maxcover. Recall that P\u{a}tra\cb{s}cu and Williams~\cite{PatrascuW10} ruled out only exact algorithms; in this case, a relatively simple reduction gave hardness for the exact version of \maxcover. On the other hand, both Chalermsook et al.~\cite{ChalermsookCKLMNT17} and Chen and Lin~\cite{ChenL16} ruled out approximation algorithms, meaning that they needed gaps in their hardness results for \maxcover. Chalermsook et al. obtained their initial gap from their assumption (\gapETH), after which they amplified it to arrive at an arbitrarily large gap for \maxcover. On the other hand,~\cite{ChenL16} derived their gap from the hardness of approximating Maximum $k$-Intersection shown in Lin's earlier breakthrough work~\cite{Lin15}. Lin's proof~\cite{Lin15} made use of certain combinatorial objects called \emph{threshold graphs} to prove inapproximability of Maximum $k$-Intersection. Unfortunately, this construction was not very flexible, in the sense that it produced \maxcover instances with parameters that were not sufficient for proving total-FPT-inapproximability for $k$-\domset. Moreover, his technique (i.e., threshold graphs) was limited to reductions from $k$-\clique and was unable to provide a tight running time lower bound under \ETH. By resorting to the connection between \maxcover and communication complexity, we can generate \maxcover instances with wider ranges of parameters from much more general assumptions, allowing us to overcome the aforementioned barriers.

\paragraph*{Organization.} 
In the next section, we give an overview of our lower level contributions; for readers interested in the general ideas without too much notational overhead, this section covers most of the main ideas from our paper through a proof sketch of our $\W[1]$-hardness of approximation result (Theorem~\ref{thm:domW1}). After that, in Section~\ref{sec:prelim}, we define additional notations and preliminaries needed to formalize our proofs. Section~\ref{sec:typeA} provides a definition for Product Space Problems (\PSP) and rewrites the hypotheses in these terms. Next, in Section~\ref{sec:typeA:cc2maxcover}, we establish a general theorem converting communication protocols to a reduction from \PSP to \maxcover. Sections~\ref{sec:SETH},~\ref{sec:multeq} and~\ref{sec:ksum} provide communication protocols for our problems of interest: Set Disjointness, \multeq and \sumz. Finally, in Section~\ref{sec:conclusion}, we conclude with a few open questions and research directions.

\section{Connecting Communication Complexity and Parameterized Inapproximability: An Overview}
\label{sec:comm-connection}

This section is devoted to presenting our connection between communication complexity and parameterized inapproximability (which is one of our main contributions as discussed in the introduction) 
and serves as an overview for all the proofs in this paper.
As mentioned previously, our discovery of this connection is inspired by the work of Abboud~et~al.~\cite{ARW17} who showed the connection between the communication protocols and  hardness of approximation for problems in \P. More specifically, they showed how a Merlin-Arthur protocol for {\em Set Disjointness} with certain parameters implies the \SETH-hardness of approximation for a problem called \emph{\PCP-Vectors} and used it as  the starting point to prove inapproximability of other problems in \P.
We extend this idea by identifying a communication problem associated with each of the complexity assumptions (\Wone, \ETH, \SETH and \ksum Hypothesis) and then prove a generic theorem that translates communication protocols for these problems to conditional hardness of approximation for a parameterized variant of the {\em Label Cover} problem called \maxcover~\cite{ChalermsookCKLMNT17}.
Since the hardness of \maxcover is known to imply the hardness of $k$-\domset~\cite{ChalermsookCKLMNT17} (see Section~\ref{sec:dom-set}), we have arrived at our inapproximability results for $k$-\domset.
As the latter part is not the contribution of this paper, we will focus on explaining the connection between communication complexity and the hardness of approximating \maxcover. We start by defining \maxcover:

\begin{definition} The input for \maxcover is a \emph{label cover instance} $\Gamma$, which consists of a bipartite graph $G=(U, W; E)$ such that $U$ is partitioned into $U_1\cup \cdots \cup U_q$ and $W$ is partitioned into $W_1\cup \cdots \cup W_h$. We sometimes refer to $U_{i}$'s and $W_j$'s as left and right {\em super-nodes} of $\Gamma$, respectively.

A solution to \maxcover is called a {\em labeling},
which is a subset of vertices $S \subseteq W$ formed by picking a vertex $w_{j}$ from each $W_j$ (i.e., $|S\cap W_j| = 1$ for all $j \in [h]$). We say that a labeling $S$ {\em covers} a left super-node $U_{i}$ if there exists a vertex $u_{i} \in U_{i}$ such that $u_i$ is a neighbor of every vertex in $S$.
The goal in \maxcover is to find a labeling that covers the maximum fraction of left super-nodes.
\end{definition}

For concreteness, we focus on the $\W[1]$-hardness proof (Theorem~\ref{thm:domW1}); at the end of this subsection, we will discuss how this fits into a larger framework that encapsulates other hypotheses as well.

For the purpose of our current discussion, it suffices to think of \maxcover as being parameterized by $h$, the number of right super-nodes; from this viewpoint, we would like to show that it is $\W[1]$-hard to approximate \maxcover to within $(\log n)^{1/\poly(h)}$ factor. For simplicity, we shall be somewhat imprecise in our overview below, all proofs will be formalized later in the paper.

We reduce from the $k$-\clique problem, which is well-known to be $\W[1]$-hard~\cite{DowneyF95}. The input to $k$-\clique is an integer $k$ and a graph which we will call $G' = (V', E')$ to avoid confusion with the label cover graph. The goal is to determine whether $G'$ contains a clique of size $k$. Recall that, to prove the desired $\W[1]$-hardness, it suffices to provide an \emph{FPT-reduction} from any $k$-\clique instance $(G',k)$ to approximate \maxcover instance $G = (U, W; E)$; this is an FPT-time reduction such that the new parameter $h$ is bounded by a function of the original parameter $k$. Furthermore, since we want a hardness of approximation result for the latter, we will also show that, when $(G',k)$ is a YES instance of $k$-\clique, there is a labeling of $G$ that covers all the left super-nodes. On the other hand, when $(G',k)$ is a NO instance of $k$-\clique, we wish to show that every labeling of $G$ will cover at most $1/(\log n)^{1/\poly(h)}$ fraction of the left super-nodes. If we had such a reduction, then we would have arrived at the total FPT-inapproximability of \maxcover under \Wone. But, how would we come up with such a reduction? We will do this by devising a specific kind of protocol for a communication problem!

\subsection{A Communication Problem for $k$-\clique}
\label{sec:protocol-for-clique}

 The communication problem related to $k$-\clique we consider is a multi-party problem where there are $h = \binom{k}{2}$ players, each associated with a two-element subset $\{i, j\}$ of $[k]$. The players cannot communicate with each other. Rather, there is a referee that they can send messages to. Each player $\{i, j\}$ is given two vertices $u^{\{i, j\}}_i$ and $u^{\{i, j\}}_j$ such that $\{u^{\{i, j\}}_i, u^{\{i, j\}}_j\}$ forms an edge in $G'$. The vertices $u^{\{i, j\}}_i$ and $u^{\{i, j\}}_j$ are allegedly the $i$-th and $j$-th vertices of a clique respectively. The goal is to determine whether there is indeed a $k$-clique in $G'$ such that, for every $\{i, j\} \subseteq [k]$, $u^{\{i, j\}}_i$ and $u^{\{i, j\}}_j$ are the $i$-th and $j$-th vertices of the clique.

The communication protocol that we are looking for is a one-round protocol with public randomness and by the end of which the referee is the one who outputs the answer. Specifically, the protocol proceeds as follows. First, the players and the referee together toss $r$ random coins. Then, each player sends an $\ell$-bit message to the referee. Finally, the referee decides, based on the messages received and the randomness, either to accept or reject. The protocol is said to have perfect completeness and soundness $s$ if (1) when there is a desired clique, the referee always accepts and (2) when there is no such clique, the referee accepts with probability at most $s$. The model described here is referred to in the literature  as the multi-party {\em Simultaneous Message Passing} (\SMP) model \cite{Yao79,BGKL03,FOZ16}. 
We refer to a protocol in the \SMP model as an \SMP protocol.

\paragraph{From Communication Protocol to \maxcover.} Before providing a protocol for the previously described communication problem, let us describe how to turn the protocol into a label cover instance $G = (U = U_1 \cup \cdots \cup U_q, W = W_1 \cup \cdots \cup W_h; E)$.
\begin{itemize}
\item Let $h = \binom{k}{2}$. Again, we associate elements in $[h]$ with two-element subsets of $[k]$. Each right super-node $W_{\{i, j\}}$ represents Player $\{i, j\}$. Each vertex in $W_{\{i, j\}}$ represents a possible input to the player, i.e., we have one vertex $a_{\{u, v\}}\in W_{\{i, j\}}$ for each edge $\{u, v\} \in E'$ in the graph $G'$. Assume w.l.o.g. that $i < j$ and $u < v$. This vertex $a_{\{u, v\}}$ represents player $\{i, j\}$ receiving $u$ and $v$ as the alleged $i$-th and $j$-th vertices of the clique respectively. 
\item Let $q = 2^r$. We associate each element in $[q]$ with an $r$-bit string. For each $\gamma \in \{0, 1\}^r$, the left super-node $U_{\gamma}$ contains one node corresponding to each \emph{accepting configuration} on randomness $\gamma$; that is, for each $h$-tuple of $\ell$-bit strings $(m_{\{1, 2\}}, \dots, m_{\{k - 1, k\}}) \in (\{0, 1\}^\ell)^h$, there is a node $(m_{\{1, 2\}}, \dots, m_{\{k - 1, k\}})$ in $U_\gamma$ iff the referee on randomness $\gamma$ and  message $m_{\{1, 2\}}, \dots, m_{\{k - 1, k\}}$ from all the players accepts.
\item The edges in $E$ are defined as follows. Recall that each node $a$ in a right super-node $W_{\{i, j\}}$ corresponds to an input that each player receives in the protocol. For each $\gamma \in \{0, 1\}^r$, suppose that the message produced on this randomness by the $\{i, j\}$-th player on the input corresponding to $a$ is $m^{a, \gamma}$. We connect $a$ to every accepting configuration on randomness $\gamma$ that agrees with the message $m^{a, \gamma}$. More specifically, for every $\gamma \in \{0, 1\}^r$, $a$ is connected to every vertex $(m_{\{1, 2\}}, \dots, m_{\{k - 1, k\}}) \in U_\gamma$ iff $m_{\{i, j\}} = m^{a, \gamma}$.
\end{itemize}
Consider any labeling $S \subseteq W$. It is not hard to see that, if we run the protocol where the $\{i, j\}$-th player is given the edge corresponding to the unique element in $S \cap W_{\{i, j\}}$ as an input, then the referee accepts a random string $\gamma \in \{0, 1\}^r$ if and only if the left super-node $U_{\gamma}$ is covered by the labeling $S$. In other words, the fraction of the left super-nodes covered by $S$ is exactly equal to the acceptance probability of the protocol. This means that if $(G',k)$ is a YES-instance of $k$-\clique, then we can select $S$ corresponding to the edges of a $k$-clique and every left super-node will be covered. On the other hand, if $(G',k)$ is a NO-instance of $k$-\clique, there is no subset $S$ that corresponds to a valid $k$-clique, meaning that every labeling $S$ covers at most $s$ fraction of the edges. Hence, we have indeed arrived at hardness of approximation for \maxcover. Before we move on to describe the protocol, let us note that the running time of the reduction is $\poly(2^{r + \ell h}, |E'|)$, which also gives an upper bound on the number of vertices in the label cover graph $G$.

\paragraph{\SMP Protocol.} Observe first that the trivial protocol, one where every player sends the whole input to the referee, does not suffice for us; this is because the message length $\ell$ is $\Omega(\log n)$, meaning that the running time of the reduction is $n^{\Omega(h)} = n^{\Omega(k^2)}$ which is not FPT time.

Nevertheless, there still is a simple protocol that does the job. Notice that the input vertices $u^{\{i, j\}}_i$ and $u^{\{i, j\}}_j$ given to Player $\{i, j\}$ are already promised to form an edge. Hence, the only thing the referee needs to check is whether each alleged vertex of the clique sent to different players are the same; namely, he only needs to verify that, for every $i \in [k]$, we have $u^{\{i, 1\}}_i = u^{\{i, 2\}}_i = \cdots = u^{\{i, i - 1\}}_i = u^{\{i, i + 1\}}_i = \cdots = u^{\{i, k\}}_i$. In other words, he only needs to check equalities for each of the $k$ unknowns. The equality problem and its variants are extensively studied in communication complexity (see, e.g., \cite{Yao79,KushilevitzNisan96-CCBook}). In our case, the protocol can be easily derived using any error-correcting code. Specifically, for an outcome $\gamma \in \{0, 1\}^r$ of the random coin tosses, every Player $\{i, j\}$ encodes each of his input ($u^{\{i, j\}}_i$ and $u^{\{i, j\}}_j$) using a binary error-correcting code and sends only the $\gamma$-th bit of each encoded word to the referee. The referee then checks whether, for every $i \in [k]$, the received $\gamma$-th bits of the encodings of $u^{\{i, 1\}}_i, u^{\{i, 2\}}_i, \dots, u^{\{i, k\}}_i$ are equal. 

In the protocol described above, the message length $\ell$ is now only two bits (one bit per vertex), the randomness $r$ used is logarithmic in the block length of the code, the soundness $s$ is one minus the relative distance of the code. If we use a binary code with constant rate and constant relative distance (aka \emph{good codes}), then $r$ will be simply $O(\log \log n)$; this means that the running time of the reduction is $\poly(n, \exp(O(k^2)))$ as desired. While the soundness in this case will just be some constant less than one, we can amplify the soundness by repeating the protocol multiple times independently; this increases the randomness and message length, but it is still not hard to see that, with the right number of repetitions, all parameters lie within the desired ranges. With this, we have completed our sketch for the proof of $\W[1]$-hardness of approximating \maxcover.

\subsection{A Framework for Parameterized Hardness of Approximation}\label{subsec:framework}

The $\W[1]$-hardness proof sketched above is an example of a much more general connection between communication protocol and the hardness of approximating \maxcover. To gain  insight on this, consider any function $f: X_1 \times \cdots \times X_k \to \{0, 1\}$. This function naturally induces both a communication problem and a computational problem. The communication problem for $f$ is one where there are $k$ players, each player $i$ receives an input $a_i \in X_i$, and they together wish to compute $f(a_1, \dots, a_k)$. The computational problem for $f$, which we call the \emph{Product Space Problem}\footnote{The naming comes from the product structure of the domain of $f$.} of $f$ (abbreviated as $\PSP(f)$), is one where the input consists of subsets $A_1 \subseteq X_1, \dots, A_k \subseteq X_k$ and the goal is to determine whether there exists $(a_1, \dots, a_k) \in A_1 \times \cdots \times A_k$ such that $f(a_1, \dots, a_k) = 1$. The sketch reduction to \maxcover above in fact not only applies to the specific communication problem of $k$-\clique: the analogous construction is a generic way to translate any \SMP protocol for the communication problem of any function $f$ to a reduction from $\PSP(f)$ to \maxcover. To phrase it somewhat differently, if we have an \SMP protocol for $f$ with certain parameters and $\PSP(f)$ is hard to solve, then \maxcover is hard to approximate.

This brings us to the framework we use in this paper. It consists of only two steps. First, we rewrite the problem in the hypotheses as Product Space Problems of some family of functions $\cF$. This gives us the conditional hardness for solving $\PSP(\cF)$. Second, we devise an \SMP protocol for every function $f \in \cF$. Given the connection outlined in the previous paragraph, this automatically yields the parameterized hardness of approximating \maxcover. 

To gain more intuition into the framework, note that in the case of $k$-\clique above, the function $f \in \cF$ we consider is just the function $f: X_{\{1, 2\}} \times \cdots \times X_{\{k, k - 1\}}$ where each of $X_{\{1, 2\}}, \cdots, X_{\{k, k - 1\}}$ is a copy of the edge set. The function $f$ ``checks'' that the edges selected form a clique, i.e., that, for every $i \in [k]$, the alleged $i$-th vertex of the clique specified in the $\{i, j\}$-coordinate is equal for every $j \ne i$. Since this is a generalization of the equality function, we call such a class of functions ``multi-equality''. It turns out that 3-\SAT can also be written as \PSP of multi-equality; each $X_i$ contains assignments to $1/k$ fraction of the clauses and the function $f$ checks that each variable is assigned the same value across all $X_i$'s they appear in. A protocol essentially the same as the one given above also works in this setting and immediately gives our \ETH-hardness result (Theorem~\ref{thm:domETH})! Unfortunately, this does not suffice for our \SETH-hardness. In that case, the function used is the $k$-way set disjointness; this interpretation of \SETH is well-known (see, e.g., \cite{W05}) and is also used in~\cite{ARW17}. Lastly, the \ksum problem is already written in \PSP form where $f$ is just the \sumz function that checks whether the sum of $k$ specified numbers equals to zero.




Let us note that in the actual proof, we have to be more careful about the parameters than in the above sketch. Specifically, the reduction from \maxcover to $k$-\domset from~\cite{ChalermsookCKLMNT17} incurs a blow-up in size that is exponential in terms of the number of vertices in each left super-node (i.e., exponential in $|U_\gamma|$). This means that we need $|U_1|, \dots, |U_r| = o(\log n)$. In the context of communication protocol, this translates to keeping the message length $O(\log \log n)$ where $O(\cdot)$ hides a sufficiently small constant. Nevertheless, for the protocol for $k$-\clique reduction (and more generally for multi-equality), this does not pose a problem for us since the message length before repetitions is $O(1)$ bits; we can make sure that we apply only $O(\log \log n)$ repetitions to the protocol. 

For \sumz, known protocols either violate the above requirement on message length~\cite{Nisan94} or use too much randomness~\cite{Viola15}. Nonetheless, a simple observation allows us to compose Nisan's protocol~\cite{Nisan94} and Viola's protocol~\cite{Viola15} and arrive at a protocol with the best of both parameters. This new protocol may also be of independent interest beyond the scope of our work.

On the other hand, well-known communication complexity lower bounds on set disjointness~\cite{R92,KS92,BJKS04} rule out the existence of protocols with parameters we wish to have! \cite{ARW17} also ran into this issue; in our language, they got around this problem by allowing the referee to receive an advice. This will also be the route we take. Even with advice, however, devising a protocol with the desired parameters is a technically challenging issue. In particular, until very recently, no protocol for set disjointness with $O(\log \log n)$ message length (and $o(n)$ advice length) was known. This was overcome in the work of Rubinstein~\cite{R17} who used algebraic geometric codes to give such a protocol for the two-player case. We extend his protocol in a straightforward manner to the $k$-player case; this extension was also suggested to us by Rubinstein~\cite{Rub17com}.

A diagram illustrating the overview of our approach can be found in Figure~\ref{fig:overview}.

\begin{figure}[h!]
    \centering
    \resizebox{\textwidth}{!}{\begin{tikzpicture}
\node (wone) [draw=red!80!black,thick] at (-0.7, 0) {$\W[1] \ne \FPT$};
\node (eth) [draw=red!80!black,thick] at (-0.7, -1.5) {ETH};
\node (seth) [draw=red!80!black,thick] at (-0.7, -3) {SETH};
\node (ksum) [draw=red!80!black,thick] at (-0.7, -4.5) {$k$-Sum Hyp.};
\node (multeq) [draw=red!80!black,thick] at (5, -0.75) {$\PSP(\EQ)$};
\node (disj) [draw=red!80!black,thick] at (5, -3) {$\PSP(\SD)$};
\node (sumzero) [draw=red!80!black,thick] at (5, -4.5) {$\PSP(\SZ)$};
\node (maxcov) [draw=red!80!black,thick] at (10, -2.25) {\maxcover};
\node (domset) [draw=red!80!black,thick] at (14.5, -2.25) {$k$-\domset};
\draw [-{Latex[length=1.5mm, width=1.5mm]}] (wone) -- (multeq);
\draw [-{Latex[length=1.5mm, width=1.5mm]}] (eth) -- (multeq);
\draw [-{Latex[length=1.5mm, width=1.5mm]}] (seth) -- (disj);
\draw [-{Latex[length=2mm, width=1.5mm]}] (multeq) -- (maxcov);
\draw [-{Latex[length=1.5mm, width=1.5mm]}] (disj) -- (maxcov);
\draw [-{Latex[length=1.5mm, width=1.5mm]}] (sumzero) -- (maxcov);
\draw [-{Latex[length=1.5mm, width=1.5mm]}] (ksum) -- (sumzero); 
\draw [-{Latex[length=1.5mm, width=1.5mm]}] (maxcov) -- (domset);
\node [above, align=center] at (1.9, 1.4) {\footnotesize Rewriting Hypotheses};
\node [above, align=center] at (1.9, 1.05) {\footnotesize  in PSP form};
\node [above, align=center] at (1.9, 0.6) {\footnotesize (Section~\ref{sec:typeA})};
\node[above, align=center] at (8, 1.4) {\footnotesize Connection between SMP};
\node[above, align=center] at (8, 1.05) {\footnotesize Protocol and \maxcover};
\node[above, align=center] at (8, 0.6) {\footnotesize (Section~\ref{sec:typeA:cc2maxcover})};
\node[above, align=center] at (8.1, -1.2) {\scriptsize Protocol for \EQ};
\node[above, align=center] at (8.1, -1.5) {\scriptsize (Section~\ref{sec:multeq})};
\node[above, align=center] at (7.4, -2.4) {\scriptsize Protocol for \SD};
\node[above, align=center] at (7.4, -2.7) {\scriptsize (Section~\ref{sec:SETH})};
\node[above, align=center] at (8.4, -4) {\scriptsize Protocol for \SZ};
\node[above, align=center] at (8.4, -4.3) {\scriptsize (Section~\ref{sec:ksum})};
\node [above, align=center] at (12.25, -1.2) {\footnotesize Reduction from};
\node [above, align=center] at (12.25, -1.65) {\footnotesize \cite{ChalermsookCKLMNT17}};
\node [above, align=center] at (12.25, -2.1) {\footnotesize (Appendix~\ref{app:maxcov-domset})};
\end{tikzpicture}}
    \caption{Overview of Our Framework. The first step is to reformulate each hypothesis in terms of hardness of a \PSP problem, which is done in Section~\ref{sec:typeA}. Using the connection between SMP protocols and \maxcover outlined earlier (and formalized in Section~\ref{sec:typeA:cc2maxcover}), our task is now to devise SMP protocols with certain parameters for the corresponding communication problems; these are taken care of in Sections~\ref{sec:SETH},~\ref{sec:multeq} and~\ref{sec:ksum}. For completeness, the final reduction from \maxcover to $k$-\domset which was shown in \cite{ChalermsookCKLMNT17} is included in Appendix~\ref{app:maxcov-domset}.} \label{fig:overview}
\end{figure}
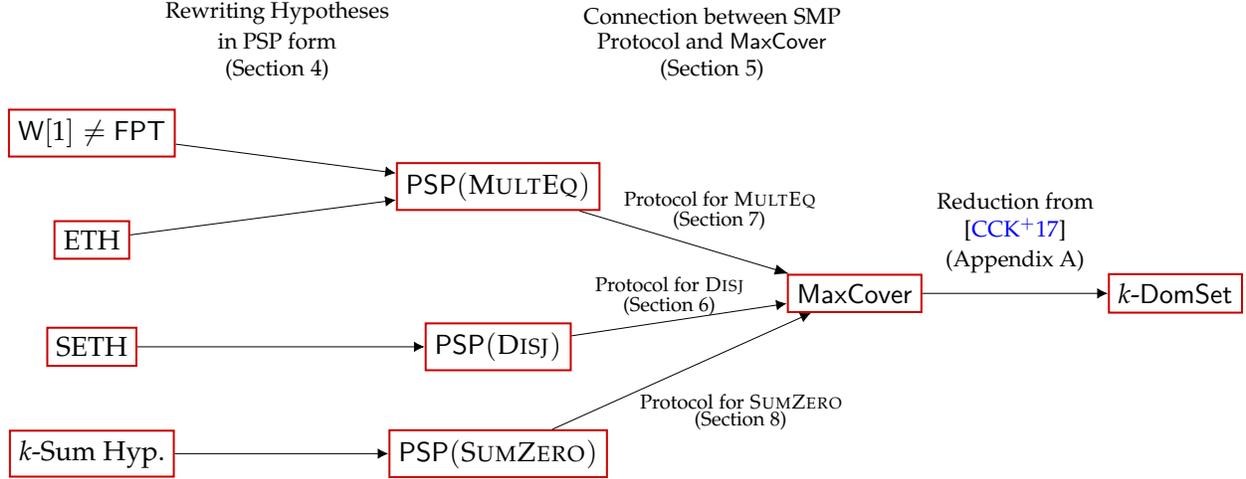

\paragraph{Comparison to Abboud et al.}
The main result of Abboud et al.\ \cite{ARW17} is their \SETH-hardness of the gap label cover problem which they refer to as the \PCP-Vectors problem. In fact, \PCP-Vectors is equivalent to \maxcover when $h=2$ (i.e., the number of right super nodes is two). However, formulating the label cover problem as \maxcover instead of \PCP-Vectors is beneficial for us, as our goal it to reduce to graph problems. 

In their work, they merge the roles of the referee and the first player as it is necessary to achieve the goal of proving hardness of approximation for important problems in \P\ (which are usually defined on one or two sets of vectors). However, by doing this the details of the proof become a little convoluted. On the contrary, our framework with the \SMP model is arguably a cleaner framework to work with and it works well for our goal of proving hardness of approximation for parameterized problems.

Finally, we note that our observation that the hardness of approximating \maxcover can be obtained from any arbitrary hypothesis as long as there is an underlying product structure (as formalized via \PSP{}s) is a new contribution of this paper.

\section{Preliminaries and Backgrounds}
\label{sec:prelim}

	We use standard graph terminology.
Let $G$ be any graph.
The vertex and edge sets of $G$ are denoted by $V(G)$ and $E(G)$, respectively.
We say that a subset of vertices $S\subseteq V(G)$ {\em dominates} $G$ 
if every vertex $v\in S\setminus V(G)$ has a neighbor in $S$,
and we call $S$ a {\em dominating set} of $G$.
A {\em $k$-dominating set} of $G$ is a dominating set of $G$ with cardinality $k$.
The {\em domination number}, denoted by (\domset(G)) of $G$ is the size of the smallest dominating set of $G$.
A {\em clique} $H$ in $G$ is a complete subgraph of $G$,
and we say that $H$ is a {\em $k$-clique} if $H$ contains $k$ vertices.
The {\em clique number} of $G$ is the size of the largest clique in $G$. 
Sometime we abuse notation and call a subset of vertices $S\subseteq V(G)$, a clique,
if $S$ induces a complete subgraph in $G$.

\subsection{Problem Definitions}

Below are the list of problems considered in this paper.

\begin{itemize}
\item {\bf Dominating Set.}
In the {\em $k$-Dominating Set} problem ($k$-\domset), 
we are given a graph $G$, and the goal is to decide whether $G$ has a dominating set of size $k$. 
In the minimization version, called {\em Minimum Dominating Set} (\domset, for short), the goal is 
to find a dominating set in $G$ of minimum size.

\item {\bf Clique.}
In the {\em $k$-Clique} problem ($k$-\clique), 
we are given a graph $G$, and the goal is to decide whether $G$ has a clique of size $k$.
In the maximization version, called {\em Maximum Clique} (\clique, for short), the goal is 
to find a clique in $G$ of maximum size.

\item {\bf $k$-SAT.}
In the $k$-SAT problem ($k$-\SAT), we are given a CNF formula $\Phi$ with \textbf{at most} $k$ literals in a clause and the goal is to decide whether $\Phi$ is satisfiable.

\item {\bf $k$-Sum.}
In \ksum, we are given $k$ subsets $S_1, \dots, S_k \subseteq ([-M, M] \cap \Z)$ of integers between $-M$ and $M$ (inclusive), and the goal is to determine whether there exist $x_1 \in S_1, \dots, x_k \in S_k$ such that $x_1 + \cdots + x_k = 0$. That is, we wish to pick one integer from each subset so that they sum to zero.

\end{itemize}

In addition to the above problems, 
we devote the next section to define and discuss a variant of the {\em label cover} problem, namely 
\maxcover.

\subsection{\maxcover\ -- A Variant of Label Cover}
\label{sec:prelim:label-cover}

We now define a variant of the label cover problem called \maxcover, which was introduced by Chalermsook~et~al.~\cite{ChalermsookCKLMNT17} to capture the parameterized inapproximability of $k$-\clique and $k$-\domset. The approximation hardness of \maxcover will be the basis of our hardness results.

The input of \maxcover is a \emph{label cover instance}; a label cover instance $\Gamma$ consists of a bipartite graph $G=(U, W; E)$ such that $U$ is partitioned into $U=U_1\cup \cdots \cup U_\ell$ and $W$ is partitioned into $W=W_1\cup \cdots \cup W_h$. We sometimes refer to $U_{i}$'s and $W_j$'s as {\em left super-nodes} and {\em right super-nodes} of $\Gamma$, respectively. Another parameter we will be interested in is the maximum size of left super nodes, i.e., $\underset{i \in [\ell]}{\max}\ |U_i|$; we refer to this quantity as the \emph{left alphabet size} of the instance.

A solution to \maxcover is called a {\em labeling},
which is a subset of vertices $S \subseteq W$ formed by picking a vertex $w_{j}$ from each $W_j$ (i.e., $|S\cap W_j| = 1$ for all $j \in [h]$). We say that a labeling $S$ {\em covers} a left super-node $U_{i}$ if there exists a vertex $u_{i} \in U_{i}$ such that $u_i$ is a neighbor of every vertex in $S$.
The goal in \maxcover is to find a labeling that covers the maximum fraction of left super-nodes. We abuse the notation \maxcover and also use it for the optimum as well, i.e.,
\begin{align*}
\maxcover(\Gamma) = \frac{1}{\ell} \left(\max_{\text{labeling } S} |\{i \in [\ell] \mid U_i \text{ is covered by } S\}|\right). 
\end{align*}

The above terminologies for \maxcover are from~\cite{ChalermsookCKLMNT17}. Note, however, that our definitions are phrased somewhat different than theirs; in our definitions, the input graphs are the so-called {\em label-extended graphs} whereas in their definitions, the input graphs are the {\em constraint graphs}. Nevertheless, it is not hard to see that the two versions are in fact equivalent. Another difference is that we use $\maxcover$ to denote the \emph{fraction} of left super-nodes covered by the optimal labeling whereas \cite{ChalermsookCKLMNT17} uses the notion for the \emph{number} of covered left super-nodes. The former notation is somewhat more convenient for us as the value is between zero and one.

\subsection{Inapproximability of \domset from \maxcover}
\label{sec:dom-set}

The relation between \maxcover and \domset has been observed in literature. 
The {\em $k$-prover system} introduced by Feige in \cite{Feige98} can be casted as 
a special case of \maxcover with {\em projection property}, and
it has been shown that this proof system can be transformed into an instance of \domset.
We note, however, that the optimal value of the \domset instance produced 
by Feige's $k$-prover system has size dependent on 
the number of left super-nodes rather than $k$, the number of right super-nodes.
Recently, Chalermsook~et~al.~\cite{ChalermsookCKLMNT17} observed that even without
the projection property, the relation between \maxcover and \domset still holds,
and the value of the optimal solution can be reduced to $k$. This is stated formally below. 

\begin{thm}[Reduction from \maxcover to \domset~\cite{ChalermsookCKLMNT17}] \label{lem:red-maxcov-domset}
There is an algorithm that, given a \maxcover instance 
$\Gamma = (U = \bigcup_{j=1}^{q} U_j, W = \bigcup_{i=1}^{k} W_i, E)$,
outputs a $k$-\domset instance 
$G$ such that
\begin{itemize}
\item If $\maxcover(\Gamma) = 1$, then $\domset(G) = k$.
\item If $\maxcover(\Gamma) \leq \varepsilon$, then $\domset(G) \geq (1/\varepsilon)^{1/k} \cdot k$.
\item $|V(G)| = |W| + \sum_{j \in [q]} k^{|U_j|}$.
\item The reduction runs in time $O\left(|W| \left(\sum_{j \in [q]} k^{|U_j|}\right)\right)$.
\end{itemize}
\end{thm}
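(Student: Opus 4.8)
The plan is to construct the graph $G$ explicitly from the label cover instance $\Gamma$, following the reduction of Feige~\cite{Feige98} as adapted in~\cite{ChalermsookCKLMNT17}. The vertex set of $G$ will have two parts. The first part is a copy of $W$, the right-side vertices of $\Gamma$; these represent the choices a labeling makes. The second part, for each left super-node $U_j$, consists of $k^{|U_j|}$ vertices, one for every function $\phi_j \colon [k] \to U_j$ assigning to each right super-node index $i \in [k]$ a vertex of $U_j$ (intuitively, $\phi_j$ is a ``guess'' of which vertex of $U_j$ will be the common neighbor witnessing that $U_j$ is covered, broken up by which right super-node contributed it). These ``cloud'' vertices form the domination-forcing gadget. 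Edges are added so that: (i) within $W$, no edges are needed between vertices of the same $W_i$ but we do need $W_i$ to behave like a clique-ish block --- more precisely, following the standard construction, we add edges so that any dominating set of size $k$ must pick exactly one vertex from each $W_i$; (ii) a vertex $w \in W_i$ is connected to a cloud vertex $\phi_j$ of $U_j$ iff $\phi_j(i)$ is \emph{not} a neighbor of $w$ in $\Gamma$ (so that $\phi_j$ is dominated by the chosen label of $W_i$ precisely when the labeling disagrees with the guess $\phi_j$ on coordinate $i$). This flip in the adjacency is the crux: a cloud vertex $\phi_j$ is left undominated by a labeling $S = (w_1,\dots,w_k)$ exactly when $\phi_j(i)$ is a common neighbor of all $w_i$ in $\Gamma$ simultaneously, i.e., exactly when $\phi_j$ certifies that $U_j$ is covered by $S$ via a single vertex split across coordinates. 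To make every cloud of $U_j$ contribute, we moreover force that at least one cloud vertex per $U_j$ must be dominated ``for free'' when $U_j$ is covered but none when it is not; this is handled by also putting edges among the cloud vertices of each $U_j$ or by the standard trick of making the labeling-selection gadget slightly larger. I would set up the construction so that the size and runtime bounds $|V(G)| = |W| + \sum_j k^{|U_j|}$ and time $O(|W|\sum_j k^{|U_j|})$ are immediate from counting vertices and edges.

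Next I would prove the completeness direction: if $\maxcover(\Gamma) = 1$, then $\domset(G) = k$. Take a labeling $S$ covering every left super-node. The set $S$ (viewed as $k$ vertices of $G$, one per $W_i$) is claimed to be dominating. Every vertex of $W$ is dominated because of the selection gadget among the $W_i$'s. Every cloud vertex $\phi_j$ of $U_j$: since $U_j$ is covered by $S$, there is some $u \in U_j$ adjacent in $\Gamma$ to all of $w_1,\dots,w_k$. If $\phi_j(i) \ne u$ for some $i$ then by the adjacency flip $\phi_j$ is adjacent to $w_i$ and hence dominated; if $\phi_j \equiv u$ (the constant function), then I handle this one remaining cloud vertex per $U_j$ via the intra-cloud edges in the gadget. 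Also $\domset(G) \ge k$ because the selection gadget forces any dominating set to use at least one vertex in each of the $k$ blocks $W_i$ (or their associated gadget), so no dominating set can be smaller than $k$. This gives $\domset(G) = k$.

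The soundness direction is where the main work lies: if $\maxcover(\Gamma) \le \varepsilon$, then $\domset(G) \ge (1/\varepsilon)^{1/k}\cdot k$. Here is the idea. Let $D$ be any dominating set of $G$; write $t = |D|/k$ so I want to show $t \ge (1/\varepsilon)^{1/k}$, equivalently $t^k \ge 1/\varepsilon$. By the structure of the selection gadget, $D$ must contain at least one vertex from each block associated with $W_i$; say it contains $a_i \ge 1$ vertices ``representing labels'' for coordinate $i$, with $\sum_i a_i \le |D| = tk$. Each choice of one representative from each coordinate gives a labeling; there are $\prod_i a_i$ such labelings, and by AM-GM $\prod_i a_i \le (\frac1k\sum a_i)^k \le t^k$. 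The key claim is that \emph{every} left super-node $U_j$ must be covered by \emph{at least one} of these $\prod_i a_i$ labelings --- because otherwise the constant cloud vertex $\phi_j \equiv u$ for the ``right'' $u$, or more carefully some cloud vertex of $U_j$, would be undominated. Running this through: if some $U_j$ is covered by none of the candidate labelings, one exhibits an undominated cloud vertex, contradicting that $D$ dominates $G$. Hence the family of $\le t^k$ labelings collectively covers all left super-nodes, so at least one of them covers a $\ge 1/t^k$ fraction, giving $\maxcover(\Gamma) \ge 1/t^k$, i.e. $1/t^k \le \varepsilon$, i.e. $t \ge (1/\varepsilon)^{1/k}$, as desired.

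The main obstacle I anticipate is getting the gadget among cloud vertices (and the precise form of the label-selection gadget among the $W_i$'s) exactly right so that all three things hold simultaneously: (a) any size-$k$ dominating set in the completeness case corresponds to a covering labeling and the ``constant'' cloud vertices are still dominated; (b) in the soundness case the count of vertices $D$ places per coordinate genuinely upper-bounds (via AM-GM) the number of candidate labelings, with no loss from vertices of $D$ placed ``inside'' clouds; (c) the vertex count stays exactly $|W| + \sum_j k^{|U_j|}$ with no extra additive gadget vertices, which is what forces the gadget to be realized purely by edges among existing vertices rather than by adding new ones. Balancing these is the delicate part; everything else (the two size/runtime bullets, completeness, and the AM-GM step in soundness) is routine once the construction is pinned down.
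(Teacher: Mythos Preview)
Your proposal has the right overall shape (two-part vertex set, AM--GM in soundness) but the cloud gadget is set up backwards, and this is a genuine gap rather than a detail to be filled in.

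First, a counting sanity check: functions $\phi_j \colon [k] \to U_j$ number $|U_j|^k$, not $k^{|U_j|}$. The statement demands exactly $k^{|U_j|}$ cloud vertices per $U_j$, so already the domain and codomain are swapped. The paper's construction (the hypercube partition system) takes cloud vertices to be functions $z \colon U_j \to [k]$.

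Second, and more seriously, your flipped edge rule does not do what you claim. Under your rule, $\phi_j$ is undominated by a labeling $S=(w_1,\dots,w_k)$ iff for every $i$ the vertex $\phi_j(i)$ is a neighbor of $w_i$. You then assert this means ``$\phi_j(i)$ is a common neighbor of all $w_i$,'' but that is false: it only says $\phi_j(i)$ is a neighbor of $w_i$ for that particular $i$. So an undominated $\phi_j$ does not witness that $U_j$ is covered, and in completeness there can be exponentially many undominated cloud vertices (any $\phi_j$ with $\phi_j(i)\in N(w_i)\cap U_j$ for each $i$), not just the constant ones. No amount of intra-cloud edges fixes this without destroying soundness.

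The correct gadget (as in the paper) uses the opposite direction with a \emph{positive} edge rule: a cloud vertex is $z\colon U_j\to [k]$, and $w\in W_i$ is joined to $z$ iff there exists $u\in U_j$ with $z(u)=i$ and $\{u,w\}\in E$. Then $z$ is undominated by $(w_1,\dots,w_k)$ iff for every $u\in U_j$ the index $z(u)$ is such that $w_{z(u)}$ is \emph{not} adjacent to $u$. If some $u$ is adjacent to all $w_i$, no value of $z(u)$ works, so every $z$ is dominated (completeness, with no special handling of ``constant'' vertices). Conversely, if no single $u$ is a common neighbor, choose for each $u$ an $i(u)$ with $w_{i(u)}\not\sim u$ and set $z(u)=i(u)$; this $z$ is undominated (soundness). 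With this gadget in place, your AM--GM argument is exactly the paper's Proposition relating \maxcover to \minlabel, and the proof goes through with no extra gadget vertices, matching the stated size.
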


For the sake of self-containedness, we provide the proof of Theorem~\ref{lem:red-maxcov-domset} in Appendix~\ref{app:maxcov-domset}.

\subsection{The Hypotheses}

We now list and discuss the computational complexity hypotheses on which our results are based.

\subsubsection{\Wone Hypothesis}

The first hypothesis is \Wone, which is one of the most popular hypotheses used in the area of parameterized complexity since many fundamental parameterized problems turn out to be $\W[1]$-hard. For the interest of space, we do not give the full definition of \W-hierarchy; we refer the readers to standard textbook in the field (e.g.~\cite{DowneyF13,CyganFKLMPPS15}) for the definition and discussions regarding the hierarchy. Rather, since it is well-know that $k$-\clique is $\W[1]$-complete, we will use a more convenient formulation of \Wone, which simply states that $k$-\clique is not in \FPT:

\begin{hypothesis}[\Wone Hypothesis]
For any computable function $T: \N \to \N$, no algorithm can solve $k$-\clique in $T(k)\cdot\poly(n)$ time where $n$ denotes the number of vertices in the input graph.
\end{hypothesis}

\subsubsection{Exponential Time Hypothesis and Strong Exponential Time Hypothesis}

Our second hypothesis is the Exponential Time Hypothesis (\ETH), which can be stated as follows.  

\begin{hypothesis}[Exponential Time Hypothesis (\ETH)~\cite{IP01,IPZ01,Tovey84}] \label{hyp:eth}
There exists $\delta > 0$ such that no algorithm can solve 3-\SAT in $O(2^{\delta n})$ time where $n$ is the number of variables. Moreover, this holds even when restricted to formulae in which each variable appears in at most three clauses.
\end{hypothesis}

Note that the original version of the hypothesis from~\cite{IP01} does not enforce the requirement that each variable appears in at most three clauses. To arrive at the above formulation, we first apply the Sparsification Lemma of~\cite{IPZ01}, which implies that we can assume without loss of generality that the number of clauses $m$ is $O(n)$. We then apply Tovey's reduction~\cite{Tovey84} which produces a 3-\CNF instance with at most $3m + n = O(n)$ variables and every variable occurs in at most three clauses. This means that the bounded occurrence restriction is also without loss of generality.


We will also use a stronger hypothesis called the Strong Exponential Time Hypothesis (\SETH):

\begin{hypothesis}[Strong Exponential Time Hypothesis (\SETH)~\cite{IP01,IPZ01}] \label{hyp:seth}
For every $\varepsilon > 0$, there exists $k = k(\varepsilon) \in \N$ such that no algorithm can solve $k$-\SAT in $O(2^{(1 - \varepsilon)n})$ time where $n$ is the number of variables. Moreover, this holds even when the number of clauses $m$ is at most $c(\varepsilon) \cdot n$ where $c(\varepsilon)$ denotes a constant that depends only on $\varepsilon$.
\end{hypothesis}

Again, we note that, in the original form~\cite{IP01}, the bound on the number of clauses is not enforced. However, the Sparsification Lemma~\cite{IPZ01} allows us to do so without loss of generality. 

\subsubsection{\ksum Hypothesis}

Our final hypothesis is the \ksum Hypothesis, which can be stated as follows. 

\begin{hypothesis}[\ksum Hypothesis~\cite{AL13}] \label{h:ksum} \label{hyp:ksumh}
For every integer $k \geq 3$ and every $\varepsilon > 0$, no $O(n^{\lceil k/2 \rceil - \varepsilon})$ time algorithm can solve \ksum where $n$ denotes the total number of input integers, i.e., $n = |S_1| + \cdots + |S_k|$. Moreover, this holds even when $M = n^{2k}$.
\end{hypothesis}

The above hypothesis is a natural extension of the more well-known \threesum Hypothesis~\cite{GajentaanO95,P10}, which states that \threesum cannot be solved in $O(n^{2 - \varepsilon})$ time for any $\varepsilon > 0$. Moreover, the \ksum Hypothesis is closely related to the question of whether \subsetsum can be solved in $O(2^{(1/2 - \varepsilon)n})$ time; if the answer to this question is negative, then \ksum cannot be solved in $O(n^{k/2 - \varepsilon})$ time for every $\varepsilon > 0, k \in \mathbb{N}$. We remark that, if one is only willing to assume this latter weaker lower bound of $O(n^{k/2 - \varepsilon})$ instead of $O(n^{\lceil k/2\rceil - \varepsilon})$, our reduction would give an $O(n^{k/2 - \varepsilon})$ running time lower bound for approximating $k$-\domset. Finally, we note that the assumption that $M = n^{2k}$ can be made without loss of generality since there is a randomized reduction from the general version of the problem (where $M$ is, say, $2^n$) to this version of the problem and this reduction can be derandomized under a certain circuit complexity assumption~\cite{AbboudLW14}.

\subsection{Error-Correcting Codes}

An error correcting code $C$ over alphabet $\Sigma$ is a function $C: \Sigma^m \to \Sigma^d$ where $m$ and $d$ are positive integers which are referred to as the {\em message length} and {\em block length} of $C$ respectively. Intuitively, the function $C$ encodes an original message of length $m$ to an encoded message of length $d$.
Since we will also deal with communication protocols, for which ``message length'' has another meaning, we will sometimes refer to the message length of codes as \emph{code message length} whenever there is an ambiguity.
The {\em rate} of a code $\rho(C)$ is defined as the ratio between its message length and its block length, i.e., $\rho(C) = \nicefrac{m}{d}$. 
The {\em relative distance} of a code, denoted by $\delta(C)$, is defined as $\underset{x \ne y \in \Sigma^m}{\min} \delta(C(x), C(y))$ where $\delta(C(x), C(y))$ is the {\em relative Hamming distance} between $C(x)$ and $C(y)$, i.e., the fraction of coordinates on which $C(x)$ and $C(y)$ disagree.
%
%

\subsubsection{Good Codes}

In the construction of our communication protocol in Section~\ref{sec:multeq}, 
we require our codes to have constant rate and constant relative distance (referred to as \emph{good codes}).
%
It is not hard to see that random codes, ones where each codeword $C(x)$ is randomly selected from $\Sigma^d$ independently from each other, satisfy these properties.
For binary codes (i.e., $|\Sigma| = 2$), one can explicitly construct such codes using 
expander graphs (so called {\em Expander Codes}~\cite{SipserS96}); alternatively {\em Justesen Code}~\cite{Justesen72} also have the same property (see Appendix E.1.2.5 from \cite{G09} for an excellent exposition). 

\begin{fact} \label{lem:good-code}
For some absolute constant $\delta, \rho > 0$, 
there exists a family of codes $\mathcal{C}:=\{C_m:\{0,1\}^m\to \{0,1\}^{d(m)}\}_{m\in\N}$  such that for every $m\in\N$ the rate of $C_m$ is at least $\rho$ and the relative distance of $C_m$ is at least $\delta$. Moreover, any codeword of $C_m$ can be computed in time $\poly(m)$.
\end{fact}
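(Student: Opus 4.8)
The plan is to construct $C_m$ by code concatenation, which reduces the task to (i) an explicit outer code over a large alphabet and (ii) a good binary inner code on very short messages. For the outer code I would use Reed--Solomon: pick a power of two $q = \Theta(m/\log m)$ large enough that a rate-$1/2$ Reed--Solomon code over $\mathbb{F}_q$ has message length at least $m$ bits, identify each symbol of $\mathbb{F}_q$ with a string in $\{0,1\}^{\log q}$, and let $\mathrm{RS}$ evaluate polynomials of degree $< q/2$ at all $q$ points of $\mathbb{F}_q$. This gives a $\poly(m)$-time-computable encoding $\{0,1\}^{m}\to(\{0,1\}^{\log q})^{q}$ (padding the message to length $(q/2)\log q$) with rate $\geq 1/2$ and relative distance $\geq 1/2$.

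For the inner code I would want a binary code $C_{\mathrm{in}}:\{0,1\}^{\log q}\to\{0,1\}^{O(\log q)}$ of constant rate and constant relative distance, applied coordinatewise to the outer codeword. Such a code exists by the Gilbert--Varshamov bound, but the only obvious way to get one explicitly is by brute-force search over generator matrices, which costs $2^{\Theta(\log^2 q)}$ time and is thus only quasi-polynomial in $m$ --- \emph{this is the one genuine obstacle.} To obtain a true $\poly(m)$-time construction I would follow Justesen and concatenate position by position with the Wozencraft ensemble $\{C^{(\alpha)}:\{0,1\}^{\log q}\to\{0,1\}^{2\log q}\}_{\alpha\in\mathbb{F}_q}$, where $C^{(\alpha)}(x)=(x,\alpha x)$ with multiplication in $\mathbb{F}_q$ and $C^{(\alpha)}$ is applied to the $\alpha$-th coordinate of the Reed--Solomon codeword. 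A standard counting argument over $\alpha$ shows that all but an $o(1)$ fraction of the codes $C^{(\alpha)}$ have relative distance at least some absolute constant $\delta_0>0$.

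It then remains to check the three required properties. The block length is $q\cdot 2\log q = \Theta(m)$, so the rate is a constant $\rho>0$. For the distance: a nonzero outer codeword is supported on $\geq q/2$ coordinates, of which at most $o(q)$ use a ``bad'' inner code, so at least $q/4$ of them contribute $\geq 2\delta_0\log q$ ones each, yielding relative distance $\geq \delta_0/4 =: \delta$. For the running time: encoding is one Reed--Solomon evaluation followed by $q$ inner encodings of size $O(\log q)$, all doable in $\poly(m)$ time. I would finish by noting that, as the references in the statement indicate, one can instead simply cite the expander codes of Sipser--Spielman or Justesen's codes as a black box; the concatenation argument above is included only for self-containedness.
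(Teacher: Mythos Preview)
Your proposal is correct: the concatenation of rate-$1/2$ Reed--Solomon with the Wozencraft ensemble is precisely the Justesen construction, and your parameter and distance calculations are right. The paper itself gives no proof of this fact --- it simply states it as a known result and points to Justesen codes~\cite{Justesen72} and Sipser--Spielman expander codes~\cite{SipserS96} (with~\cite{G09} for exposition) as black boxes. So your sketch is consistent with, and strictly more detailed than, the paper's treatment; your closing remark that one could simply cite these references is exactly what the paper does.
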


\subsubsection{Algebraic Geometric Codes}
In the construction of our communication protocol in Section~\ref{sec:SETH}, 
we require our codes to have some special algebraic properties which have been shown to be present in algebraic geometric codes \cite{GS96}. First, we will introduce a couple of additional definitions. 

\begin{definition}[Systematicity]
Given $s \in \mathbb N$, a code $C:\Sigma^m\to \Sigma^{d}$ is {\em $s$-systematic} if there exists a size-$s$ subset of $[d]$, which for convenience we identify with $[s]$, such that for every $x \in \Sigma^{s}$ there exists $w \in \Sigma^m$ in which $x = C(w)\mid_{[s]}$. 
\end{definition}

\begin{definition}[Degree-$t$ Closure] \label{def:poly}
Let $\Sigma$ be a finite field. 
Given two codes $C:\Sigma^m\to \Sigma^{d},C':\Sigma^{m'}\to \Sigma^{d}$ and positive integer $t$, we say that $C'$ is a degree-$t$ closure of $C$ if, for every $w_1,\ldots ,w_r\in \Sigma^m$ and $P\in \mathbb F[X_1,\ldots ,X_r]$ of total degree at most $t$, it holds that $\omega:=P(C(w_1),\ldots ,C(w_r))$ is in the range of $C'$, where $\omega\in \Sigma^d$ is defined coordinate-wise by the equation $\omega_i:=P(C(w_1)_i,\ldots ,C(w_r)_i)$.
\end{definition}

Below we provide a self-contained statement of the result we rely on in Section~\ref{sec:SETH}; it follows from Theorem~7 of \cite{SAKSD01}, which
gives an efficient construction of the algebraic geometric codes based on \cite{GS96}'s explicit towers of function fields.

\begin{theorem}[\cite{GS96,SAKSD01}] \label{thm:ag-code}
There are two polynomial functions $\hat{r},\hat{q}:\mathbb{N}\to\mathbb{N}$ such that for every $k \in \mathbb{N}$ and any prime $q>\hat{q}(k)$, there are two code families $\mathcal{A} = \{A_n\}_{n \in \mathbb{N}},\ \mathcal{B} = \{B_n\}_{n \in \mathbb{N}}$ such that the following holds for all $n\in\mathbb N$,
\begin{itemize}
\item  $A_n$ and $B_n$ are $n$-systematic code with alphabet $\mathbb{F}_{q^2}$,
\item  $A_n$ and $B_n$ have block length less than $n\cdot \hat{r}(k)$.
\item  $B_n$ has relative distance $ \geq \nicefrac{1}{2}$,
\item  $B_n$ is a degree-$k$ closure of $A_n$, and,
\item  Any codeword in $A_n$ or $B_n$ can be computed in poly($n$) time .
\end{itemize}
\end{theorem}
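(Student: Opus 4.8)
The plan is to instantiate the algebraic geometric codes arising from the explicit Garc\'ia--Stichtenoth tower of function fields, in the polynomial-time constructible form established by Shum et al.~\cite{SAKSD01}. Fix $k$ and let $q$ be any prime exceeding a threshold $\hat q(k)$ (a linear function of $k$, pinned down below). Working over $\mathbb{F}_{q^2}$, their construction produces, in time polynomial in the block length, an explicit function field $F$ at a suitable level of the tower, a list of distinct rational places $P_1,\dots,P_N$ of $F$, and one further rational place $P_\infty\notin\{P_1,\dots,P_N\}$, such that the genus $g$ of $F$ obeys a Drinfeld--Vl\u{a}du\cb{t}-type bound $g\le N/(q-1)+O(1)$; moreover, for any $\alpha\in\mathbb N$ a basis of the Riemann--Roch space $L(\alpha P_\infty)$ and the evaluation of any of its elements at the $P_i$ are computable in time $\mathrm{poly}(N,\alpha)$.

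Given the code message length $n$, I would set the pole order to $\alpha:=n+2g-1$ and choose $N$ to be the least block length available at a level of the tower for which $N\ge 2k\alpha$. Since passing to the next level of the tower multiplies $N$ by only a bounded factor, and since $\alpha=n+O(N/(q-1))$, the inequality $N\ge 2k(n+2g-1)$ can be met precisely once $q-1$ is a sufficiently large multiple of $k$ (this is the origin of $\hat q$), and it then forces $N\le \hat r(k)\cdot n$ for a fixed polynomial $\hat r$. Let $A_n$ be the code whose message map sends $w$ to $\bigl(g_w(P_1),\dots,g_w(P_N)\bigr)$, where $g_w\in L(\alpha P_\infty)$ is the $\mathbb{F}_{q^2}$-linear combination of a fixed basis with coefficient vector $w$; let $B_n$ be defined identically but with $L(\alpha P_\infty)$ replaced by $L(k\alpha P_\infty)$. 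After a fixed permutation of coordinates we may assume the systematic positions are $\{1,\dots,n\}$.

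The five claimed properties then follow by routine Riemann--Roch bookkeeping. Since $\alpha>2g-2$ and $\alpha-n=2g-1>2g-2$, Riemann--Roch gives $\dim L(\alpha P_\infty)=\alpha-g+1$ and $\dim L(\alpha P_\infty-\sum_{i=1}^{n}P_i)=\alpha-n-g+1$, so the evaluation map at $P_1,\dots,P_n$ is surjective onto $\mathbb{F}_{q^2}^{\,n}$ (its kernel being $L(\alpha P_\infty-\sum_{i=1}^{n}P_i)$); hence $A_n$ is $n$-systematic, and the identical argument applied to $L(k\alpha P_\infty)$ (whose degree is only larger) shows $B_n$ is $n$-systematic on the same coordinates. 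The block-length bound is the defining property of $N$. Any nonzero function in $L(k\alpha P_\infty)$ vanishes at at most $k\alpha$ of the $P_i$, so $B_n$ has minimum relative distance $\ge 1-k\alpha/N\ge 1/2$. For the degree-$k$ closure property, if $w_1,\dots,w_r$ are $A_n$-messages and $P\in\mathbb{F}_{q^2}[X_1,\dots,X_r]$ has total degree at most $k$, then $P(g_{w_1},\dots,g_{w_r})$ is an $\mathbb{F}_{q^2}$-linear combination of products of at most $k$ elements of $L(\alpha P_\infty)$ together with the constant function, and hence lies in $L(k\alpha P_\infty)$; evaluating at each $P_i$ shows that the coordinate-wise value $P(A_n(w_1),\dots,A_n(w_r))$ is a codeword of $B_n$. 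Finally, poly-time computability of codewords is inherited directly from~\cite{SAKSD01}, as $q$ depends only on $k$ and $N,\alpha=\mathrm{poly}(n)$.

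The only genuinely delicate point is the parameter balancing in the second paragraph: the divisor degree $\alpha$ must be pushed above $n+2g-1$ so that both $A_n$ and $B_n$ are $n$-systematic, yet $k\alpha$ must stay below $N/2$ so that $B_n$ remains $\tfrac12$-far; these two constraints are compatible only when $q$ is taken large relative to $k$, which is exactly what dictates the polynomial $\hat q(k)$, and one must additionally check that the discreteness of the block lengths attainable along the tower is absorbed into the polynomial $\hat r(k)$. Everything else is a direct unwinding of standard facts about algebraic geometric codes and of the efficiency guarantees already proved in~\cite{GS96,SAKSD01}.
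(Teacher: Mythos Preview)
The paper does not prove this theorem at all: it is stated as a black-box result from the literature, with the remark that ``it follows from Theorem~7 of \cite{SAKSD01}, which gives an efficient construction of the algebraic geometric codes based on \cite{GS96}'s explicit towers of function fields.'' There is nothing in the paper to compare your argument against.

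That said, your sketch is a correct and standard way to extract the statement from those references. Taking $A_n$ and $B_n$ to be the evaluation codes of $L(\alpha P_\infty)$ and $L(k\alpha P_\infty)$ on a common set of rational places, with $\alpha=n+2g-1$ and $N\ge 2k\alpha$, gives exactly the right Riemann--Roch dimension count for $n$-systematicity, the zero-count bound for relative distance $\ge 1/2$ of $B_n$, and the pole-order additivity for the degree-$k$ closure. The one place to be slightly more careful than you were is the circularity in ``choose $N$ to be the least block length available at a level of the tower for which $N\ge 2k\alpha$'': since $\alpha$ depends on $g$, which in turn depends on the chosen level, you should instead phrase the choice directly in terms of $n$ (e.g., pick the first level with $N\ge c(k)\,n$ for an appropriate constant $c(k)$) and then verify a posteriori that $N\ge 2k(n+2g-1)$ holds once $q-1$ exceeds a fixed multiple of $k$. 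This is exactly the ``delicate point'' you flagged, and it goes through.
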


We remark here that variants of the above theorem have previously found applications in the construction of special kinds of PCPs \cite{BCGRS17,BKKMS16}. In these works, the theorems are also stated in a language similar to Theorem~\ref{thm:ag-code} above.

\section{Product Space Problems and Popular Hypotheses}
\label{sec:typeA}

In this section, we define a class of computational problems called Product Space Problems (\PSP). As the name suggests, a problem in this class is defined on a class of functions whose domain is a $k$-ary Cartesian Product, i.e., $f: X_1 \times \cdots \times X_k \to \{0, 1\}$. The input of the problem are subsets\footnote{Each $A_i$ will be explicitly given as part of the input through the elements that it contains.} $A_1 \subseteq X_1, \dots, A_k \subseteq X_k$, and the goal is to determine whether there exists $(a_1, \dots, a_k) \in A_1 \times \cdots \times A_k$ such that $f(a_1, \dots, a_k) = 1$. The size of the problem is determined by $\underset{i \in [k]}{\max}\ |A_i|$. A formal definition of \PSP can be found below.

\begin{definition}[{\bf Product Space Problem}]
Let $m: \N \times \N \to \N$ be any function and $\mathcal{F}:=\{f_{N, k}:\{0,1\}^{m(N, k)\times k}\to\{0,1\}\}_{N,k \in \N}$ be a family of Boolean functions indexed by $N$ and $k$. For each $k \in \N$, the \emph{product space problem} \PSP{}$(k,\mathcal{F})$ of order $N$ is defined as follows: given $k$ subsets $A_1,\ldots ,A_k$ of $\{0,1\}^{m(N, k)}$ each of cardinality at most $N$ as input, determine if there exists $(a_1,\ldots ,a_k)\in A_1\times \cdots \times A_k$ such that $f_{N, k}(a_1,\ldots ,a_k)=1$. We use the following shorthand $\PSP(k,\mathcal{F},N)$ to describe $\PSP(k,\mathcal{F})$ of order $N$.
\end{definition}

In all the \PSP{}s considered in this paper, the input length $m(N, k)$ is always at most $\poly(k)\cdot \log N$ and $f_{N, k}$ is always computable in time $\poly(m(N, k))$. In such a case, there is a trivial $N^{k + o_k(1)}$-time algorithm to solve $\PSP(k, \cF, N)$: enumerating all $(a_1, \dots, a_k) \in A_1 \times \cdots \times A_k$ and check whether $f_{N, k}(a_1, \dots, a_k) = 1$. The rest of this section is devoted to rephrasing the hypotheses (\SETH, \ETH, \Wone and the \ksum Hypothesis) in terms of lower bounds for \PSP{}s. The function families $\cF$'s, and running time lower bounds will depend on the hypotheses. For example, \SETH will corresponds to set disjointness whereas \Wone will correspond to a generalization of equality called ``multi-equality''; the former will give an $N^{k(1 - o(1))}$ running time lower bound whereas the latter only rules out \FPT time algorithms.

We would like to remark that the class of problems called `locally-characterizable sets' introduced by Goldreich and Rothblum \cite{GR18} are closely related to \PSP{}s. Elaborating,  we may interpret locally-characterizable sets as the negation of \PSP{}s, i.e., for any $\PSP(k,\mathcal{F},N)$, we may define the corresponding locally-characterizable set $\mathcal{S}$ as follows:
$$\mathcal{S}=\{(A_1,\ldots ,A_k)\mid \text{for all }(a_1,\ldots ,a_k)\in A_1\times \cdots \times A_k \text{ we have } f_{N, k}(a_1,\ldots ,a_k)=0\}.$$  

\begin{sloppypar}Finally, we note that the class of problems called `counting local patterns' introduced in
\cite{GR18a}
are the counting counterpart of \PSP{}s, i.e., for any instance $(A_1,\ldots ,A_k)$ of $\PSP(k,\mathcal{F},N)$, we may define the corresponding counting local pattern solution to be the number of distinct $(a_1,\ldots,a_k)\in A_1\times~\cdots~\times~A_k$ such that $f_{N,k}(a_1,\ldots ,a_k)=1$.
\end{sloppypar}
\subsection{\ksum Hypothesis}

To familiarize the readers with our notations, we will start with the \ksum Hypothesis, which is readily in the \PSP form. Namely, the functions in the family are the \sumz functions that checks if the sum of $k$ integers is  zero:

\begin{definition}[\sumz]
Let $k, m \in \N$. $\SZ_{m, k}: (\{0, 1\}^m)^k \rightarrow \{0, 1\}$ is defined by
\begin{align*}
\SZ_{m, k}(x_1, \dots, x_k) = \begin{cases}
1\text{ if } \underset{i\in [k]}{\sum}x_i=0,\\
0\text{ otherwise},
\end{cases}
\end{align*}
where we think of each $x_i$ as a number in $[-2^{m-1},2^{m-1}-1]$, and the addition is over $\Z$.
\end{definition}

The function family $\cF^{\SZ}$ can now be defined as follows.

\begin{definition}[Sum-Zero Function Family]\label{def:ksumFam}
Let $m: \N \times \N \to \N$ be a function defined by $m(N, k) = 2k\lceil \log N\rceil$. $\cF^{\SZ}$ is defined as $\{\SZ_{m(N, k), k}\}_{N \in \N, k \in N}$.
\end{definition}

The following proposition is immediate from the definition of the \ksum Hypothesis.

\begin{proposition}\label{prop:ksumPSP}
Assuming the \ksum Hypothesis, for every integer $k \geq 3$ and every $\varepsilon > 0$, no $O(N^{\lceil k/2 \rceil - \varepsilon})$-time algorithm can solve $\PSP(k, \cF^{\SZ}, N)$ for all $N \in \N$.
\end{proposition}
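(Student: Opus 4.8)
The plan is to verify directly that the $\ksum$ problem, as defined in the preliminaries (with the normalization $M = n^{2k}$ from Hypothesis~\ref{hyp:ksumh}), is literally an instance of $\PSP(k, \cF^{\SZ}, N)$ up to a renaming of the input encoding, and then push the running-time lower bound through this identification. So the first step is the encoding: given a $\ksum$ instance with subsets $S_1, \dots, S_k \subseteq ([-M, M] \cap \Z)$ where $M = n^{2k}$ and $n = |S_1| + \cdots + |S_k|$, set $N := n$ (or, more carefully, $N$ to be the smallest power-related bound so that each $|S_i| \le N$, which we may take to be $n$ itself). Each integer in $[-M, M]$ can be written as an element of $[-2^{m-1}, 2^{m-1} - 1]$ provided $2^{m-1} > M = N^{2k}$, i.e. $m \ge 2k \lceil \log N \rceil + 1$; since the definition sets $m(N,k) = 2k\lceil \log N \rceil$, one either observes that $M = n^{2k}$ comfortably fits (perhaps after an inessential adjustment of constants, or noting that the exact range $[-2^{m-1}, 2^{m-1}-1]$ covered by $m = 2k\lceil\log N\rceil$ bits has size $2^{2k\lceil\log N\rceil} \ge N^{2k} \ge M$), so each $S_i$ maps to a subset $A_i \subseteq \{0,1\}^{m(N,k)}$ of cardinality $|S_i| \le n = N$. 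Under this map, $\SZ_{m(N,k),k}(a_1, \dots, a_k) = 1$ iff the corresponding integers sum to zero over $\Z$, so the $\ksum$ instance is a YES-instance exactly when the $\PSP$ instance is.

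The second step is the contrapositive: suppose for contradiction that for some integer $k \ge 3$ and some $\varepsilon > 0$ there is an $O(N^{\lceil k/2 \rceil - \varepsilon})$-time algorithm solving $\PSP(k, \cF^{\SZ}, N)$ for all $N$. Given any $\ksum$ instance on $n$ input integers with $M = n^{2k}$, run the encoding above in time $\poly(n, \log M) = \poly(n)$ to produce a $\PSP(k, \cF^{\SZ}, N)$ instance with $N = n$, then run the hypothesized algorithm; total time $\poly(n) + O(n^{\lceil k/2 \rceil - \varepsilon}) = O(n^{\lceil k/2 \rceil - \varepsilon'})$ for a slightly smaller $\varepsilon' > 0$ (absorbing the polynomial overhead, which is fine as long as $\lceil k/2\rceil - \varepsilon > $ the degree of that polynomial — alternatively just note the overhead is $O(n \log n)$ for building the bit-strings, dominated by $n^{\lceil k/2\rceil - \varepsilon}$ since $k \ge 3$). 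This contradicts the $\ksum$ Hypothesis, which asserts no such algorithm exists even under the restriction $M = n^{2k}$. Hence no $O(N^{\lceil k/2 \rceil - \varepsilon})$-time algorithm for $\PSP(k, \cF^{\SZ}, N)$ exists, which is exactly the claim of Proposition~\ref{prop:ksumPSP}.

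I expect no serious obstacle here — this proposition is deliberately the "warm-up" to familiarize the reader with the $\PSP$ formalism, and the author even flags it as "immediate from the definition." The only points requiring mild care are bookkeeping ones: (i) matching the bit-length $m(N,k) = 2k\lceil \log N\rceil$ against the magnitude bound $M = n^{2k}$ so the integers genuinely fit into the allotted bit-strings (this is why the Hypothesis was stated with the explicit $M = n^{2k}$ normalization in the first place), and (ii) confirming that the polynomial-time encoding overhead does not erode the exponent $\lceil k/2\rceil - \varepsilon$, which holds since that exponent is at least $2 - \varepsilon > 1$ for all $k \ge 3$ while the encoding costs only near-linear time. Thus the "hard part," such as it is, is purely notational alignment between the two problem formulations, and there is no genuine mathematical difficulty.
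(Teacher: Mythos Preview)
Your proposal is correct and matches the paper's intent: the paper states that the proposition ``is immediate from the definition of the \ksum Hypothesis'' and gives no further argument, so your unpacking of the definitions (identifying a \ksum instance with $M = n^{2k}$ directly as a $\PSP(k,\cF^{\SZ},N)$ instance with $N = n$, then running the contrapositive) is exactly the intended reasoning. The minor bit-length bookkeeping you flag is indeed the only thing to check and is handled by the choice $m(N,k) = 2k\lceil \log N\rceil$.
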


\subsection{Set Disjointness and \SETH}
We recall the $k$-way disjointness function,
which has been studied extensively in literature 
(see, e.g., \cite{LS09} and references therein). 

\begin{defn}[Set Disjointness]
Let $k, m \in \N$. $\SD_{m, k}: (\{0, 1\}^m)^k \rightarrow \{0, 1\}$ is defined by
\begin{align*}
\SD_{m, k}(x_1, \dots, x_k) = \neg \left(\bigvee_{i \in [m]} \left(\bigwedge_{j \in [k]} (x_j)_i\right)\right).
\end{align*}
\end{defn}

The function family $\cF^{\SD}_c$ can now be defined as follows.

\begin{definition}[Set Disjointness Function Family]\label{def:SETHFam}
For every $c\in\mathbb N$, let $m_c: \N \times \N \to \N$ be a function defined by $m_c(N, k) = c\lceil k\log N\rceil$. $\cF^{\SD}_c$ is defined as $\{\SD_{m_c(N, k), k}\}_{N \in \N, k \in N}$.
\end{definition}

We have the following proposition which follows easily from the definition of \SETH and its well-known connection to the Orthogonal Vectors Hypothesis \cite{W05}.

\begin{proposition}\label{prop:SETHPSP}
Let $k\in\mathbb N$ such that $k>1$.
Assuming \SETH, for every $\varepsilon>0$ there exists $c:=c_\varepsilon\in \mathbb{N}$ such that no $O(N^{ k (1- \varepsilon)})$-time algorithm can solve $\PSP(k, \cF^{\SD}_{c}, N)$ for all $N \in \N$.
\end{proposition}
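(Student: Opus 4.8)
The plan is to derive Proposition~\ref{prop:SETHPSP} from \SETH by going through the standard reduction from \SAT to the Orthogonal Vectors problem \cite{W05}, suitably adapted to the $k$-ary \PSP formulation. First I would recall the statement of \SETH (Hypothesis~\ref{hyp:seth}): for every $\varepsilon > 0$ there is $k' = k'(\varepsilon)$ and a constant $c(\varepsilon)$ such that no $O(2^{(1-\varepsilon)N})$-time algorithm solves $k'$-\SAT on instances with $N$ variables and at most $c(\varepsilon) \cdot N$ clauses. Given an integer $k > 1$ (the arity we want for \PSP) and a target $\varepsilon > 0$, I would pick $\varepsilon' = \varepsilon'(\varepsilon, k)$ small enough (concretely something like $\varepsilon' < \varepsilon / k$, to be tuned at the end) and invoke \SETH with parameter $\varepsilon'$ to get the hard $k'(\varepsilon')$-\SAT instance $\Phi$ on $N_0$ variables and $m_0 = O(N_0)$ clauses.

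The core step is the \textbf{split-and-list} construction. Partition the $N_0$ variables of $\Phi$ into $k$ blocks of size roughly $N_0/k$ each. For block $i \in [k]$, enumerate all $2^{N_0/k}$ partial assignments to the variables in that block; each such partial assignment $\sigma$ is encoded as a Boolean vector $v_\sigma \in \{0,1\}^{m_0}$ whose $j$-th coordinate is $1$ iff $\sigma$ does \emph{not} already satisfy clause $C_j$ (i.e., none of the literals of $C_j$ set by $\sigma$ is true). Let $A_i$ be the set of these $N := 2^{N_0/k}$ vectors. Then $\Phi$ is satisfiable iff there is a choice $(v_1, \dots, v_k) \in A_1 \times \cdots \times A_k$ with $\bigwedge_{i} (v_i)_j = 0$ for every coordinate $j \in [m_0]$ — because a coordinate $j$ being $1$ in all of $v_1, \dots, v_k$ means clause $C_j$ is left unsatisfied by the combined assignment. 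This is exactly the condition $\SD_{m_0, k}(v_1, \dots, v_k) = 1$. To fit the function family $\cF^{\SD}_c$, whose message length is prescribed as $m_c(N,k) = c\lceil k \log N\rceil = c \lceil N_0 \rceil$ (up to ceilings), I would choose the constant $c = c_\varepsilon$ large enough that $m_0 = O(N_0) \le c\lceil k \log N\rceil$, and simply pad the vectors with $0$'s in the extra coordinates (padding with zeros does not affect the $\SD$ predicate). Thus each instance of $\Phi$ maps to an instance of $\PSP(k, \cF^{\SD}_c, N)$ of order $N = 2^{N_0/k}$, and the reduction clearly runs in time $O(k \cdot N \cdot m_0) = N^{1+o(1)}$.

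Finally I would do the \textbf{running-time bookkeeping}: if some algorithm solved $\PSP(k, \cF^{\SD}_c, N)$ in time $O(N^{k(1-\varepsilon)})$, then composing with the reduction would solve $k'(\varepsilon')$-\SAT in time $N^{1+o(1)} + O(N^{k(1-\varepsilon)}) = O\!\left(2^{\frac{N_0}{k} \cdot k(1-\varepsilon)}\right) \cdot N_0^{O(1)} = 2^{(1-\varepsilon)N_0 + o(N_0)}$, which for $\varepsilon' $ chosen small relative to $\varepsilon$ (ensuring the $o(N_0)$ slack is absorbed, e.g.\ $\varepsilon' \le \varepsilon/2$) contradicts \SETH. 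The main obstacle — really the only point that needs care rather than being routine — is matching the prescribed message length $m_c(N,k)$ of the fixed function family $\cF^{\SD}_c$: one has to verify that after applying the Sparsification Lemma (already folded into Hypothesis~\ref{hyp:seth}) the clause count $m_0$ is genuinely $O(N_0)$ with the hidden constant depending only on $\varepsilon$, so that a single constant $c = c_\varepsilon$ suffices for \emph{all} $N$, and that zero-padding is consistent with the $\SD_{m,k}$ definition. Everything else is the textbook split-and-list argument, so I would keep the write-up brief and point to \cite{W05} for the well-known \SETH$\Rightarrow$Orthogonal-Vectors connection.
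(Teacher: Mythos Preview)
Your proposal is correct and follows essentially the same split-and-list reduction as the paper: partition the variables into $k$ blocks, list all partial assignments per block as characteristic ``unsatisfied-clause'' vectors, and observe that satisfiability is equivalent to $k$-way set disjointness. The only differences are cosmetic---you are explicit about zero-padding to match the prescribed length $m_c(N,k)$ and about introducing an auxiliary $\varepsilon'$ to absorb the reduction overhead, whereas the paper applies \SETH directly with the same $\varepsilon$ and leaves these bookkeeping points implicit.
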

\begin{proof}
Fix $\varepsilon>0$ and $k>1$. By \SETH, there exists $w:=w(\varepsilon)\in\mathbb N$ and $c:=c(\varepsilon)\in\mathbb N$ such that no algorithm can solve $w$-\SAT in $O(2^{(1 - \varepsilon)n})$ time where $n$ is the number of variables and $m\le cn$ is the number of clauses. 
For every $w$-\SAT formula $\phi$, we will build $A_1^{\phi},\ldots ,A_{k}^{\phi}\subseteq \{0,1\}^m$ each of cardinality $N:=2^{\nicefrac{n}{k}}$ such that there exists $(a_1,\ldots ,a_k)\in A_1^{\phi}\times \cdots \times A_{k}^{\phi}$ such that $\SD_{m,k}(a_1,\ldots ,a_k)=1$ if and only if $\phi$ is satisfiable. Thus, if there was an $O(N^{ k (1- \varepsilon)})$-time algorithm that can solve $\PSP(k, \cF^{\SD}_{c}, N)$ for all $N \in \N$, then it would violate \SETH.

All that remains is to show the construction of $A_1^{\phi},\ldots ,A_{k}^{\phi}$ from $\phi$. Fix $i\in [k]$. For every partial assignment $\sigma$ to the variables $x_{(i-1)*(n/k)+1},\ldots ,x_{i*(n/k)}$ we build an $m$-bit vector $a_\sigma\in A_i^{\phi}$ as follows: $\forall j\in [m]$, we have $a_{\sigma}(j)=0$ is $\sigma$ satisfies the $j^{\text{th}}$ clause, and $a_{\sigma}(j)=1$ otherwise (i.e., the clause is not satisfied, or its satisfiability is indeterminate). It is easy to verify that there exists $(a_1,\ldots ,a_k)\in A_1^{\phi}\times \cdots \times A_{k}^{\phi}$ such that $\SD_{m,k}(a_1,\ldots ,a_k)=1$ if and only if $\phi$ is satisfiable. 
\end{proof}

We remark that we can prove a similar statement as that of Proposition~\ref{prop:SETHPSP} for \ETH: assuming \ETH, there exists $k_0$ such that for every $k>k_0$ there exists $c:=c_{k_0}\in \mathbb{N}$ such that no $O(N^{ o(k)})$-time algorithm can solve $\PSP(k, \cF^{\SD}_{c}, N)$ for all $N \in \N$. However, instead of associating \ETH with \SD, we will associate with the Boolean function \tEQ\ (which will be defined in the next subsection) and its corresponding \PSP. This is because, associating \ETH with \tEQ\  provides a more elementary proof of Theorem~\ref{thm:domETH} (in particular we will not need to use algebraic geometric codes -- which are essentially inevitable if we associate \ETH with \SD).

\subsection{\Wone Hypothesis and \ETH}

Again, we recall the $k$-way \eq function which has been studied extensively in literature (see, e.g., \cite{AMS12,ABC09,CRR14,CMY08,LV11,PVZ12} and references therein).  

\begin{definition}[\eq]
Let $k, m \in \N$. $\EQU_{m, k}: (\{0, 1\}^m)^k \rightarrow \{0, 1\}$ is defined by
\begin{align*}
\EQU_{m, k}(x_1, \dots, x_k) = \bigwedge_{i, j \in [k]} \left(x_i = x_j\right)
\end{align*}
where $x_i = x_j$ is a shorthand for $\underset{p \in [m]}{\bigwedge} (x_i)_p = (x_j)_p$.
\end{definition}

Unfortunately, the \PSP associated with \EQU~is in fact not hard: given sets $A_1, \dots, A_k$, it is easy to find whether they share an element by just sorting the combined list of $A_1 \cup \cdots \cup A_k$. Hence, we will need a generalization of the equality function to state our hard problem. Before we do so, let us first state an intermediate helper function, which is a variant of the usual equality function where some of the $k$ inputs may be designed as ``null'' and the function only checks the equality over the non-null inputs. We call this function the \seleq (\EQUN) function. For notational convenience, in the definition below, each of the $k$ inputs is now viewed as $(x_i, y_i) \in \{0, 1\}^{m - 1} \times \{\bot, \top\}$; if $y_i = \bot$, then $(x_i, y_i)$ represents the ``null'' input.

\begin{definition}[\seleq]
Let $k, m \in \N$. $\EQUN_{m, k}: (\{0, 1\}^{m - 1} \times \{\bot, \top\})^k \rightarrow \{0, 1\}$ is defined by
\begin{align*}
\EQUN_{m, k}((x_1, y_1), \dots, (x_k, y_k)) = \bigwedge_{i, j \in [k]} \left((y_i = \bot) \vee (y_j = \bot) \vee (x_i = x_j)\right).
\end{align*}
\end{definition}

Next, we introduce the variant of $\EQU$~whose associated \PSP is hard under \Wone and \ETH. In the settings of both \eq and \seleq defined above, there is only one unknown that is given in each of the $k$ inputs $a_1 \in A_1, \dots, a_k \in A_k$ and the functions check whether they are equal. The following function, which we name \multeq, is the $t$-unknown version of \seleq. Specifically, the $i^{\text{th}}$ part of the input is now a tuple $((x_{i, 1}, y_{i, 1}), \dots, (x_{i, t}, y_{i, t}))$ where $x_{i, 1}, \dots, x_{i, t}$ are bit strings representing the supposed values of the $t$ unknowns while, similar to \seleq, each $y_{i, q} \in \{\bot, \top\}$ is a symbol indicating whether $(x_{i, q}, y_{i, q})$ is the ``null'' input. Below is the formal definition of \EQ; note that for convenience, we use $(x_{i, q}, y_{i, q})_{q \in [t]}$ as a shorthand for $((x_{i, 1}, y_{i, 1}), \dots, (x_{i, t}, y_{i, t}))$, i.e., the $i^{\text{th}}$ part of the input.

\begin{definition}[\multeq]
Let $k, t \in \N$ and let $m \in \N$ be any positive integer such that $m$ is divisible by $t$. Let $m' = m/t$. $\tEQ_{m, k,t}: ((\{0, 1\}^{m' - 1} \times \{\bot, \top\})^{t})^k \rightarrow \{0, 1\}$ is defined by
\begin{align*}
\tEQ_{m, k,t}((x_{1, q}, y_{1, q})_{q \in [t]}, \dots, (x_{k, q}, y_{k, q})_{q \in [t]})
 &= \bigwedge_{q \in [t]} \EQUN_{m', k}((x_{1, q}, y_{1, q}), \dots, (x_{k, q}, y_{k, q})).
\end{align*}
\end{definition}

Next, we define the family $\cF^{\EQ}$; note that in the definition below, we simply choose $t(k)$, the number of unknowns, to be $k + \binom{k}{2} + \binom{k}{3}$. As we will see later, this is needed for \ETH-hardness. For $\W[1]$-hardness, it suffices to use a smaller number of variables. However, we choose to define $t(k)$ in such a way so that we can conveniently use one family for both \ETH and $\W[1]$-hardness.

\begin{definition} \label{def:fam-eq}
Let $t: \N \to \N$ be defined by $t(k) = k + \binom{k}{2} + \binom{k}{3}$. Let $m: \N \times \N \rightarrow \N$ be defined by $m(N, k) = t(k)\left(1 + k\lceil\log N\rceil\right)$. We define $\cF^{\EQ}$ as $\{\EQ_{m(N, k), k,t(k)}\}_{N \in \N, k \in \N}$.
\end{definition}

We next show a reduction from $k$-\clique to $\PSP(k', \cF^{\EQ})$ where $k' = \binom{k}{2}$. The overall idea of the reduction is simple. First, we associate the integers in $[k']$ naturally with the elements of $\binom{[k]}{2}$. We then create the sets $\left(A_{\{i, j\}}\right)_{\{i, j\} \subseteq [k],i\neq j}$ in such a way that each element of the set $A_{\{i, j\}}$ corresponds to picking an edge  between the $i$-th and the $j$-th vertices in the supposed $k$-clique. Then, \EQ~is used to check that these edges are consistent, i.e., that, for every $i \in [k]$, $a_{\{i, j\}}$ and $a_{\{i, j'\}}$ pick the same vertex to be the $i^{\text{th}}$ vertex in the clique for all $j, j' \in [k] \setminus \{i\}$. This idea is formalized in the following proposition and its proof.

\begin{proposition} \label{prop:w1-to-multeq}
Let $k \in \N$ and $k' = \binom{k}{2}$. There exists a $\poly(N,k)$-time reduction from any instance $(G, k)$ of \clique to an instance $(A_1, \dots, A_{k'})$ of the $\PSP(k', \cF^{\EQ}, N')$ where $N$ denotes the number of vertices of $G$ and $N' = \binom{N}{2}$.
\end{proposition}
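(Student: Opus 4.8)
The plan is to give an explicit FPT-reduction (in fact a polynomial-time reduction, since $k' = \binom{k}{2}$ is a function of $k$ only) that realizes the intuition already stated in the paragraph preceding the proposition: elements of $A_{\{i,j\}}$ encode a choice of edge between the $i$-th and $j$-th clique vertices, and the \multeq\ function is used to enforce that the vertex chosen for slot $i$ is the same across all pairs $\{i,j\}$ containing $i$. Concretely, I would index the $t(k') = k' + \binom{k'}{2} + \binom{k'}{3}$ unknowns of \multeq\ so that among them there are $k$ designated ``coordinate'' unknowns, one for each $i \in [k]$; all remaining unknowns will be made irrelevant by setting their $y$-flag to $\bot$ in every input. (This is why the particular value of $t(\cdot)$ does not matter here — we just need $t(k') \geq k$, which holds trivially.)

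The main construction: for each $\{i,j\} \subseteq [k]$ with $i < j$ and each edge $\{u,v\} \in E(G)$ with $u < v$, we put into $A_{\{i,j\}}$ the element whose $i$-th coordinate unknown is $(u, \top)$, whose $j$-th coordinate unknown is $(v, \top)$, and whose every other unknown is $(0^{m'-1}, \bot)$. Here vertices of $G$ are identified with bit strings of length $\lceil \log N\rceil \le m' - 1$ (padding with zeros), so this is well-defined; and $|A_{\{i,j\}}| = |E(G)| \le \binom{N}{2} = N'$, matching the order of the \PSP\ instance. I would then verify the two directions of correctness. If $G$ has a $k$-clique on vertices $w_1 < \dots$ (more precisely, a clique whose $i$-th vertex we call $v_i$), pick in each $A_{\{i,j\}}$ the element corresponding to the edge $\{v_i, v_j\}$; then for every $q \in [t(k')]$ the $q$-th call to \seleq\ either sees a coordinate unknown $i$, on which all non-null inputs equal $v_i$, or sees only $\bot$ flags — in both cases \seleq\ returns $1$, so \multeq\ returns $1$. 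Conversely, if some tuple $(a_{\{i,j\}})$ satisfies $\tEQ = 1$, then for each $i$ the coordinate-$i$ unknown is equal across all the (non-null) inputs $a_{\{i,j\}}$, $j \ne i$; call this common value $v_i$. By construction $a_{\{i,j\}}$ witnesses that $\{v_i, v_j\} \in E(G)$ for every $\{i,j\}$, so $\{v_1, \dots, v_k\}$ is a $k$-clique (the $v_i$ are distinct since $\{v_i,v_j\}$ is an edge, hence $v_i \neq v_j$).

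The last thing to check is that the parameters line up with Definition~\ref{def:fam-eq}: the family $\cF^{\EQ}$ uses message length $m(N',k') = t(k')(1 + k'\lceil \log N'\rceil)$, so each of the $t(k')$ blocks has $m'-1 = k'\lceil \log N'\rceil$ bits available for the value plus one bit for the flag — comfortably enough to encode a vertex of $G$ (only $\lceil \log N\rceil \le k'\lceil\log N'\rceil$ bits needed), so the padding is harmless. The reduction clearly runs in $\poly(N,k)$ time: it lists $|E(G)| \le \binom N2$ elements in each of $\binom k2$ sets, each element of length $\poly(k',\log N')$. I do not anticipate a genuine obstacle here — the proposition is essentially a bookkeeping exercise in matching the combinatorial structure of $k$-\clique\ to the definitions; the only mild subtlety worth stating carefully is the identification of vertices with (padded) bit strings and the choice of which $k$ of the $t(k')$ unknowns play the role of clique coordinates, with all others permanently nulled out.
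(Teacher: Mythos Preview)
Your proposal is correct and follows essentially the same approach as the paper: you encode each edge $\{u,v\}$ as an element of $A_{\{i,j\}}$ with $(u,\top)$ in the $i$-th slot, $(v,\top)$ in the $j$-th slot, and $(0,\bot)$ elsewhere, and the completeness/soundness arguments match as well. The only cosmetic difference is that the paper uses the indices $1,\dots,k$ directly as the active unknowns without commenting on the fact that $t(k')\ge k$, whereas you make this explicit; your added remark that the $v_i$ are pairwise distinct (because $\{v_i,v_j\}$ is an edge) is a detail the paper leaves implicit via its no-self-loop footnote.
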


\begin{proof}
Given a \clique instance\footnote{We assume without loss of generality that $G$ does not contain any self-loop.} $(G, k)$, the reduction proceeds as follows. For convenience, we assume that the vertex set $V(G)$ is $[N]$. Furthermore, we associate the elements of $[k']$ naturally with the elements of $\binom{[k]}{2}$. 
For the sake of conciseness, we sometimes abuse notation and think of $\{i,j\}$ as an ordered pair $(i,j)$ where $i<j$.
For every $\{i, j\} \in \binom{[k]}{2}$ such that $i<j$, the set $A_{\{i, j\}}$ contains one element $a^{\{u, v\}}_{\{i, j\}} = \left(a^{\{u, v\}}_{\{i, j\}, 1}, \dots, a^{\{u, v\}}_{\{i, j\}, t(k')}\right)$ for each edge $\{u, v\} \in E(G)$  such that $u < v$, where
\begin{align*}
a^{\{u, v\}}_{\{i, j\}, q} &=
\begin{cases}
(u , \top) & \text{ if } q = i, \\
(v, \top) & \text{ if } q = j, \\
(0^{}, \bot) & \text{ otherwise.}
\end{cases}
\end{align*}
Note that in the definition above, we view $u, v$ and $0$ as $\left(\nicefrac{m(N', k')}{t(k')}-1\right)$-bit strings, where $m: \N \times \N \to \N$ is as in Definition~\ref{def:fam-eq}. Also note that each set $A_{\{i, j\}}$ has size at most $\binom{N}{2} = N'$, meaning that $(A_{\{i, j\}})_{\{i, j\} \subseteq [k]}$ is indeed a valid instance of $\PSP(k', \cF^{\EQ}, N')$. For brevity, below we will use $f$ as a shorthand for $\EQ_{m(N', k'), k',t(k')}$.

($\Rightarrow$) Suppose that $(G, k)$ is a YES instance for \clique, i.e., there exists a $k$-clique $\{u_1, \dots, u_k\}$ in $G$. Assume without loss of generality that $u_1 < \cdots < u_k$.  We claim that, 
$$f\left(\left(a_{\{i,j\}}^{\{u_i,u_j\}}\right)_{i,j\in [k],i< j}\right) = 1.$$ 
To see that this is the case, observe that for every $q \in [t(k')]$ and for every $\{i, j\} \subseteq [k]$ such that $i<j$, we have either $a_{\{i, j\}, q}^{\{u_i,u_j\}} = (0, \bot)$ or $a_{\{i, j\}, q}^{\{u_i,u_j\}} = (u_q, \top)$. This means that, $\EQUN\left(\left(a_{\{i,j\},q}^{\{u_i,u_j\}}\right)_{i,j\in [k],i< j	}\right) = 1$ for every $q\in [t(k')]$. 

($\Leftarrow$) Suppose that $(A_{\{i, j\}})_{\{i, j\} \subseteq [k]}$ is a YES instance for $\PSP(k', \cF^{\EQ}, N')$, i.e., there exists $a^*_{\{i, j\}} \in A_{\{i, j\}}$ for every $\{i, j\} \subseteq [k]$ such that $f((a^*_{\{i, j\}})_{\{i, j\} \subseteq [k]}) = 1$. Suppose that $a^*_{\{i, j\}} = (x^*_{\{i, j\}, 1}, y^*_{\{i, j\}, 1}, \dots, x^*_{\{i, j\}, t(k')}, y^*_{\{i, j\}, t(k')})$.
From this solution $\{a^*_{\{i, j\}}\}_{\{i, j\} \subseteq [k]}$, we can recover the $k$-clique as follows. For each $i \in [k]$, pick an arbitrary $j(i) \in [k]$ that is not equal to $i$. Let $u_i$ be $x^*_{\{i, j\}, i}$. We claim that $u_1, \dots, u_k$ forms a $k$-clique in $G$. To show this, it suffices to argue that, for every distinct $i, i' \in [k]$, there is an edge between $u_i$ and $u_{i'}$ in $G$. To see that this holds, consider $a^*_{\{i, i'\}}$. Since $y^*_{\{i, i'\}, i} = y^*_{\{i, j(i)\}, i} = \top$, we have $x^*_{\{i, i'\}, i} = x^*_{\{i, j(i)\}, i} = u_i$. Similarly, we have $x^*_{\{i, i'\}, i'} = u_{i'}$. Since $a^*_{\{i, i'\}} \in A_{\{i, i'\}}$ and from how the set $A_{\{i, i'\}}$ is defined, we have $\{u_i, u_{i'}\} \in E(G)$, which concludes our proof.
\end{proof}

\begin{lem} \label{lem:w1-psp}
Assuming \Wone, for any computable function $T: \N \to \N$, there is no $T(k) \cdot \poly(N)$ time algorithm that can solve $\PSP(k, \cF^{\EQ}, N)$ for every $N, k \in \N$.
\end{lem}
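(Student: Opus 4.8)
The plan is to derive Lemma~\ref{lem:w1-psp} as an immediate corollary of Proposition~\ref{prop:w1-to-multeq} together with the \Wone Hypothesis (i.e., the hardness of $k$-\clique). The key point is that Proposition~\ref{prop:w1-to-multeq} is already a genuine FPT-reduction: from a \clique instance $(G,k)$ on $N$ vertices, it produces in $\poly(N,k)$ time an instance of $\PSP(k',\cF^{\EQ},N')$ where $k' = \binom{k}{2}$ depends only on $k$, and $N' = \binom{N}{2} = \poly(N)$.

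First I would argue by contradiction: suppose there is a computable $T$ and an algorithm $\mathcal{A}$ solving $\PSP(k,\cF^{\EQ},N)$ in time $T(k)\cdot\poly(N)$ for all $N,k$. Given a $k$-\clique instance $(G,k)$ with $|V(G)| = N$, run the reduction of Proposition~\ref{prop:w1-to-multeq} to obtain, in $\poly(N,k)$ time, an equivalent instance of $\PSP(k',\cF^{\EQ},N')$ with $k' = \binom{k}{2}$ and $N' = \binom{N}{2}$. Then run $\mathcal{A}$ on this instance; by correctness of the reduction, its output tells us whether $G$ has a $k$-clique. The total running time is $\poly(N,k) + T\!\left(\binom{k}{2}\right)\cdot\poly\!\left(\binom{N}{2}\right)$, which is of the form $T'(k)\cdot\poly(N)$ for the computable function $T'(k) := T\!\left(\binom{k}{2}\right) + \poly(k)$ (absorbing the polynomial overhead). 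This would give an FPT algorithm for $k$-\clique, contradicting \Wone.

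I should double-check two small bookkeeping issues. One is that $k'$ ranges only over the subset $\{\binom{k}{2} : k\in\N\}$ of $\N$, not all of $\N$; but this is harmless, since a hypothetical algorithm solving $\PSP(k,\cF^{\EQ},N)$ "for every $N,k\in\N$" in particular solves it for these values of $k'$, which is all we need. The other is that the reduction must be verified to be polynomial-time in the bit-length of its output as well — but the output of Proposition~\ref{prop:w1-to-multeq} has size $\poly(N,k)$ (each of the $k'$ sets has at most $N'$ elements, each a string of length $m(N',k') = \poly(k)\cdot\log N'$), so this is automatic.

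There is no real obstacle here; the lemma is essentially a restatement of Proposition~\ref{prop:w1-to-multeq} in contrapositive form, and the only "work" is tracking that all the parameter blow-ups ($k\mapsto\binom{k}{2}$, $N\mapsto\binom{N}{2}$) preserve the FPT-time framework. The substance of the argument was already carried out in the proof of Proposition~\ref{prop:w1-to-multeq}.
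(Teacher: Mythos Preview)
Your proposal is correct and follows essentially the same approach as the paper: both argue by contradiction, apply the reduction of Proposition~\ref{prop:w1-to-multeq} to transform a $k$-\clique instance into a $\PSP(k',\cF^{\EQ},N')$ instance with $k'=\binom{k}{2}$ and $N'=\binom{N}{2}$, and then invoke the assumed FPT algorithm to contradict \Wone. Your additional bookkeeping remarks (about $k'$ ranging over a subset of $\N$ and about the output size) are minor clarifications that the paper omits but are indeed harmless.
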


\begin{proof}
Suppose for the sake of contradiction that, for some computable function $T: \N \to \N$, there is a $T(k) \cdot \poly(N)$ time algorithm $\cA$ that can solve $\PSP(k, \cF^{\EQ}, N)$ for every $N, k \in \N$. We will show that this algorithm can also be used to solve $k$-\clique parameterized by $k$ in \FPT time.

Given an instance $(G, k)$ of $k$-\clique, we first run the reduction from Proposition~\ref{prop:w1-to-multeq} to produce an instance $(A_1, \dots, A_{k'})$ of $\PSP(k', \cF^{\EQ}, N')$ in $\poly(N, k)$ time where $N = |V(G)|, N' = \binom{N}{2}$ and $k' = \binom{k}{2}$. We then run $\cA$ on $(A_1, \dots, A_{k'})$, which takes time $T(k') \cdot \poly(N')$. This means that we can also solve our $k$-\clique instance $(G, k)$ in time $\poly(N, k) + T(k') \cdot \poly(N') = \poly(N, k) + T\left(\binom{k}{2}\right) \cdot \poly(N)$, which is \FPT time. Since $k$-\clique is $\W[1]$-complete, this contradicts with \Wone.
\end{proof}

Next, we will prove \ETH-hardness of $\PSP(k, \cF^{\EQ})$. Specifically, we will reduce a 3-\SAT instance $\phi$ where each variable appears in at most three clauses to an instance of $\PSP(k, \cF^{\EQ}, N)$ where $N = 2^{O(n/k)}$ and $n$ denotes the number of variables in $\phi$. The overall idea is to partition the set of clauses into $k$ parts of equal size and use each element in $A_j$ to represent a partial assignment that satisfies all the clauses in the $j^{\text{th}}$ partition. This indeed means that each group has size $2^{O(n/k)}$ as intended. However, choosing the unknowns are not as straightforward as in the reduction from $k$-\clique above; in particular, if we view each variable by itself as an unknown, then we would have $n$ unknowns, which is much more than the designated $t(k) = k + \binom{k}{2} + \binom{k}{3}$ unknowns! This is where we use the fact that each variable appears in at most three clauses: we group the variables of $\phi$ together based on which partitions they appear in and view each group as a single variable. Since each variable appears in at most three clauses, the number of ways they can appear in the $k$ partitions is $k + \binom{k}2{} + \binom{k}{3}$ which is indeed equal to $t(k)$. The ideas are formalized below.

\begin{proposition}\label{prop:ETHPSP}
Let $k \in \N$. There exists a $\poly(N, k)$-time reduction from any instance $\phi$ of 3-\SAT such that each variable appears in at most three clauses in an instance $(A_1, \dots, A_k)$ of the $\PSP{}(k, \cF^{\EQ}, N)$ where $N = 2^{3\lceil m/k \rceil}$ and $m$ denotes the number of clauses in $\phi$.
\end{proposition}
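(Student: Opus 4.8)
\textbf{Proof proposal for Proposition~\ref{prop:ETHPSP}.}

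The plan is to reduce a 3-\SAT formula $\phi$ with $n$ variables and $m$ clauses, each variable appearing in at most three clauses, to an instance of $\PSP(k, \cF^{\EQ}, N)$ with $N = 2^{3\lceil m/k\rceil}$. First I would fix an arbitrary partition of the clause set of $\phi$ into $k$ blocks $\mathcal{C}_1, \dots, \mathcal{C}_k$, each of size at most $\lceil m/k\rceil$. For each $j \in [k]$, let $V_j \subseteq \mathrm{Var}(\phi)$ be the set of variables occurring in some clause of $\mathcal{C}_j$; since each block has at most $\lceil m/k\rceil$ clauses, each on at most $3$ literals, we have $|V_j| \le 3\lceil m/k\rceil$, so the number of partial assignments to $V_j$ that satisfy every clause in $\mathcal{C}_j$ is at most $2^{3\lceil m/k\rceil} = N$. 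The set $A_j$ will consist of one element per such locally-satisfying partial assignment.

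Next I would set up the $t(k) = k + \binom{k}{2} + \binom{k}{3}$ unknowns of \EQ. For a variable $x$ of $\phi$, let $\mathrm{blk}(x) \subseteq [k]$ be the set of blocks $j$ with $x \in V_j$; since $x$ appears in at most three clauses, $|\mathrm{blk}(x)| \le 3$, so $\mathrm{blk}(x)$ is a nonempty subset of $[k]$ of size $1$, $2$, or $3$. Identify the coordinate set $[t(k)]$ with the family of all such subsets, i.e.\ with $\binom{[k]}{1} \cup \binom{[k]}{2} \cup \binom{[k]}{3}$. Fix, for each subset $B$ in this family, an ordering of the variables $x$ with $\mathrm{blk}(x) = B$; encoding the truth values of all these variables as a bit string (padding with zeros) and noting that each block contributes at most $3\lceil m/k\rceil$ variables total, this bit string fits in the $\left(\nicefrac{m(N,k)}{t(k)} - 1\right)$-bit field allotted to one unknown, by the choice of $m(N,k)$ in Definition~\ref{def:fam-eq} — here I should double-check the bit budget, which is the one genuinely fiddly arithmetic point. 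Now for a block $j$ and a locally-satisfying partial assignment $\sigma$ to $V_j$, the element $a^\sigma_j = (x_{j,B}, y_{j,B})_{B}$ is defined coordinate-wise: if $j \in B$, set $y_{j,B} = \top$ and let $x_{j,B}$ be the bit string encoding the restriction of $\sigma$ to the variables whose block-set is exactly $B$ (all such variables lie in $V_j$ precisely because $j \in B$); if $j \notin B$, set $(x_{j,B}, y_{j,B}) = (0, \bot)$.

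The correctness argument then mirrors Proposition~\ref{prop:w1-to-multeq}. For the forward direction, a satisfying assignment of $\phi$ restricts to a locally-satisfying $\sigma_j$ on each $V_j$, and for each coordinate $B$ the non-null entries among $a^{\sigma_1}_1, \dots, a^{\sigma_k}_k$ all encode the same global truth values on the variables with block-set $B$, so every $\EQUN$ conjunct is satisfied and $f$ evaluates to $1$. For the reverse direction, given $a^*_j \in A_j$ with $f((a^*_j)_j) = 1$, one reconstructs a global assignment: each variable $x$ with $\mathrm{blk}(x) = B$ appears (via the $B$-coordinate) with a $\top$-flag in exactly the blocks $j \in B$, and the $\EQUN_{m',k}$ constraint on coordinate $B$ forces all those non-null entries to agree, so $x$ gets a well-defined value; every clause lies in some block $\mathcal{C}_j$ whose local assignment $a^*_j$ came from a partial assignment satisfying all of $\mathcal{C}_j$, hence $\phi$ is satisfied. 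Finally, $|A_j| \le N$ and the construction runs in $\poly(N,k)$ time, so this is a valid $\PSP(k, \cF^{\EQ}, N)$ instance. The main obstacle is purely bookkeeping: verifying that the bit-length budget $m(N,k)/t(k) - 1$ actually accommodates the concatenated encodings of all variables sharing a given block-set $B$ (using $\sum_j |V_j| \le 3m$ and $|B| \ge 1$), and being careful that a variable with block-set $B$ really does occur in $V_j$ for every $j \in B$ and for no other $j$ — both follow directly from the definitions but must be stated cleanly.
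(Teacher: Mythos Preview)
Your proposal is correct and follows essentially the same approach as the paper: partition the clauses into $k$ blocks, index the $t(k)$ unknowns by the nonempty subsets of $[k]$ of size at most three, let each $A_j$ enumerate locally-satisfying partial assignments on the variables of block $j$, and for coordinate $B$ record (with a $\top$-flag iff $j\in B$) the restriction to the variables whose block-set is exactly $B$. The only cosmetic difference is the encoding of a partial assignment into the $(m(N,k)/t(k)-1)$-bit field: the paper uses a fixed-position map $\ext$ indexed by the global variable name (which fits because $n\le 3m\le 3k\lceil m/k\rceil$), whereas you order the variables within each $\nu(B)$ and concatenate; both fit in the $k\cdot 3\lceil m/k\rceil$-bit budget for the same reason you identify, so the bookkeeping concern you flag is indeed routine.
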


\begin{proof}
Given a 3-\SAT formula $\phi$ such that each variable appears in at most three clauses. Let the variable set of $\phi$ be $\cZ = \{z_1, \dots, z_n\}$ and the clauses of $\phi$ be $\cC = \{C_1, \dots, C_m\}$. Then for every $k\in\mathbb N$, we produce an instance $(A_1, \dots, A_k)$ of $\PSP(k, \cF^{\EQ}, N)$ where $N=2^{3\lceil m/k\rceil}$ as follows.

First, we partition the clause set $\cC$ into $k$ parts $\cC_1, \dots, \cC_k$ each of size at most $\lceil m/k \rceil$. For each variable $z_i$, let $S_i$ denote $\{j \in [k] \mid \exists C_h \in \mathcal{C}_j \text{ such that }z_i \in C_h \text{ or }\overline{z}_i\in C_h\}$. Since every $z_i$ appears in at most three clauses, we have $S_i \in \binom{[k]}{\leq 3}$. For each $S \in \binom{[k]}{\leq 3}$, let $\nu(S)$ denote the set of all variables $z_i$'s such that $S_i = S$ (i.e. $S$ is exactly equal to the set of all partitions that $z_i$ appears in). The general idea of the reduction is that we will view a partial assignment to the variables in $\nu(S)$ as an unknown for $\EQ$; let us call this unknown $X_{S}$ (hence there are $k + \binom{k}{2} + \binom{k}{3} = t(k)$ unknowns). For each $j \in [k]$, $A_j$ contains one element for each partial assignment to the variables that appear in the clauses in $\cC_j$  and that satisfies all the clauses in $\cC_j$. Such a partial assignment specifies $\left(1 + k + \binom{k}{2}\right)$ unknowns: all the $X_S$ such that $j \in S$. The \EQ~function is then used to check the consistency between the partial assignments to the variables from different $A_j$'s.

To formalize this intuition, we first define more notations. Let $\widetilde{m}=3k\lceil m/k\rceil$. For every subsets $T \subseteq T' \subseteq \cZ$ and every partial assignment $\alpha: T' \to \{0, 1\}$, the restriction of $\alpha$ to $T$, denoted by $\alpha|_T$ is the function from $T$ to $\{0, 1\}$ where $\alpha|_T(z) = \alpha(z)$ for every $z \in T$. Furthermore, we define the operator $\ext(\alpha)$, which ``extends'' $\alpha$ to $\widetilde m$ bits, i.e., the $i$-th bit of $\ext(\alpha)$ is $\alpha(z_i)$ if $z_i \in T$ and is zero otherwise. Finally, we use $\var(\cC_j)$ to denote the set of all variables that appear in at least one of the clauses from $\cC_j$, i.e., $\var(\cC_j) =  \bigcup_{C \in \cC_j} \var(C)$ where $\var(C)$ denotes $\{z_i \in \cZ \mid z_i \in C \text{ or }\overline{z}_i\in C\}$.

Now, since our $t(k)$ is exactly $\left|\binom{[k]}{\le 3}\right|$, we can associate each element of $[t]$ with a subset $S \in \binom{[k]}{\leq 3}$. 
Specifically, for each partial assignment $\alpha: \var(\cC_j) \rightarrow \{0, 1\}$ such that $\alpha$ satisfies all the clauses in $\cC_j$, the set $A_j$ contains an element $a_j^\alpha = (a^\alpha_{j, S})_{S \in \binom{[k]}{\leq 3}}$ where, for every $S \in \binom{[k]}{\leq 3}$,
\begin{align*}
a^\alpha_{j, S} =
\begin{cases}
\left(\ext\left(\alpha|_{\nu(S)}\right), \top\right) & \text{if } j \in S, \\
(0^{\widetilde m}, \bot) & \text{otherwise.}
\end{cases}
\end{align*}
Fix $S \in \binom{[k]}{\leq 3}$. For every $j\in S$, observe that $\nu(S) \subseteq \var(\cC_j)$. Moreover, since each $\cC_j$ contains at most $\lceil m/k \rceil$ clauses, there are at most $3\lceil m/k \rceil$ variables in $\var(\cC_j)$. This means that $A_j$ has size at most $2^{3\lceil m/k \rceil}$. Hence, $(A_1, \dots, A_k)$ is indeed a valid instance of \PSP($k, \cF^{\EQ}, N)$ where $N = 2^{3\lceil m/k \rceil}$. For brevity, below we will use $f$ as a shorthand for $\EQ_{m(N, k), k,t(k)}$.

($\Rightarrow$) Suppose that $\phi$ is satisfiable. Let $\alpha: \cC \to \{0, 1\}$ be an assignment that satisfies all the clauses. Let $a^*_j = a^{\alpha|_{\var(\cC_j)}}_j \in A_j$ for every $j \in [k]$. Observe that, for every $S \in \binom{[k]}{\leq 3}$ and every $j \in [k]$, we either have $a^*_{j, S} = (0^{\widetilde m}, \bot)$ or $a^*_{j, S} = (\ext(\alpha|_{\nu(S)}), \top)$. This indeed implies that $f(a^*_1, \dots, a^*_k) = 1$.

($\Leftarrow$) Suppose that there exists $(a^{\alpha_1}_1, \dots, a^{\alpha_k}_k) \in A_1 \times \cdots \times A_k$ such that $f(a^{\alpha_1}_1, \dots, a^{\alpha_k}_k) = 1$. We construct an assignment $\alpha: \cZ \to \{0, 1\}$ as follows. For each $i \in [n]$, pick an arbitrary $j(i)\in[k]$ such that $z_i \in \var(\cC_{j(i)})$ and let $\alpha(z_i) = \alpha_{j(i)}(z_i)$. We claim that $\alpha$ satisfies every clause. To see this, consider any clause $C \in \cC$. Suppose that $C$ is in the partition $\cC_{j}$. It is easy to check that $f(a^{\alpha_1}_1, \dots, a^{\alpha_k}_k) = 1$ implies that $\alpha|_{\var(C)} = \alpha_j|_{\var(C)}$. Since $\alpha_j$ is a partial assignment that satisfies $C$, $\alpha$ must also satisfy $C$. In other words, $\alpha$ satisfies all clauses of $\phi$.
\end{proof}

\begin{lem} \label{lem:eth-psp}
Assuming \ETH, for any computable function $T: \N \to \N$, there is no $T(k) \cdot N^{o(k)}$ time algorithm that can solve $\PSP(k, \cF^{\EQ}, N)$ for every $N,k \in \N$.
\end{lem}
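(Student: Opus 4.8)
The plan is to deduce the lemma as a quantitative corollary of the reduction in Proposition~\ref{prop:ETHPSP}, exactly in the spirit of the proof of Lemma~\ref{lem:w1-psp} but now keeping careful track of the exponents. Suppose for contradiction that, for some computable $T\colon\N\to\N$, there is an algorithm $\cA$ together with a function $g\colon\N\to\N$ satisfying $g(k)=o(k)$ such that $\cA$ solves $\PSP(k,\cF^{\EQ},N)$ for every $N,k\in\N$ in time $T(k)\cdot N^{g(k)}$. I will use $\cA$ to solve bounded-occurrence 3-\SAT in time $O(2^{\delta n})$ for every $\delta>0$, contradicting \ETH (Hypothesis~\ref{hyp:eth}).

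First I would set up the instance. Let $\phi$ be a 3-\SAT instance on $n$ variables in which every variable occurs in at most three clauses; then its number of clauses satisfies $m\le 3n$, since each variable contributes at most three variable--clause incidences and every clause is nonempty. For a parameter $k$ to be chosen later (as an absolute constant), apply Proposition~\ref{prop:ETHPSP} to $\phi$ to obtain, in $\poly(N,k)$ time, an equivalent instance of $\PSP(k,\cF^{\EQ},N)$ with $N=2^{3\lceil m/k\rceil}$, and then run $\cA$ on it; by Proposition~\ref{prop:ETHPSP} this correctly decides satisfiability of $\phi$.

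Next I would bound the running time and fix $k$. The reduction costs $\poly(N,k)=2^{O(n/k)}$ for fixed $k$ (using $\log N=3\lceil m/k\rceil\le 3(m/k+1)$ and $m\le 3n$), where the constant hidden in $O(\cdot)$ comes only from the degree of the polynomial in the reduction and does not depend on $k$. Running $\cA$ costs
\[
T(k)\cdot N^{g(k)} \;=\; T(k)\cdot 2^{3\lceil m/k\rceil\, g(k)} \;\le\; T(k)\cdot 2^{3g(k)}\cdot 2^{9 n g(k)/k}.
\]
Now fix any $\delta>0$ witnessing \ETH. Since $g(k)/k\to 0$ as $k\to\infty$ and the constant in the $2^{O(n/k)}$ term is independent of $k$, I can choose one constant $k=k(\delta)$ large enough that the total running time is at most $2^{\delta n}$ for all sufficiently large $n$; for this fixed $k$ the factors $T(k)$ and $2^{3g(k)}$ are absolute constants and are absorbed. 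This yields an $O(2^{\delta n})$-time algorithm for bounded-occurrence 3-\SAT, contradicting \ETH and proving the lemma.

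The only delicate point, and the one I expect to be the main (mild) obstacle, is the quantifier bookkeeping: the hypothesised running time is a \emph{single} algorithm with a \emph{single} $o(k)$ function $g$, and I must use $g(k)/k\to 0$ to simultaneously kill both the $N^{g(k)}$ blow-up \emph{and} the $\poly(N)$ cost of the reduction by picking one sufficiently large constant $k$; once $k$ is fixed everything that depends only on $k$ collapses to a constant and the bound $O(2^{\delta n})$ falls out. Note that, in contrast to the \SETH-based argument, this proof needs neither algebraic-geometric codes nor any gap amplification: it is a direct consequence of Proposition~\ref{prop:ETHPSP} together with the linear-clause-count, bounded-occurrence form of \ETH recorded in Hypothesis~\ref{hyp:eth}.
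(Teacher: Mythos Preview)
Your proposal is correct and follows essentially the same approach as the paper: assume a $T(k)\cdot N^{o(k)}$ algorithm, apply the reduction of Proposition~\ref{prop:ETHPSP} to a bounded-occurrence 3-\SAT instance, and then pick one sufficiently large constant $k$ (using $g(k)/k\to 0$) so that the combined running time is $O(2^{\delta n})$, contradicting \ETH. The paper's write-up is slightly terser (it simply fixes $k$ so that the running time is $O(N^{\delta k/10})$ and then computes $O(N^{\delta k/10})=O(2^{0.9\delta n})$), but your more explicit bookkeeping of the reduction cost and the factor $2^{3g(k)}$ is equally valid.
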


\begin{proof}
Let $\delta > 0$ be the constant in the running time lower bound in \ETH. 
Suppose for the sake of contradiction that \ETH holds but, for some function $T$, there is a $T(k) \cdot N^{o(k)}$ time algorithm $\cA$ that can solve $\PSP(k, \cF^{\EQ}, N)$ for every $N,k \in \N$. Thus, there exists a sufficiently large $k$ such that the running time of $\cA$ for solving $\PSP(k, \cF^{\EQ}, N)$ is at most $O( N^{\delta k / 10})$ for every $N \in \N$.

Given a 3-CNF formula $\phi$ such that each variable appears in at most three clauses. Let $n, m$ denote the number of variables and the number of clauses of $\phi$, respectively. We first run the reduction from Proposition~\ref{prop:ETHPSP} on $\phi$ with this value of $k$. This produces an instance $(A_1, \dots, A_k)$ of \PSP$(k, \cF^{\EQ}, N)$ where $N = 2^{3\lceil m/k \rceil}$. Since each variable appears in at most three clauses, we have $m \leq 3n$, meaning that $N = O(2^{9n/k})$. By running $\cA$ on this instance, we can decide whether $\phi$ is satisfiable in time $O(N^{\delta k/10}) = O(2^{0.9 \delta n})$, contradicting \ETH.
\end{proof}

\section{Communication Protocols and Reduction to Gap Label Cover}
\label{sec:typeA:cc2maxcover}

In this section, we first introduce a communication model for multiparty communication known in literature as the Simultaneous Message Passing model. Then, we introduce a notion of ``efficient'' communication protocols, and connect the existence of such protocols to a reduction from \PSP to a gap version of \maxcover.

\subsection{Efficient Protocols in Simultaneous Message Passing Model}

The two-player Simultaneous Message Passing (\SMP) model was introduced by Yao \cite{Yao79} and has been extensively studied in literature \cite{KushilevitzNisan96-CCBook}. In the multiparty setting, the \SMP model is considered popularly with the number-on-forehead model, where each player can see the input of all the other players but not his own \cite{CFL83,BGKL03}. 
In this paper, we consider the multiparty \SMP model where the inputs are given as in the number-in-hand model (like in \cite{FOZ16,WW15}). 

{\bf Simultaneous Message Passing Model.} Let $f:\{0,1\}^{m\times k}\to\{0,1\}$.  In the $k$-player simultaneous message passing communication model, we have $k$ players each with an input $x_i\in \{0,1\}^m$ and a referee who is given an advice $\mu\in \{0,1\}^{*}$ (at the same time when the players are given the input). The communication task is for the referee to determine if $f(x_1,\ldots ,x_k)=1$. The players are allowed to only send messages to the referee. In the randomized setting, we allow the players \emph{and} the referee to jointly toss some random coins before sending messages, i.e., we allow public randomness.

Next, we introduce the notion of \emph{efficient} protocols, which are in a nutshell one-round randomized protocols where the players and the referee are in a computationally bounded setting.

{\bf Efficient Protocols.}
Let $\pi$ be a communication protocol for a problem in the \SMP model. We say that $\pi$ is  a $(w,r,\ell,s)$-efficient protocol if the following holds:
\begin{itemize}
\item The referee receives $w$ bits of advice.
\item The protocol is one-round with public randomness, i.e., the following actions happen sequentially:
\begin{enumerate}
\item The players receive their inputs and the referee receives his advice.
\item The players and the referee jointly toss $r$ random coins.
\item Each player on seeing the randomness (i.e. results of $r$ coin tosses) deterministically sends an $\ell$-bit message to the referee.
\item Based on the advice, the randomness, and the total $\ell\cdot k$ bits sent from the players, the referee outputs accept or reject.
\end{enumerate}
\item The protocol has completeness 1 and soundness $s$, i.e.,
\begin{itemize}
\item If $f(x_1,\ldots ,x_k)=1$, then there exists an advice on which the referee always accepts.
\item If $f(x_1,\ldots ,x_k)=0$, then, on any advice, the referee accepts with probability at most $s$.
\end{itemize}
\item The players and the referee are computationally bounded, i.e., all of them perform all their computations in $\poly(m)$-time.
\end{itemize}

The following proposition follows immediately from the definition of an efficient protocol and will be very useful in later sections for gap amplification. 

\begin{proposition}\label{prop:repeat}
Let $z\in\mathbb N$ and $\pi$ be a communication protocol for a problem in the \SMP model. Suppose $\pi$ is a $(w,r,\ell,s)$-efficient protocol. Then there exists a $(w,z\cdot r,z\cdot \ell,s^{z})$-efficient protocol for the same problem.
\end{proposition}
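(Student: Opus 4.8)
## Proof Proposal for Proposition~\ref{prop:repeat}

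The plan is to construct the desired $(w, z \cdot r, z \cdot \ell, s^z)$-efficient protocol $\pi'$ by running $z$ independent copies of the given $(w,r,\ell,s)$-efficient protocol $\pi$ in parallel and having the referee accept if and only if \emph{all} $z$ copies accept. First I would set up the notation: label the copies $1, \ldots, z$; in $\pi'$, the $r$-coin-toss step of $\pi$ is replaced by $z \cdot r$ coin tosses, which we parse as $z$ independent blocks $\gamma^{(1)}, \ldots, \gamma^{(z)}$ of $r$ bits each, one feeding each copy. Each player $i$, on input $x_i$ and seeing all the randomness, computes the $\ell$-bit message $m^{(t)}_i$ that it would send in copy $t$ of $\pi$ on randomness $\gamma^{(t)}$, and sends the concatenation $m^{(1)}_i \cdots m^{(z)}_i$, which has length $z \cdot \ell$, exactly as required. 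Crucially, the advice stays the same: the referee in $\pi'$ receives the same $w$-bit advice $\mu$ and, in each block, simulates the referee of $\pi$ with advice $\mu$, randomness $\gamma^{(t)}$, and messages $(m^{(t)}_1, \ldots, m^{(t)}_k)$; it accepts iff all $z$ simulated referees accept. Since the underlying protocol's parties are $\poly(m)$-time bounded and $z$ is fixed, all computations in $\pi'$ remain $\poly(m)$-time (absorbing the factor $z$).

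The main content is verifying completeness and soundness. For completeness, suppose $f(x_1, \ldots, x_k) = 1$. By the completeness of $\pi$, there is an advice $\mu^*$ on which the referee of $\pi$ always accepts (for every randomness). Using this same $\mu^*$ as the advice in $\pi'$, each of the $z$ simulated referees accepts with probability $1$ over its block $\gamma^{(t)}$, hence their conjunction accepts with probability $1$; so $\pi'$ has completeness $1$. For soundness, suppose $f(x_1, \ldots, x_k) = 0$ and fix any advice $\mu$ for $\pi'$. In block $t$, the simulated referee is precisely the referee of $\pi$ run with advice $\mu$ on fresh randomness $\gamma^{(t)}$, so by the soundness of $\pi$ it accepts with probability at most $s$. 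Since the blocks $\gamma^{(1)}, \ldots, \gamma^{(z)}$ are mutually independent and the acceptance event in block $t$ depends only on $\gamma^{(t)}$ (the inputs and $\mu$ being fixed), the events are independent, and the probability that all $z$ accept is at most $s^z$. This holds for every advice $\mu$, giving soundness $s^z$.

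I do not anticipate a genuine obstacle here; the proposition is essentially a bookkeeping exercise in parallel repetition. The one subtlety worth being careful about is that the advice must \emph{not} be repeated $z$ times — the referee gets only $w$ bits total — which works out because the same advice string $\mu$ (resp.\ $\mu^*$ in the completeness case) can be reused across all blocks, as the definition of soundness quantifies ``on any advice,'' and the definition of completeness only requires the \emph{existence} of one good advice. The other mild point is independence of the per-block acceptance events, which follows because we allocate disjoint fresh coins to each block and each block's transcript is a deterministic function of $\mu$, the fixed inputs, and that block's coins.
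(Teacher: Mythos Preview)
Your proposal is correct and takes essentially the same approach as the paper: a straightforward parallel repetition of steps 2--4 of $\pi$ with fresh randomness in each block, the same $w$-bit advice reused across all blocks, and the referee accepting iff every block accepts. The paper's proof is a one-sentence sketch of exactly this argument; your version simply fills in the bookkeeping (including the point about not repeating the advice) that the paper leaves implicit.
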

\begin{proof}
The proof follows by a simple repetition argument. More precisely, we repeat steps 2-4 in the protocol $z$ times, each time using fresh randomness, but note that the $z$ steps of drawing random coins can be clubbed into one step, and the decision by the referee can be reserved till the end of the entire protocol, wherein he accepts if and only if he would accept in each of the individual repetitions. 
\end{proof}

\subsection{Lower Bounds on Gap-\maxcover}\label{sec:gapmax}

The following theorem is the main conceptual contribution of the paper: we show below that the existence of efficient protocols can translate (exact) hardness of \PSP{}s to hardness of approximating \maxcover.

\begin{theorem} \label{thm:meta-maxcov}
Let $m: \N \times \N \to \N$ be any function. 
Let $\mathcal{F}:=\{f_{N, k}:\{0,1\}^{m(N, k)\times k}\to\{0,1\}\}_{N ,k \in \N}$ be a family of Boolean functions indexed by $N, k$. 
Suppose there exists a $(w,r,\ell,s)$-efficient protocol\footnote{$w, r, \ell$ and $s$ can depend on $N$ and $k$.} for  $f_{N, k}$ in the $k$-player \SMP model for every $N, k \in \N$.
Then, there is a reduction from any instance $(A_1,\dots, A_k)$ of $\PSP(k,\mathcal{F},N)$ to $2^w$ label cover instances $\{\Gamma^\mu\}_{\mu \in \{0, 1\}^w}$ such that
\begin{itemize}
\item The running time of the reduction is $2^{w + r + \ell k}\poly(m(N, k))$.
\item Each $\Gamma^{\mu} = (U^{\mu} = U^{\mu}_1 \cup \cdots \cup U^{\mu}_q, W^{\mu} = W^{\mu}_1 \cup \cdots \cup W^{\mu}_h; E^{\mu})$ has the following parameters:
\begin{itemize}
\item $\Gamma^\mu$ has at most $Nk$ right nodes, i.e., $|W^\mu| \leq Nk$,
\item $\Gamma^\mu$ has $k$ right super nodes, i.e., $h = k$,
\item $\Gamma^\mu$ has $2^r$ left super nodes, i.e., $q = 2^r$,
\item $\Gamma^\mu$'s left alphabet size is at most $2^{\ell k}$, i.e., $|U^\mu_1|, \dots, |U^\mu_q| \leq 2^{\ell k}$.
\end{itemize}
\item If $(A_1, \dots, A_k)$ is a YES instance of $\PSP(k, \mathcal{F}, N)$, then $\maxcover(\Gamma^\mu) = 1$ for some $\mu\in \{0, 1\}^w$.
\item If $(A_1, \dots, A_k)$ is a NO instance of $\PSP(k, \mathcal{F}, N)$, then $\maxcover(\Gamma^\mu) \leq s$ for every $\mu \in \{0, 1\}^w$. 
\end{itemize}
\end{theorem}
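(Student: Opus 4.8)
The plan is to carry out the construction sketched in Section~\ref{sec:comm-connection} in full generality, producing one label cover instance $\Gamma^\mu$ for each possible advice string $\mu \in \{0,1\}^w$. Fix an instance $(A_1,\dots,A_k)$ of $\PSP(k,\mathcal F,N)$ and fix $\mu$. First I would build the right side: for each $j\in[k]$ the right super-node $W^\mu_j$ has one vertex $w_j^{a}$ for every element $a\in A_j$, so $|W^\mu_j|=|A_j|\le N$ and $h=k$, giving $|W^\mu|\le Nk$. Next the left side: identify $[q]$ with $\{0,1\}^r$, and for each randomness string $\gamma\in\{0,1\}^r$ let $U^\mu_\gamma$ contain one vertex for each \emph{accepting configuration} under advice $\mu$ and randomness $\gamma$, i.e.\ one vertex $(m_1,\dots,m_k)$ for each $k$-tuple of $\ell$-bit messages on which the referee, on advice $\mu$ and randomness $\gamma$, accepts. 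Hence $q=2^r$ and $|U^\mu_\gamma|\le 2^{\ell k}$, as required. The edges encode consistency with the protocol: for $a\in A_j$ let $m^{a,\gamma}_j$ be the (deterministic) message player $j$ sends on input $a$ and randomness $\gamma$; connect $w_j^a$ to every vertex $(m_1,\dots,m_k)\in U^\mu_\gamma$ with $m_j=m^{a,\gamma}_j$. Computing all messages and enumerating all accepting configurations over all $\mu,\gamma$ takes $2^{w+r+\ell k}\cdot\poly(m(N,k))$ time, using that the referee and players are $\poly(m)$-time bounded; this matches the claimed running time.

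The core observation, exactly as in the sketch, is that a labeling $S$ of $\Gamma^\mu$ corresponds to choosing one $a_j\in A_j$ per $j$ (namely the $a$ with $w_j^a\in S$), and that $U^\mu_\gamma$ is covered by $S$ if and only if the referee, on advice $\mu$, randomness $\gamma$, and the messages $(m^{a_1,\gamma}_1,\dots,m^{a_k,\gamma}_k)$, accepts. Indeed, a common neighbor of all of $S$ inside $U^\mu_\gamma$ is precisely an accepting configuration $(m_1,\dots,m_k)$ with $m_j=m^{a_j,\gamma}_j$ for all $j$, which exists iff the referee accepts on that message tuple; and since the messages are determined by the inputs and $\gamma$, there is at most one such vertex. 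Therefore $\maxcover(\Gamma^\mu)$ equals the maximum over labelings $S$ (equivalently over tuples $(a_1,\dots,a_k)\in A_1\times\cdots\times A_k$) of the acceptance probability of the protocol on advice $\mu$ when player $j$ holds $a_j$.

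With this equivalence in hand the two completeness/soundness bullets follow from the $(w,r,\ell,s)$-efficiency of the protocol. If $(A_1,\dots,A_k)$ is a YES instance, pick $(a_1,\dots,a_k)$ with $f_{N,k}(a_1,\dots,a_k)=1$; by completeness there is an advice $\mu$ on which the referee always accepts these inputs, so for that $\mu$ the corresponding labeling covers every $U^\mu_\gamma$, i.e.\ $\maxcover(\Gamma^\mu)=1$. If $(A_1,\dots,A_k)$ is a NO instance, then for every tuple $(a_1,\dots,a_k)\in A_1\times\cdots\times A_k$ we have $f_{N,k}(a_1,\dots,a_k)=0$, so by soundness the referee accepts with probability at most $s$ on any advice; taking the maximum over labelings gives $\maxcover(\Gamma^\mu)\le s$ for every $\mu$. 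I do not anticipate a serious obstacle here — the argument is essentially bookkeeping — but the step requiring the most care is verifying that the left super-node $U^\mu_\gamma$ is covered \emph{exactly when} the referee accepts (both directions of the ``iff'', including the uniqueness of the candidate common neighbor, which relies crucially on the messages being deterministic functions of $(a_j,\gamma)$), and confirming that the enumeration of accepting configurations really does run within the stated $2^{w+r+\ell k}\poly(m(N,k))$ bound.
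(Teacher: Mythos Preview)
Your proposal is correct and follows essentially the same construction and argument as the paper's own proof: identical right and left super-nodes, identical edge definition, the same bijection between labelings and input tuples, and the same derivation of completeness and soundness from the protocol's guarantees. Your additional remark about uniqueness of the common neighbor (from determinism of the messages) is not needed for this theorem but is exactly what the paper later uses to observe that the instances produced are in fact \textsf{Unique} \maxcover instances.
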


\begin{proof}
Given a $(w,r,\ell,s)$-efficient protocol $\pi$ of $f_{N, k}$ and an instance $(A_1, \dots, A_k)$ of $\PSP(k, \cF, N)$,
we will generate $2^w$ instances of \maxcover. Specifically, for each $\mu \in \{0, 1\}^w$, we construct an instance $\Gamma^\mu = (U^{\mu} = U^{\mu}_1 \cup \cdots \cup U^{\mu}_q, W^{\mu} = W^{\mu}_1 \cup \cdots \cup W^{\mu}_h; E^{\mu})$ of \maxcover as follows.

\begin{itemize}
\item Let $h = k$. For each $j \in [h]$, the right super-node $W^\mu_j$ contains one node for each $x_j \in A_j$.
\item Let $q = 2^r$. For each random string $\gamma \in \{0,1\}^r$, the left-super node $U^\mu_\gamma$ contains one node for each of the possible accepting messages from the $k$ players, i.e., each vertex in $U^\mu_\gamma$ corresponds to $(m_1, \dots, m_k) \in (\{0, 1\}^\ell)^k$ where in the protocol $\pi$ the referee, on an advice $\mu$ and a random string $\gamma$, accepts if the messages he received from the $k$ players are $m_1, \dots, m_k$.
\item We add an edge between $x_j \in W^\mu_j$ and $(m_1, \dots, m_k) \in U^\mu_\gamma$ if $m_j$ is equal to the message that $j$ sends on an input $x_j$ and a random string $\gamma$ in the protocol $\pi$.
\end{itemize}
Observe that there is a bijection between labelings of $\Gamma^\mu$
and elements of $A_1 \times \cdots \times A_k$.

Now consider a labeling $S\subseteq W^\mu$ of $\Gamma^\mu$ and the corresponding $(x_1, \dots, x_k) \in A_1 \times \cdots \times A_k$.
For each random string $\gamma \in\{0,1\}^{r}$,
observe that the referee accepts on an input $(x_1, \dots, x_k)$, an advice $\mu$, and a random string $\gamma$ if and only if there is a vertex $u\in U_\gamma$ (corresponding to the messages sent by the players) that has an edge to every vertex in $S$.
%
%
Therefore, the acceptance probability of the protocol on advice $\mu$ is 
the same as the fraction of left super-nodes covered by $S$.
The completeness and soundness then easily follows:

{\bf Completeness.}
If there exists $(x_1, \dots, x_k) \in A_1 \times \cdots \times A_k$ such that $f_{N,k}(x_1,\ldots,x_k)=1$, then there is an advice $\mu \in \{0, 1\}^w$
on which the referee always accepts for this input $(x_1, \dots, x_k)$,
meaning that the corresponding labeling covers every left super-node of $\Gamma^\mu$, i.e., $\maxcover(\Gamma^{\mu}) = 1$.

{\bf Soundness.}
If $f_{N,k}(x_1,\ldots,x_k) = 0$ for every $(x_1, \dots, x_k) \in A_1 \times \cdots \times A_k$, then, for any advice $\mu \in \{0, 1\}^w$,
the referee accepts with probability at most $s$ on every input $(x_1, \dots, x_k) \in A_1 \times \cdots \times A_k$. This means that, for any $\mu \in \{0, 1\}^w$, no labeling covers more than $s$ fraction of left the super-nodes. In other words, $\maxcover(\Gamma^\mu) \leq s$ for all $\mu \in \{0, 1\}^w$.  
\end{proof}

For the rest of this subsection, we will use the following shorthand. Let $\Gamma = (U = U_1 \cup \cdots \cup U_q, W = W_1 \cup \cdots \cup W_h; E)$ be a label cover instance, and we use the shorthand $\Gamma(N,k,r,\ell)$ to say that the label cover instance has the following parameters:
\begin{itemize}
\item $\Gamma$ has at most $Nk$ right nodes, i.e., $|W| \leq Nk$,
\item $\Gamma$ has $k$ right super nodes, i.e., $h = k$,
\item $\Gamma$ has $2^r$ left super nodes, i.e., $q = 2^r$,
\item $\Gamma$ has left alphabet size of at most $2^{\ell k}$, i.e., $|U_1|, \dots, |U_q| \leq 2^{\ell k}$.
\end{itemize}

The rest of this section is devoted to combining Theorem~\ref{thm:meta-maxcov} with the results in Section~\ref{sec:typeA}
to obtain conditional hardness for the gap-\maxcover problem, assuming that we have efficient protocols with certain parameters. These protocols will be devised in the three subsequent sections.

\paragraph{Understanding the Parameters.}  Before we state the exact dependency of parameters, let us first discuss some intuition behind it. First of all, if we start with an instance of $\PSP(k, \cF, N)$, Theorem~\ref{thm:meta-maxcov} will produce $2^{w}$ instances of $\Gamma(N, k, r, \ell)$. Roughly speaking, since we want the lower bounds from \PSP to translate to \maxcover, we would like the number of instances to be $N^{o(1)}$, meaning that we want $w = o(\log N)$. Recall that in all function families we consider $m = \Theta_k(\log N)$. Hence, this requirement is the same as $w = o_k(m)$. Moreover, we would like the instance size of $\Gamma(N, k, r, \ell)$ to also be $O_k(N)$, meaning that the number of left vertices, $2^{r + \ell k}$ has to be $O_k(N)$. Thus, it suffices to have a protocol where $r + \ell k = o_k(m)$.

If we additionally want the hardness to translate also to $k$-\domset, the parameter dependencies become more subtle. Specifically, applying Theorem~\ref{lem:red-maxcov-domset} to the \maxcover instances results in a blow-up of $\sum_{j \in [q]} k^{|U_j|} = 2^r \cdot k^{2^{\ell k}} = 2^{r + (\log k) \cdot 2^{\ell k}}$. We also want this to be at most $N^{o(1)}$, meaning that we need $r + (\log k) \cdot 2^{\ell k} = o(\log N) = o_k(m)$. In other words, it suffices for us to require that $\ell k < \nicefrac{(\log m)}{\beta}$ for some constant $\beta > 1$. The exact parameter dependencies are formalized below.

\subsubsection*{\underline{\SETH}}

\begin{corollary}\label{cor:labSETH}
For any $c\in\mathbb N$, let $ \cF^{\SD}_c$ be the family of Boolean functions as defined in Definition~\ref{def:SETHFam}. 
For every $\delta>0$, suppose there exists a $(w,r,\ell,s)$-efficient protocol for  $\SD_{m,k}$ in the $k$-player \SMP model for every $k \in \N$ and every $m\in\mathbb N$, such that $w\le \delta m$ and $r+\ell k=o_k(m)$.
Then, assuming \SETH, for every $\varepsilon>0$ and integer $k>1$, no $O(N^{ k (1- \varepsilon)})$-time algorithm  can distinguish between $\maxcover(\Gamma) = 1$
and $\maxcover(\Gamma) \leq s$ for any label cover instance $\Gamma(N,k,r,\ell)$ for all $N \in \N$.
Moreover, if $\ell<\nicefrac{(\log m)}{\beta\cdot k}$ for some constant $\beta>1$, then assuming \SETH, for every $\varepsilon>0$ and integer $k>1$,  no $O(N^{ k (1- \varepsilon)})$-time algorithm  can distinguish between $\domset(G)=k$
and $\domset(G) \geq \left(\frac{1}{s}\right)^{\frac{1}{k}}\cdot k$ for any graph $G$ with at most $O_k(N)$ vertices, for all $N\in\mathbb N$.
\end{corollary}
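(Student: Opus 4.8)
The plan is to chain together three reductions: the \SETH-based hardness of $\PSP(k,\cF^{\SD}_c,N)$ from Proposition~\ref{prop:SETHPSP}, the generic protocol-to-\maxcover reduction of Theorem~\ref{thm:meta-maxcov}, and---for the second half of the statement---the \maxcover-to-\domset reduction of Theorem~\ref{lem:red-maxcov-domset}. Throughout we may assume $\varepsilon\in(0,1)$, since ruling out $O(N^{k(1-\varepsilon)})$-time algorithms for a small $\varepsilon$ also rules them out for every larger $\varepsilon$.

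First I would fix an integer $k>1$ and $\varepsilon\in(0,1)$, and apply Proposition~\ref{prop:SETHPSP} with parameter $\varepsilon/2$ to obtain a constant $c=c_{\varepsilon/2}\in\N$ such that, assuming \SETH, no $O(N^{k(1-\varepsilon/2)})$-time algorithm solves $\PSP(k,\cF^{\SD}_c,N)$ for all $N$. Crucially, only now---after $c$ is fixed---would I invoke the hypothesis of the corollary with $\delta:=\varepsilon/(2c)$, obtaining a $(w,r,\ell,s)$-efficient protocol for $\SD_{m,k}$ for every $m$, with $w\le\delta m$ and $r+\ell k=o_k(m)$. Taking $m=m_c(N,k)=c\lceil k\log N\rceil=\Theta_k(\log N)$ and feeding this protocol into Theorem~\ref{thm:meta-maxcov}, a $\PSP(k,\cF^{\SD}_c,N)$ instance is transformed into $2^w$ label cover instances $\{\Gamma^\mu\}_{\mu\in\{0,1\}^w}$, each of type $\Gamma(N,k,r,\ell)$, in time $2^{w+r+\ell k}\poly(m)$, so that a YES instance gives $\maxcover(\Gamma^\mu)=1$ for some $\mu$ and a NO instance gives $\maxcover(\Gamma^\mu)\le s$ for all $\mu$. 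Given a hypothetical $O(N^{k(1-\varepsilon)})$-time algorithm distinguishing $\maxcover(\Gamma)=1$ from $\maxcover(\Gamma)\le s$ on instances $\Gamma(N,k,r,\ell)$, I would run it on each $\Gamma^\mu$ and output YES iff some run does; this correctly decides $\PSP$ (in the YES case the run on the good $\mu$ answers correctly, while the remaining $\Gamma^\mu$, which may violate the promise, are harmless). Since $w\le\delta m\le(\varepsilon/2)(k\log N+1)$, we have $2^w=O(N^{k\varepsilon/2})$, so the total running time $2^w\cdot O(N^{k(1-\varepsilon)})+2^{w+r+\ell k}\poly(m)$ is $O(N^{k(1-\varepsilon/2)})+O(N^{k\varepsilon/2+o_k(1)})=O(N^{k(1-\varepsilon/2)})$ (using $k>1$, $\varepsilon<1$), contradicting Proposition~\ref{prop:SETHPSP}. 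This proves the first claim.

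For the second claim, assume in addition $\ell<(\log m)/(\beta k)$ for a constant $\beta>1$, and post-compose each $\Gamma^\mu$ with Theorem~\ref{lem:red-maxcov-domset} to get a $k$-\domset instance $G^\mu$ with $\domset(G^\mu)=k$ when $\maxcover(\Gamma^\mu)=1$ and $\domset(G^\mu)\ge(1/s)^{1/k}\cdot k$ when $\maxcover(\Gamma^\mu)\le s$. Here I would verify the size bound: $|V(G^\mu)|=|W^\mu|+\sum_{j\in[2^r]}k^{|U^\mu_j|}\le Nk+2^r\cdot k^{2^{\ell k}}$; the hypothesis $\ell k<(\log m)/\beta$ gives $2^{\ell k}<m^{1/\beta}$, hence $k^{2^{\ell k}}<2^{(\log k)m^{1/\beta}}$, which (since $m=\Theta_k(\log N)$ and $\beta>1$) equals $2^{o_k(\log N)}=N^{o_k(1)}$; together with $2^r=N^{o_k(1)}$ (because $r\le r+\ell k=o_k(m)=o_k(\log N)$) this yields $|V(G^\mu)|=O_k(N)$ and a reduction running in $O_k(N^{1+o(1)})$ time. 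Running a hypothetical $O(N^{k(1-\varepsilon)})$-time distinguisher for $\domset(G)=k$ versus $\domset(G)\ge(1/s)^{1/k}\cdot k$ on each of the $O_k(N)$-vertex graphs $G^\mu$ and OR-ing the answers decides $\PSP(k,\cF^{\SD}_c,N)$, and the same arithmetic bounds the total time by $O(N^{k(1-\varepsilon/2)})$, again contradicting Proposition~\ref{prop:SETHPSP}.

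The part that requires care is the ordering of the quantifiers over the constants: the bound $w\le\delta m$ is only exploitable once $\delta$ has been chosen small relative to the $c=c_{\varepsilon/2}$ coming from the Sparsification-Lemma reformulation of \SETH, which is precisely why the corollary posits an efficient protocol for every $\delta>0$; with $\delta=\varepsilon/(2c)$ the $2^w$-fold branching contributes only an $N^{k\varepsilon/2}$ factor. The second mild subtlety is checking that the extra assumption $\ell<(\log m)/(\beta k)$ is exactly what keeps the $\sum_j k^{|U_j|}$ blow-up in Theorem~\ref{lem:red-maxcov-domset} sub-polynomial in $N$; beyond these two points---and the routine ``OR over the $2^w$ advice strings'' handling of the distinguisher's semantics---the argument is pure bookkeeping.
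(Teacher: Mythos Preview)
Your proposal is correct and follows essentially the same approach as the paper: contradict Proposition~\ref{prop:SETHPSP} by choosing $\delta$ after $c$ is fixed (the paper uses $\delta=\varepsilon/(3c_\varepsilon)$ where you use $\varepsilon/(2c)$), feed the protocol into Theorem~\ref{thm:meta-maxcov}, run the hypothetical distinguisher on all $2^w$ instances, and for the second part post-compose with Theorem~\ref{lem:red-maxcov-domset} while checking that $2^r$ and $k^{2^{\ell k}}$ are both $N^{o(1)}$. If anything, your write-up is more careful than the paper's about the quantifier order and the arithmetic of the size bounds.
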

\begin{proof}
The proof of the first part of the theorem statement is by contradiction. Suppose there is an $O(N^{ k (1- \varepsilon)})$-time algorithm $\mathcal A$ for some fixed constant $\varepsilon>0$ and integer $k>1$ which can distinguish between $\maxcover(\Gamma) = 1$
and $\maxcover(\Gamma) \leq s$ for any label cover instance $\Gamma(N,k,r,\ell)$ for all $N \in \N$. From Proposition~\ref{prop:SETHPSP}, we have that there exists $c_\varepsilon\in\mathbb N$ such that no $O(N^{ k (1- \nicefrac{\varepsilon}{2})})$-time algorithm can solve $\PSP(k, \cF^{\SD}_{c_\varepsilon}, N)$ for all $N \in \N$. Fix $\delta=\nicefrac{\varepsilon}{3c_\varepsilon}$. Next, by considering Theorem~\ref{thm:meta-maxcov} for the case of $(w,r,\ell,s)$-efficient protocols, we have that there are $2^w$ label cover instances $\{\Gamma^\mu\}_{\mu\in\{0,1\}^w}$ which can be constructed in $2^{\delta m(1+o_{k}(1))}$ time. Note that  $2^{\delta m(1+o_{k}(1))}=N^{\nicefrac{k\varepsilon}{3}(1+o_{k}(1))}$ by our choice of $\delta$. Thus, we can run $\mathcal{A}$ on each $\Gamma^\mu$ and solve  $\PSP(k, \cF^{\SD}_{c}, N)$ for all $N,k \in \N$ in time less than $N^{k(1-\nicefrac{\varepsilon}{2})}$. This contradicts Proposition~\ref{prop:SETHPSP}.

To prove the second part of the theorem statement, we apply the reduction described in Theorem~\ref{lem:red-maxcov-domset} and note that $2^r=N^{o(1)}$ and $k^{2^{\ell k}}=2^{^{(\log_2 k) \cdot (m(N,k))^{1/\beta}}}=N^{o(1)}$.
\end{proof}

The proof of Theorem~\ref{thm:domSETH} follows by plugging in the parameters of the protocol described in Corollary~\ref{cor:CC} to the above corollary.

\subsubsection*{\underline{\ETH}}

\begin{corollary}\label{cor:labETH}
Let $ \cF^{\EQ}$ be the family of Boolean functions as defined in Definition~\ref{def:fam-eq}. 
Suppose there exists a $(w,r,\ell,s)$-efficient protocol for  $\EQ_{m,k,t}$ in the $k$-player \SMP model for every $k,t,m \in \N$ such that $w+r+\ell k=o_k(m)$.
Then, assuming \ETH, for any computable  function $T: \N \to \N$, there is no $T(k) \cdot N^{o(k)}$ time algorithm that can distinguish between $\maxcover(\Gamma) = 1$
and $\maxcover(\Gamma) \leq s$ for any label cover instance $\Gamma(N,k,r,\ell)$ for all $N,k \in \N$.
Moreover, if $\ell<\nicefrac{(\log m)}{\beta\cdot k}$ for some constant $\beta>1$, then assuming \ETH, for any computable  function $T: \N \to \N$, there is no $T(k) \cdot N^{o(k)}$ time algorithm that can distinguish between $\domset(G)=k$
and $\domset(G) \geq \left(\frac{1}{s}\right)^{\frac{1}{k}}\cdot k$ for any graph $G$ with at most $O_k(N)$ vertices, for all $N,k\in\mathbb N$.
\end{corollary}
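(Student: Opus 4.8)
The plan is to mirror the proof of Corollary~\ref{cor:labSETH} almost verbatim, substituting the \ETH-hardness of $\PSP(k,\cF^{\EQ},N)$ from Lemma~\ref{lem:eth-psp} for the \SETH-hardness of $\PSP(k,\cF^{\SD}_c,N)$ used there, and then checking that the running time $T(k)\cdot N^{o(k)}$ survives the chain of reductions. Concretely, I would argue by contradiction: suppose that for some computable $T$ there is an algorithm $\mathcal{A}$ running in time $T(k)\cdot N^{o(k)}$ that distinguishes $\maxcover(\Gamma)=1$ from $\maxcover(\Gamma)\le s$ on every instance $\Gamma(N,k,r,\ell)$. Given an instance $(A_1,\dots,A_k)$ of $\PSP(k,\cF^{\EQ},N)$, I would feed the hypothesized $(w,r,\ell,s)$-efficient protocol for $\EQ_{m(N,k),k,t(k)}$ (with $t(k)=k+\binom{k}{2}+\binom{k}{3}$ and $m(N,k)$ as in Definition~\ref{def:fam-eq}) into Theorem~\ref{thm:meta-maxcov}, obtaining the $2^w$ label cover instances $\{\Gamma^\mu\}_{\mu\in\{0,1\}^w}$, each of the form $\Gamma(N,k,r,\ell)$, with the completeness/soundness dichotomy stated there.

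The rest of the first part is arithmetic. Since $m(N,k)=t(k)(1+k\lceil\log N\rceil)=\Theta_k(\log N)$ and $w+r+\ell k=o_k(m)$, we get $2^{w+r+\ell k}=N^{o_k(1)}$, so: the construction of $\{\Gamma^\mu\}$ runs in $N^{o_k(1)}\cdot\poly(m)$ time, there are $2^w=N^{o_k(1)}$ instances, and each $\Gamma^\mu$ has at most $Nk+2^{r+\ell k}=O_k(N)$ vertices. Running $\mathcal{A}$ on every $\Gamma^\mu$ thus costs $N^{o_k(1)}\cdot T(k)\cdot(O_k(N))^{o(k)}$; writing the vertex count as at most $C(k)\cdot N$ and using $(C(k)N)^{g(k)}=C(k)^{g(k)}N^{g(k)}$ with $g(k)=o(k)$, this is $T'(k)\cdot N^{o(k)}$ for a new computable $T'$. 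By Theorem~\ref{thm:meta-maxcov}, outputting YES iff $\mathcal{A}$ reports $\maxcover(\Gamma^\mu)=1$ for some $\mu$ decides $\PSP(k,\cF^{\EQ},N)$ correctly for all $N,k$, contradicting Lemma~\ref{lem:eth-psp}.

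For the ``moreover'' part I would additionally compose with the \maxcover-to-\domset reduction of Theorem~\ref{lem:red-maxcov-domset} applied to each $\Gamma^\mu$, producing a $k$-\domset instance $G^\mu$. The one extra estimate is that the blow-up $\sum_j k^{|U_j^\mu|}\le 2^r\cdot k^{2^{\ell k}}$ stays $N^{o_k(1)}$: from $\ell k<(\log m)/\beta$ we get $2^{\ell k}<m^{1/\beta}$, hence $k^{2^{\ell k}}=2^{(\log_2 k)\cdot 2^{\ell k}}<2^{(\log_2 k)\cdot m^{1/\beta}}$, and since $\beta>1$ and $m=\Theta_k(\log N)$ this exponent is $o_k(\log N)$, so $k^{2^{\ell k}}=N^{o_k(1)}$. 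Thus each $G^\mu$ has $O_k(N)$ vertices, the reduction is polynomial time, and the gap becomes $\domset(G^\mu)=k$ versus $\domset(G^\mu)\ge(1/s)^{1/k}\cdot k$; the same counting as above then contradicts Lemma~\ref{lem:eth-psp}. I do not anticipate a genuine obstacle — the content is entirely parameter bookkeeping — but the subtle point worth stating carefully is that every blow-up incurred (the number $2^w$ of instances, their vertex counts, the construction time, and the \domset blow-up) must be either a $k$-dependent multiplicative factor on $N$ or an $N^{o_k(1)}$ factor, so that composing with $\mathcal{A}$'s $T(k)\cdot N^{o(k)}$ running time yields $T'(k)\cdot N^{o(k)}$ rather than something like $N^{\Omega(k)}$; this is exactly what the hypotheses $w+r+\ell k=o_k(m)$ and $\ell k<(\log m)/\beta$ buy us.
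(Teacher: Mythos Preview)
Your proposal is correct and follows essentially the same approach as the paper's proof: argue by contradiction from Lemma~\ref{lem:eth-psp}, apply Theorem~\ref{thm:meta-maxcov} to produce the $2^w$ label cover instances, verify via $m(N,k)=\Theta_k(\log N)$ that all blow-ups are $N^{o_k(1)}$ so that running $\mathcal{A}$ on each instance stays within $T'(k)\cdot N^{o(k)}$, and for the ``moreover'' part compose with Theorem~\ref{lem:red-maxcov-domset} using the bound $k^{2^{\ell k}}\le 2^{(\log_2 k)\cdot m^{1/\beta}}=N^{o_k(1)}$. If anything, your write-up is more explicit about the parameter bookkeeping than the paper's own proof.
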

\begin{proof}
The proof of the first part of the theorem statement is by contradiction.  Suppose there is an algorithm $\mathcal A$ running in time  $\widetilde T(k)\cdot N^{o(k)}$ for some computable  function $\widetilde T:\mathbb N\to\mathbb N$ that can distinguish between $\maxcover(\Gamma) = 1$
and $\maxcover(\Gamma) \leq s$ for any label cover instance $\Gamma(N,k,r,\ell)$ for all $N,k \in \N$. From Lemma~\ref{lem:eth-psp}, we have that for any computable  function $T: \N \to \N$, there is no $T(k) \cdot N^{o(k)}$ time algorithm that can solve $\PSP(k, \cF^{\EQ}, N)$ for every $N, k \in \N$.  Next, by considering Theorem~\ref{thm:meta-maxcov} for the case of $(w,r,\ell,s)$-efficient protocols, we have that there are $2^w$ label cover instances $\{\Gamma^\mu\}_{\mu\in\{0,1\}^w}$ which can be constructed in $2^{o_k(m)}$ time. Note that  $2^{o_k(m)}=O_k(N^{o(1)})$ by the choice of $m(N,k)$ in Definition~\ref{def:fam-eq}. Thus, we can run $\mathcal{A}$ on each $\Gamma^\mu$ and solve  $\PSP(k, \cF^{\EQ}, N)$ for all $N,k \in \N$ in time   $\widetilde T(k)\cdot N^{o(k)}$. This contradicts Lemma~\ref{lem:eth-psp}.

To prove the second part of the theorem statement, we apply the reduction described in Theorem~\ref{lem:red-maxcov-domset} and note that $2^r=N^{o(1)}$ and $k^{2^{\ell k}}=2^{^{(m(N,k))^{1/\beta}\cdot \log_2 k}}=N^{o(1)}$.
\end{proof}

The proof of Theorem~\ref{thm:domETH} follows by plugging in the parameters of the protocol described in Corollary~\ref{cor:rep-multeq-protocol1} to the above corollary.

\subsubsection*{\underline{\Wone}}

\begin{corollary}\label{cor:labW1}
Let $ \cF^{\EQ}$ be the family of Boolean functions as defined in Definition~\ref{def:fam-eq}. 
Suppose there exists a $(w,r,\ell,s)$-efficient protocol for  $\EQ_{m,k,t}$ in the $k$-player \SMP model for every $k,t,m \in \N$ such that $w+r+\ell k<\nicefrac{m}{tk}$.
Then, assuming \Wone, for any computable function $T:\mathbb N\to\mathbb N$, there is no  $T(k)\cdot \poly(N)$-time algorithm that can distinguish between $\maxcover(\Gamma) = 1$
and $\maxcover(\Gamma) \leq s$ for any label cover instance $\Gamma(N,k,r,\ell)$ for all $N,k \in \N$.
Moreover, if $r<\nicefrac{m}{2tk}$ and $\ell<\nicefrac{(\log m)}{\beta\cdot k}$ for some constant $\beta>1$, then assuming \Wone, for any computable function $T:\mathbb N\to\mathbb N$, there is no  $T(k)\cdot \poly(N)$-time algorithm that can  distinguish between $\domset(G)=k$
and $\domset(G) \geq \left(\frac{1}{s}\right)^{\frac{1}{k}}\cdot k$ for any graph $G$ with at most $O_k(N)$ vertices, for all $N,k\in\mathbb N$.
\end{corollary}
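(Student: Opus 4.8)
The plan is to follow the template of the proofs of Corollaries~\ref{cor:labSETH} and~\ref{cor:labETH}, with Lemma~\ref{lem:w1-psp} (the $\W[1]$-hardness of $\PSP(k,\cF^{\EQ})$) playing the role that Proposition~\ref{prop:SETHPSP} and Lemma~\ref{lem:eth-psp} played there, Theorem~\ref{thm:meta-maxcov} converting the hypothesized protocol into a gap-\maxcover reduction, and Theorem~\ref{lem:red-maxcov-domset} appended for the ``moreover'' part. The entire proof is parameter bookkeeping: every exponential blow-up along the chain $\PSP\to\{\Gamma^\mu\}\to\{G^\mu\}$ must be swallowed by a $\poly(N)$ factor (times the FPT factor $T(k)$), and the numerical hypotheses $w+r+\ell k<m/(tk)$ and, for the \domset part, $r<m/(2tk)$ and $\ell<(\log m)/(\beta k)$, are tailored precisely so that this goes through.

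For the first statement I would argue by contradiction: assume a computable $T$ and a $T(k)\cdot\poly(N)$-time algorithm $\cA$ distinguishing $\maxcover(\Gamma)=1$ from $\maxcover(\Gamma)\le s$ for all instances $\Gamma(N,k,r,\ell)$. Given an instance $(A_1,\dots,A_k)$ of $\PSP(k,\cF^{\EQ},N)$, I would feed the assumed $(w,r,\ell,s)$-efficient protocol for $\EQ_{m(N,k),k,t(k)}$ (with $m,t$ as in Definition~\ref{def:fam-eq}) into Theorem~\ref{thm:meta-maxcov}, obtaining $2^w$ label cover instances $\{\Gamma^\mu\}$, each of type $\Gamma(N,k,r,\ell)$, in time $2^{w+r+\ell k}\poly(m(N,k))$. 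The crucial computation is $m(N,k)/(t(k)k)=(1+k\lceil\log N\rceil)/k\le 1+\lceil\log N\rceil$, so $w+r+\ell k<m/(tk)$ forces $2^{w+r+\ell k}=O(N)$; since $m(N,k)=\poly(k)\cdot\Theta(\log N)$, the reduction runs in $\poly(N,k)$ time and emits only $O(N)$ instances. Running $\cA$ on all of them costs $O(N)\cdot T(k)\cdot\poly(N)=T(k)\cdot\poly(N)$, and by the completeness/soundness clauses of Theorem~\ref{thm:meta-maxcov} (output YES iff $\cA$ reports the ``$=1$'' side on some $\Gamma^\mu$) this decides $\PSP(k,\cF^{\EQ},N)$, contradicting Lemma~\ref{lem:w1-psp}.

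For the ``moreover'' statement I would additionally push each $\Gamma^\mu$ through Theorem~\ref{lem:red-maxcov-domset}, producing graphs $G^\mu$ with $\domset(G^\mu)=k$ when $\maxcover(\Gamma^\mu)=1$ and $\domset(G^\mu)\ge(1/s)^{1/k}\cdot k$ when $\maxcover(\Gamma^\mu)\le s$. The new blow-up is $|V(G^\mu)|=|W^\mu|+\sum_\gamma k^{|U^\mu_\gamma|}\le Nk+2^r\cdot k^{2^{\ell k}}$; here $r<m/(2tk)=\tfrac{1}{2k}+\tfrac{1}{2}\lceil\log N\rceil$ gives $2^r=O(\sqrt N)$, while $\ell<(\log m)/(\beta k)$ gives $2^{\ell k}<m^{1/\beta}$, whence $k^{2^{\ell k}}\le 2^{(\log_2 k)m^{1/\beta}}=2^{\poly(k)\cdot(\log N)^{1/\beta}}=N^{o_k(1)}$ since $\beta>1$. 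Hence $|V(G^\mu)|=O_k(N)$ and the combined reduction still runs in $\poly(N,k)$ time, so a $T(k)\cdot\poly(N)$-time algorithm separating $\domset(G)=k$ from $\domset(G)\ge(1/s)^{1/k}\cdot k$ on $O_k(N)$-vertex graphs would, composed with this chain, decide $\PSP(k,\cF^{\EQ},N)$ in FPT time, again contradicting Lemma~\ref{lem:w1-psp}.

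I do not anticipate a genuine obstacle: the statement is a mechanical composition of results already established. The only delicate point --- and the sole place where the exact exponents in the hypotheses are used --- is verifying that the three sources of blow-up (the $2^w$ copies, the $2^{r+\ell k}$ left vertices, and the $k^{2^{\ell k}}$ dominating-set gadget) are each $\poly(N)$ for fixed $k$, which is exactly what the conditions on $w,r,\ell$ encode and what the estimates above confirm.
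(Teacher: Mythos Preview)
Your proposal is correct and follows essentially the same approach as the paper: contradiction via Lemma~\ref{lem:w1-psp}, invoke Theorem~\ref{thm:meta-maxcov} to produce the $2^w$ instances $\{\Gamma^\mu\}$, verify $2^{w+r+\ell k}\cdot\poly(m(N,k))=\poly(N)$ using $m(N,k)/(t(k)k)\le 1+\lceil\log N\rceil$, and for the ``moreover'' part append Theorem~\ref{lem:red-maxcov-domset} with the same two estimates $2^r=O(\sqrt N)$ and $k^{2^{\ell k}}=N^{o(1)}$. Your write-up is in fact slightly more explicit than the paper's in spelling out the arithmetic, but there is no substantive difference.
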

\begin{proof}
The proof of the first part of the theorem statement is by contradiction. Suppose there is an algorithm $\mathcal A$ running in time  $\widetilde T(k)\cdot \poly(N)$ for some computable  function $\widetilde T:\mathbb N\to\mathbb N$ that can distinguish between $\maxcover(\Gamma) = 1$
and $\maxcover(\Gamma) \leq s$ for any label cover instance $\Gamma(N,k,r,\ell)$ for all $N,k \in \N$. From Lemma~\ref{lem:w1-psp}, we have that for any computable function $T: \N \to \N$, there is no $T(k) \cdot \poly(N)$ time algorithm that can solve $\PSP(k, \cF^{\EQ}, N)$ for every $N, k \in \N$.  Next, by considering Theorem~\ref{thm:meta-maxcov} for the case of $(w,r,\ell,s)$-efficient protocols, we have that there are $2^w$ label cover instances $\{\Gamma^\mu\}_{\mu\in\{0,1\}^w}$ which can be constructed in $2^{\nicefrac{m(N,k)}{k\cdot t(k)}}\cdot \poly(m(N,k))$ time. Note that  $2^{\nicefrac{m(N,k)}{k\cdot t(k)}}=O(N)$ and $\poly(m(N,k))=N^{o(1)}$ by the choice of $m(N,k)$ in Definition~\ref{def:fam-eq}. Thus, we can run $\mathcal{A}$ on each $\Gamma^\mu$ and solve  $\PSP(k, \cF^{\EQ}, N)$ for all $N,k \in \N$ in time less than $\widetilde T(k)\cdot \poly(N)$. This contradicts Lemma~\ref{lem:w1-psp}.

To prove the second part of the theorem statement, we apply the reduction described in Theorem~\ref{lem:red-maxcov-domset} and note that $2^r=O(\sqrt N)$ and $k^{2^{\ell k}}=2^{^{(m(N,k))^{1/\beta}\cdot \log_2 k}}=N^{o(1)}$.
\end{proof}

The proof of Theorem~\ref{thm:domW1} follows by plugging in the parameters of the protocol described in Corollary~\ref{cor:rep-multeq-protocol1} to the above corollary.

\subsubsection*{\underline{\ksum Hypothesis}}


\begin{corollary}\label{cor:labksum}
Let $ \cF^{\SZ}$ be the family of Boolean functions as defined in Definition~\ref{def:ksumFam}. 
Suppose there exists a $(w,r,\ell,s)$-efficient protocol for  $\SZ_{m,k}$ in the $k$-player \SMP model for every $m, k \in \N$, such that $w+r+\ell k=o_k(m)$.
Then assuming the \ksum Hypothesis, for every integer $k \geq 3$ and every $\varepsilon > 0$, no $O(N^{\lceil k/2 \rceil - \varepsilon})$-time algorithm can distinguish between $\maxcover(\Gamma) = 1$
and $\maxcover(\Gamma) \leq s$ for any label cover instance $\Gamma(N,k,r,\ell)$ for all $N \in \N$.
Moreover, if $\ell<\nicefrac{(\log m)}{\beta\cdot k}$ for some constant $\beta>1$, then assuming the \ksum Hypothesis, for every $\varepsilon>0$  no $O(N^{  {\lceil k/2 \rceil - \varepsilon}})$-time algorithm  can distinguish between $\domset(G)=k$
and $\domset(G) \geq \left(\frac{1}{s}\right)^{\frac{1}{k}}\cdot k$ for any graph $G$ with at most $O_k(N)$ vertices for all $N\in\mathbb N$.
\end{corollary}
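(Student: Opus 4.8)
The plan is to transcribe, almost verbatim, the proofs of Corollaries~\ref{cor:labSETH}, \ref{cor:labETH} and \ref{cor:labW1}: the $k$-SUM case has the identical skeleton, and only the invoked hardness assumption (now Proposition~\ref{prop:ksumPSP}) and the target running time ($O(N^{\lceil k/2\rceil-\varepsilon})$) change. I would argue both parts by contradiction.

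\textbf{First part (\maxcover).} Suppose some algorithm $\mathcal{A}$ runs in time $O(N^{\lceil k/2\rceil-\varepsilon})$ and distinguishes $\maxcover(\Gamma)=1$ from $\maxcover(\Gamma)\le s$ on all instances $\Gamma(N,k,r,\ell)$. I would feed the assumed $(w,r,\ell,s)$-efficient protocol for $\SZ_{m,k}$ into Theorem~\ref{thm:meta-maxcov}, which turns an instance $(A_1,\dots,A_k)$ of $\PSP(k,\cF^{\SZ},N)$ into $2^w$ label cover instances $\{\Gamma^\mu\}_{\mu\in\{0,1\}^w}$, each of the form $\Gamma(N,k,r,\ell)$, in time $2^{w+r+\ell k}\poly(m(N,k))$. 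Since $m(N,k)=2k\lceil\log N\rceil=\Theta_k(\log N)$ by Definition~\ref{def:ksumFam}, the hypothesis $w+r+\ell k=o_k(m)$ makes both $2^w$ and the construction time $N^{o(1)}$. Running $\mathcal{A}$ on every $\Gamma^\mu$ and accepting iff some $\Gamma^\mu$ has $\maxcover=1$ decides $\PSP(k,\cF^{\SZ},N)$ correctly, by the completeness and soundness guarantees of Theorem~\ref{thm:meta-maxcov}, in total time $N^{o(1)}\cdot O(N^{\lceil k/2\rceil-\varepsilon})=O(N^{\lceil k/2\rceil-\varepsilon/2})$ for large $N$, contradicting Proposition~\ref{prop:ksumPSP}.

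\textbf{Second part (\domset).} Here I would additionally use $\ell<(\log m)/(\beta k)$ and apply the reduction of Theorem~\ref{lem:red-maxcov-domset} to each $\Gamma^\mu$, obtaining a graph $G^\mu$ with $|V(G^\mu)|=|W^\mu|+\sum_{j\in[q]}k^{|U^\mu_j|}\le Nk+2^r\cdot k^{2^{\ell k}}$. As before $2^r=N^{o(1)}$, and $\ell k<(\log m)/\beta$ gives $2^{\ell k}<m^{1/\beta}=(2k\lceil\log N\rceil)^{1/\beta}$, so $k^{2^{\ell k}}=2^{(\log_2 k)\cdot 2^{\ell k}}=N^{o(1)}$; hence $|V(G^\mu)|=O_k(N)$ and the reduction runs in time $O(|W^\mu|\cdot\sum_j k^{|U^\mu_j|})=N^{o(1)}\cdot O_k(N)$. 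The gap guarantee of Theorem~\ref{lem:red-maxcov-domset} gives $\maxcover(\Gamma^\mu)=1\Rightarrow\domset(G^\mu)=k$ and $\maxcover(\Gamma^\mu)\le s\Rightarrow\domset(G^\mu)\ge(1/s)^{1/k}\cdot k$, so an $O(N^{\lceil k/2\rceil-\varepsilon})$-time distinguisher between $\domset(G)=k$ and $\domset(G)\ge(1/s)^{1/k}\cdot k$ on $O_k(N)$-vertex graphs, run over all $G^\mu$, would distinguish $\maxcover(\Gamma)=1$ from $\maxcover(\Gamma)\le s$ within the same time up to $N^{o(1)}$ factors, contradicting the first part. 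Plugging in the $\SZ$-protocol of Section~\ref{sec:ksum} then yields Theorem~\ref{thm:domksum}.

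\textbf{Main obstacle.} I expect no genuinely new difficulty in this corollary: the only delicate point is bookkeeping the $N^{o(1)}$ slack so that the composed running time stays strictly below $N^{\lceil k/2\rceil-\varepsilon}$ (equivalently, keeping the $o_k$ versus $o$ quantifiers straight for fixed $k$ and $\varepsilon$), which is handled exactly as in Corollary~\ref{cor:labSETH}. The substantive ingredient — an efficient $\SZ$-protocol with $w+r+\ell k=o_k(m)$ and $\ell k<(\log m)/\beta$ — lies outside this corollary and is constructed in Section~\ref{sec:ksum}.
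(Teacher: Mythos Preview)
Your proposal is correct and follows essentially the same approach as the paper: argue the first part by contradiction via Theorem~\ref{thm:meta-maxcov} and Proposition~\ref{prop:ksumPSP}, then derive the second part by composing with Theorem~\ref{lem:red-maxcov-domset} and checking that $2^r$ and $k^{2^{\ell k}}$ are both $N^{o(1)}$. The paper's own proof is in fact even terser than yours, but the skeleton and all ingredients are identical.
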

\begin{proof}
The proof of the first part of the theorem statement is by contradiction.  Suppose there is an algorithm $\mathcal A$ running in time  $O(N^{\lceil k/2 \rceil - \varepsilon}) $ for some fixed constant $\varepsilon>0$ and some integer $k\ge 3$ that can distinguish between $\maxcover(\Gamma) = 1$
and $\maxcover(\Gamma) \leq s$ for any label cover instance $\Gamma(N,k,r,\ell)$ for all $N \in \N$. From Proposition~\ref{prop:ksumPSP}, we have that no $O(N^{\lceil k/2 \rceil - \varepsilon/2})$-time algorithm can solve $\PSP(k, \cF^{\SZ}, N)$ for all $N \in \N$.  Next, by considering Theorem~\ref{thm:meta-maxcov} for the case of $(w,r,\ell,s)$-efficient protocols, we have that there are $2^w$ label cover instances $\{\Gamma^\mu\}_{\mu\in\{0,1\}^w}$ which can be constructed in $2^{o_k(m)}$ time. Note that  $2^{o_k(m)}=O_k(N^{o(1)})$ by the choice of $m(N,k)$ in Definition~\ref{def:ksumFam}. Thus, we can run $\mathcal{A}$ on each $\Gamma^\mu$ and solve  $\PSP(k, \cF^{\SZ}, N)$ for all $N \in \N$ in time   $O(N^{\lceil k/2 \rceil - \varepsilon})$. This contradicts Proposition~\ref{prop:ksumPSP}.

To prove the second part of the theorem statement, we apply the reduction described in Theorem~\ref{lem:red-maxcov-domset} and note that $2^r=N^{o(1)}$ and $k^{2^{\ell k}}=2^{^{(m(N,k))^{1/\beta}\cdot \log_2 k}}=N^{o(1)}$.
\end{proof}

The proof of Theorem~\ref{thm:domksum} follows by plugging in the parameters of the protocol described in Corollary~\ref{cor:rep-ksum-protocol} to the above corollary.

\section{An Efficient Protocol for Set Disjointness}
\label{sec:SETH}

Set Disjointness has been extensively studied primarily in the two-player setting (i.e., $k=2$). In that setting, we know that the randomized communication complexity is $\Omega(m)$  \cite{KS92,R92,BJKS04}, where $m$ is the input size of each player. Surprisingly, \cite{AW09} showed that the \MA-complexity of two-player set disjointness is $\tilde{O}(\sqrt m)$. Their protocol was indeed an $(\tilde{O}(\sqrt{m}),O(\log m),\tilde{O}(\sqrt{m}),\nicefrac{1}{2})$-efficient protocol for the case  when $k=2$. Recently, \cite{ARW17} improved (in terms of the message size) the protocol to be an $(\nicefrac{m}{\log m},O(\log m),O((\log{m})^3),\nicefrac{1}{2})$-efficient protocol for the case when $k=2$. Both these results can be extended naturally for all $k>1$, to give an $(\nicefrac{mk}{\log m},O_k(\log m),O_k((\log{m})^3),\nicefrac{1}{2})$-efficient protocol. However, this does not suffice for proving Theorem~\ref{thm:domSETH} since we need a $(w,r,\ell,s)$-efficient protocol with $w=o(m)$ \emph{and} $\ell=o(\log m)$. Fortunately, Rubinstein \cite{R17} recently showed that the exact framework of the \MA-protocols as in $\cite{AW09,ARW17}$ but with the use of algebraic geometric codes instead of Reed Muller or Reed Solomon codes gives the desired parameters in the two-player case. Below we naturally extend Rubinstein's protocol to the $k$-player setting. This extension was suggested to us  by Rubinstein~\cite{Rub17com}.

\begin{theorem}\label{thm:CC} 
There is a polynomial function $\hat{\ell}:\mathbb N\times \mathbb N\to [1,\infty)$ such that 
for every $k\in\mathbb N$ and every $\alpha\in\mathbb N$, there is a protocol for $k$-player $\SD_{m,k}$ in the \SMP model which is an $( \nicefrac{m}{\alpha},\log_2 m,\hat{\ell}(k,\alpha),\nicefrac{1}{2})$-efficient protocol, where each player is given $m$ bits as input, and the referee is given at most $\nicefrac{m}{\alpha}$ bits of advice. 
\end{theorem}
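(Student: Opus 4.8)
The plan is to adapt the Merlin--Arthur protocol for \SD of Aaronson--Wigderson and Abboud--Rubinstein--Williams --- in the algebraic-geometric-code form used by Rubinstein in the two-player case --- to an arbitrary number of players $k$; the only genuinely new ingredient over $k=2$ is that a $k$-fold coordinate-wise product of codewords shows up, so one needs the degree-$k$ (rather than degree-$2$) closure supplied by Theorem~\ref{thm:ag-code}. Concretely, given $k$ and $\alpha$, I would first fix a prime $q>\hat{q}(k)$ that also exceeds a sufficiently large (fixed) polynomial in $k$ and $\alpha$, so that $q=\poly(k,\alpha)$ (by Bertrand) and the characteristic of $\mathbb F_{q^2}$ is larger than every quantity summed below. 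Applying Theorem~\ref{thm:ag-code} with closure degree $t=k$ gives the $n$-systematic code families $\mathcal A=\{A_n\}$, $\mathcal B=\{B_n\}$ over $\mathbb F_{q^2}$, where $B_n$ has relative distance $\ge 1/2$, $B_n$ is a degree-$k$ closure of $A_n$, and both have block length $d_n<n\hat{r}(k)$. Set $T:=\alpha\,\hat{r}(k)\,\lceil 2\log_2 q\rceil=\poly(k,\alpha)$ (by zero-padding the inputs assume $T\mid m$, which perturbs $m$ by less than $T$), $n:=m/T$, and $d:=d_n<n\hat{r}(k)\le m$. Split each player's input $x_j\in\{0,1\}^m$ into $T$ blocks $x_j^{(1)},\dots,x_j^{(T)}\in\{0,1\}^n\subseteq\mathbb F_{q^2}^n$; using the systematicity of $A_n$, player $j$ computes in $\poly(n)$ time, for each block $b$, the codeword $\widehat{x}_j^{(b)}\in\mathbb F_{q^2}^{d}$ of $A_n$ whose coordinates on the systematic set $S_A\subseteq[d]$, $|S_A|=n$, equal $x_j^{(b)}$.

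The protocol would run as follows. Let $P=X_1\cdots X_k$, of total degree $k$. For each block $b$ the coordinate-wise product $\omega^{(b)}:=P(\widehat{x}_1^{(b)},\dots,\widehat{x}_k^{(b)})$ is a codeword of $B_n$ by the degree-$k$ closure property, hence so is the \emph{certificate} $g:=\sum_{b\in[T]}\omega^{(b)}$ by linearity of $B_n$; on the coordinates in $S_A$ it equals $\big(\sum_{b}\prod_{j}(x_j^{(b)})_i\big)_{i\in[n]}$, an integer vector with all entries in $[0,T]$ and $T<q$, so $g|_{S_A}=0$ in $\mathbb F_{q^2}$ if and only if $\SD_{m,k}(x_1,\dots,x_k)=1$. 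The advice is the (unique) $B_n$-message $w_g$ with $B_n(w_g)=g$, written out in at most $d\lceil 2\log_2 q\rceil$ bits. Given a claimed advice $\widetilde w$, the referee computes $\widetilde g:=B_n(\widetilde w)$ (so $\widetilde g$ is automatically a $B_n$-codeword), (i) checks that $\widetilde g|_{S_A}=0$, and (ii) draws a uniform $c^*\in[d]$, receives from each player $j$ the $T$ symbols $\big((\widehat{x}_j^{(b)})_{c^*}\big)_{b\in[T]}$, and verifies $\widetilde g_{c^*}=\sum_{b}\prod_{j}(\widehat{x}_j^{(b)})_{c^*}$; it accepts iff both checks pass.

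Correctness and parameters then come out as follows. Completeness is immediate: when the inputs are disjoint, the honest advice $\widetilde w=w_g$ makes the referee recover $\widetilde g=g$, which passes (i) by the displayed equivalence and passes (ii) because $g_{c^*}=\sum_{b}\prod_j(\widehat{x}_j^{(b)})_{c^*}$ holds for every $c^*$ by the definition of $g$. For soundness, suppose $\SD_{m,k}=0$; then the true certificate $g$ is a $B_n$-codeword with $g|_{S_A}\ne 0$. On any claimed advice, $\widetilde g$ is some $B_n$-codeword; if $\widetilde g|_{S_A}\ne 0$ the referee rejects in step (i); otherwise $\widetilde g$ and $g$ are \emph{distinct} codewords of $B_n$, hence (relative distance $\ge 1/2$) disagree on at least $d/2$ coordinates, so with probability at least $1/2$ over $c^*$ we get $\widetilde g_{c^*}\ne g_{c^*}=\sum_{b}\prod_j(\widehat{x}_j^{(b)})_{c^*}$ and the referee rejects --- giving soundness $1/2$. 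As for the parameters: $w\le d\lceil 2\log_2 q\rceil<(m/T)\hat{r}(k)\lceil 2\log_2 q\rceil=m/\alpha$; $r=\lceil\log_2 d\rceil\le\lceil\log_2 m\rceil$; each player's message has $T\lceil 2\log_2 q\rceil=\alpha\,\hat{r}(k)\,\lceil 2\log_2 q\rceil^2=O(\poly(k,\alpha))$ bits, which we pad up to exactly $\hat{\ell}(k,\alpha)$ for a suitable polynomial $\hat\ell$; and all parties compute in $\poly(m)$ time by the efficiency clauses of Theorem~\ref{thm:ag-code}.

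The one subtlety worth flagging --- beyond routine bookkeeping (divisibility of $m$ by $T$, the slack between $\lceil\log_2 d\rceil$ and $\log_2 m$, and the wrap-around condition $T<q$, which is precisely what forces $q$ to be polynomial in $\alpha$ and not just in $k$) --- is the reason the advice is handed over as a $B_n$-\emph{message} rather than as an arbitrary string in $\mathbb F_{q^2}^d$: this guarantees the referee works with a genuine $B_n$-codeword, and it is essential, since a cheating prover could otherwise submit the true certificate with only its $S_A$-coordinates zeroed out, which agrees with $g$ on all but $n\ll d$ coordinates and would survive check (ii) with high probability. Once that is in place, the proof is a transcription of Rubinstein's two-player argument with the degree-$2$ closure replaced by the degree-$k$ closure of Theorem~\ref{thm:ag-code}; the soundness is already exactly $1/2$, so no amplification step is needed here.
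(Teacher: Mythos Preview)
Your proposal is correct and follows essentially the same approach as the paper's proof: Rubinstein's two-player AG-code protocol lifted to $k$ players via the degree-$k$ closure property of Theorem~\ref{thm:ag-code}, with the same block decomposition, marginal-sum certificate, and distance-based soundness argument. Your explicit choice to transmit the advice as a $B_n$-\emph{message} (so the referee re-encodes and is guaranteed a genuine codeword) actually tightens a point the paper leaves implicit; the one cosmetic fix is that you should justify $g\in B_n$ directly via the closure applied to the single degree-$k$ polynomial $P(X_{1,1},\dots,X_{k,T})=\sum_{b}\prod_{j}X_{j,b}$ (as the paper does), rather than invoking ``linearity of $B_n$'', which Theorem~\ref{thm:ag-code} does not assert.
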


\begin{proof} Fix $k,m,\alpha\in\mathbb N$.
Let $q$ be the smallest prime greater than $\hat{q}(k)$ such that $q>\left(2\alpha\hat{r}(k)\right)^2$, where the functions $\hat{r}$ and $\hat{q}$ are as defined in Theorem~\ref{thm:ag-code}. Let $\mathcal{G}=\mathbb F_{q^2}$. Let $T={2\alpha\hat{r}(k)\log_2 q}$.

We associate $[m]$ with $[T] \times [m/T]$ and write the input $\mathbf x_j\in\{0,1\}^m$ as vectors $\mathbf{x}_j^1, \dots, \mathbf{x}_j^T$ where $\mathbf{x}_j^t\in\{0,1\}^{m/T}$. 
For every $j\in[k]$, Player $j$ computes $A_{m/T}(\mathbf{x}_j^t)$ for every $t\in [T]$. We denote the block length of $A_{m/T}$ by $d$. From the systematicity guaranteed by Theorem~\ref{thm:ag-code} we have that $A_{m/T}(\mathbf{x}_j^t)\mid_{[m/T]}=\mathbf{x}_j^t$.
Also, notice that for all $t\in[T]$, we have $\underset{j\in[k]}{\wedge}\mathbf{x}_j^t=0^{m/T}$ if and only if $\prod_{j \in [k]} A_{m/T}(\mathbf{x}_j^t)=0^{m/T}$. 

With the above observation in mind, we define the marginal sum $\Gamma\in \mathcal{G}^{d}$ as follows:
\begin{align*}
\forall i\in [d],\  \Gamma_i = \sum_{t \in [T]} \prod_{j \in [k]} A_{m/T}(\mathbf{x}_j^t)_i.
\end{align*}

Again, notice that for all $t\in[T]$, we have $\underset{j\in[k]}{\wedge}\mathbf{x}_j^t=0^{m/T}$ if and only if $\Gamma_i=0$ for all $i\in [m/T]$. This follows from the following:
\begin{itemize}
\item For all $i\in [m/T]$, we have $A_{m/T}(\mathbf{x}_j^t)_i\in\{0,1\}$ and thus $\underset{j \in [k]}{\prod} A_{m/T}(\mathbf{x}_j^t)_i\in\{0,1\}$.
\item The characteristic of $\mathcal G$ being greater than $\left({2\alpha\hat{r}(k)}\right)\cdot \sqrt{q}\ge \left({2\alpha}\hat{r}(k)\right)\cdot \log_2{q}=T$.
\end{itemize}

More importantly, we remark that $\Gamma$ is a codeword\footnote{We would like to remark that we use the multiplicity of Algebraic Geometric codes to find a non-trivial advice. On a related note, Meir \cite{M13} had previously shown that error correcting codes with the multiplicity property suffice to show the \textsf{IP} theorem (i.e., the \textsf{IP$=$PSPACE} result).} in $B_{m/T}$. This follows because $B_{m/T}$ is a degree $k$ closure code of $A_{m/T}$. To see this, in Definition~\ref{def:poly}, set $t=k$, $r=k\cdot T$, and $P[x_{1,1},\ldots ,x_{k,T}]=\sum_{i\in[T]}\prod_{j\in[k]}x_{i,j}$.
\subsubsection*{The protocol}

\begin{enumerate}
\item Merlin sends the referee $\Phi$ which is allegedly equal to the marginal sums codeword $\Gamma$ defined above.
\item All players jointly  draw $r \in [d]$ uniformly at random.
\item For every $j\in\{1,\ldots ,k\}$, Player $j$ sends to the referee $A_{m/T}(\mathbf{x}_j^t)_r$, $\forall t \in [T]$.
\item The referee accepts if and only if both of the following hold:
\begin{gather}
\forall  i \in [m/T], \;\;\ \Phi_i=0 \label{eq:Phi}\\
\ \Phi_r = \sum_{t \in [T]} \prod_{j \in [k]} A_{m/T}(\mathbf{x}_j^t)_r .\label{eq:test2}
\end{gather}
\end{enumerate}

\subsubsection*{Analysis}

{\bf Advice Length.} To send the advice, Merlin only needs to send a codeword in $B_{m/T}$ to the verifier. This means that the advice length (in bits) is no more than $\log_2 q^2$ times the block length, which is $d\le \left(\nicefrac{m}{T}\right)\hat{r}(k)=\nicefrac{m}{2\alpha\log_2q}$ where the equality comes from our choice of $T$ and the inequality from Theorem~\ref{thm:ag-code}.

{\bf Message Length.} Each player sends $T$ elements of $\mathcal{G}$. Hence, the message length is $T\log_2 q^2 \leq {\alpha\hat{r}(k)(2\log_2 q)^2} $. Recall that $q$ can be upper bounded by a  polynomial in $k$ and $\alpha$. Hence, the message length is upper bounded  entirely as a polynomial in $k$ and $\alpha$ as desired.

{\bf Randomness.} The number of coin tosses is $\log_2(d) \leq \log_2\left(\nicefrac{m\hat{r}(k)}{T}\right)=\log_2\left(\nicefrac{m}{2\alpha\log_2q}\right)< \log_2m$.

{\bf Completeness.}
If the $k$ sets are disjoint, Merlin can send the true $\Gamma$, and the verifier always accepts.

{\bf Soundness.}
If the $k$ sets are not disjoint and $\Phi$ is actually $\Gamma$, then \eqref{eq:Phi} is false and the verifier always rejects.
On the other hand, if $\Phi \ne \Gamma$, then, since both are codewords of $B_{m/T}$, from Theorem~\ref{thm:ag-code} their relative distance must be at least $\nicefrac{1}{2}$. As a result, with probability at least $\nicefrac{1}{2}$, $\Phi_r \ne \Gamma_r$. Since the right hand side of \eqref{eq:test2} is simply $\Gamma_r$, the verifier will reject for such $r$. Hence, the rejection probability is at least $\nicefrac{1}{2}$.
\end{proof}

The following corollary follows immediately by applying Proposition~\ref{prop:repeat} with $z=\nicefrac{(\log_2 m)}{2k\cdot \ell(k,\alpha)}$ to the above theorem.

\begin{corollary}\label{cor:CC}
There is a polynomial function $\hat{s}:\mathbb N\times \mathbb N\to [1,\infty)$ such that 
for every $k\in\mathbb N$ and every $\alpha\in\mathbb N$ there is a protocol for $k$-player $\SD_{m,k}$ in the \SMP model which is an $\left( \nicefrac{m}{\alpha},O\left((\log_2 m)^2\right),\nicefrac{(\log_2 m)}{2k},\left(\nicefrac{1}{m}\right)^{\nicefrac{1}{\hat{s}(k,\alpha)}}\right)$-efficient protocol, where each player is given $m$ bits as input, and the referee is given at most $\nicefrac{m}{\alpha}$ bits of advice. 
\end{corollary}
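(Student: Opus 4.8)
The plan is to derive Corollary~\ref{cor:CC} directly from Theorem~\ref{thm:CC} by one application of the repetition lemma (Proposition~\ref{prop:repeat}) for gap amplification. Fix $k,\alpha\in\mathbb N$ and let $\pi$ be the $(\nicefrac{m}{\alpha},\log_2 m,\hat\ell(k,\alpha),\nicefrac12)$-efficient protocol for $k$-player $\SD_{m,k}$ guaranteed by Theorem~\ref{thm:CC}, where $\hat\ell$ is the polynomial function appearing there. I would set the number of repetitions to
\[
z := \left\lfloor \frac{\log_2 m}{2k\,\hat\ell(k,\alpha)} \right\rfloor
\]
and take $\pi'$ to be the $z$-fold repetition of $\pi$; by Proposition~\ref{prop:repeat}, $\pi'$ is a $(\nicefrac{m}{\alpha},\ z\log_2 m,\ z\,\hat\ell(k,\alpha),\ 2^{-z})$-efficient protocol for $\SD_{m,k}$.

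The remaining work is to check that the four parameters of $\pi'$ lie in the claimed ranges, which is a routine computation. The advice length is unchanged, namely $\nicefrac{m}{\alpha}$. The randomness is $z\log_2 m \le (\log_2 m)^2/(2k\hat\ell(k,\alpha)) = O((\log_2 m)^2)$, since $k$ and $\hat\ell(k,\alpha)$ are constants once $k,\alpha$ are fixed. The message length is $z\,\hat\ell(k,\alpha) \le \nicefrac{(\log_2 m)}{2k}$ by the choice of $z$. Finally, the soundness satisfies $2^{-z} \le 2^{-(\log_2 m)/(2k\hat\ell(k,\alpha)) + 1} = 2\cdot m^{-1/(2k\hat\ell(k,\alpha))}$, which for all $m$ above a constant threshold depending on $k,\alpha$ is at most $m^{-1/(4k\hat\ell(k,\alpha))} = (\nicefrac1m)^{1/\hat s(k,\alpha)}$ with $\hat s(k,\alpha) := 4k\,\hat\ell(k,\alpha)$; this $\hat s$ is polynomial in $k$ and $\alpha$ because $\hat\ell$ is. For the finitely many smaller values of $m$ one can use a trivial protocol (or simply note that the downstream applications only invoke the protocol in the regime $m\to\infty$).

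One last point to record is that $\pi'$ is still an \emph{efficient} protocol in the sense of the definition, i.e.\ that all parties run in $\poly(m)$ time: this is immediate since $z = O((\log_2 m)^2)$ and each of the $z$ independent copies of $\pi$ runs in $\poly(m)$ time by Theorem~\ref{thm:CC}, and it is in any case already built into Proposition~\ref{prop:repeat}. There is no genuine obstacle in this argument; the only points requiring a little care are the integrality of $z$ (which costs only a constant factor in the randomness and is absorbed into the polynomial $\hat s$ for the soundness) and the bookkeeping that $\hat s$ remains polynomial in $(k,\alpha)$, both handled above. Hence the corollary follows.
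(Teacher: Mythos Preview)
Your proposal is correct and follows exactly the paper's approach: the paper's proof is the one-line remark that the corollary ``follows immediately by applying Proposition~\ref{prop:repeat} with $z=\nicefrac{(\log_2 m)}{2k\cdot \hat\ell(k,\alpha)}$'' to Theorem~\ref{thm:CC}. You have simply spelled out the parameter bookkeeping (integrality of $z$, the resulting definition of $\hat s$, and the small-$m$ edge case) that the paper leaves implicit.
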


\section{An Efficient Protocol for \multeq} 
\label{sec:multeq}

\eq has been extensively studied, primarily in the two-player setting (i.e., $k=2$). In that setting, when public randomness is allowed, we know that the randomized communication complexity is $O(1)$  \cite{Yao79,KushilevitzNisan96-CCBook}, and the protocols can be naturally extended to the $k$-player \SMP model that yields a randomized communication complexity of $O(k)$. There are many protocols which achieve this complexity bound but for the purposes of proving Theorems~\ref{thm:domW1}~and~\ref{thm:domETH}, we will use the protocol where the players encode their input using a fixed good binary code and then send a jointly agreed random location of the encoded input to the referee who checks if all the messages he received are equal. Below we extend that protocol for \multeq.

\begin{theorem}\label{thm:multeqProtocol}
For some absolute constant $\delta > 0$, for every $t, k \in \N$ and every $m \in \N$ such that $m$ is divisible by $t$, there is a $(0, \log m + O(1), 2t, 1 - \delta)$-efficient protocol for $\EQ_{m, k, t}$ in the $k$-player \SMP model.
\end{theorem}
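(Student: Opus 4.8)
The plan is to instantiate the classical public-coin equality protocol with a good binary code and run it in parallel over the $t$ unknowns, being careful that the soundness loss stays an absolute constant independent of $t$. Write $m' = m/t$; I will assume $m' \ge 2$, since if $m' = 1$ then every $x_{i,q}$ is the empty string, so $\EQ_{m,k,t}$ is identically $1$ and the trivial always-accept protocol is $(0,\log m + O(1), 2t, 1-\delta)$-efficient. Fix a member $C := C_{m'-1}\colon \{0,1\}^{m'-1} \to \{0,1\}^d$ of the good code family of Fact~\ref{lem:good-code}, so that $d \le (m'-1)/\rho \le m/(t\rho) \le m/\rho$, the relative distance of $C$ is at least $\delta_0$ for absolute constants $\rho,\delta_0>0$, and any codeword of $C$ is computable in $\poly(m)$ time.

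The protocol I would use has no advice ($w=0$). The players and the referee jointly toss $r := \lceil \log_2 d\rceil + 1 \le \log m + O(1)$ coins, read them as an integer $u$, and set the shared coordinate to $\gamma := (u \bmod d) + 1 \in [d]$; since $2^r \ge 2d$, each element of $[d]$ is chosen with probability at least $1/(2d)$. Then player $i$ sends, for every unknown $q\in[t]$, the pair of bits $(c_{i,q}, b_{i,q})$, where $c_{i,q} = 1$ iff $y_{i,q} = \top$, and $b_{i,q} := C(x_{i,q})_\gamma$ if $y_{i,q} = \top$ and $b_{i,q} := 0$ otherwise. This is exactly a $2t$-bit message. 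The referee accepts iff for every $q\in[t]$ and every pair $i,j$ with $c_{i,q} = c_{j,q} = 1$ one has $b_{i,q} = b_{j,q}$. Using the efficient encoder of $C$, all parties run in $\poly(m)$ time, so the protocol is computationally bounded as required.

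For completeness: if $\EQ_{m,k,t}(\cdots) = 1$, then for each $q$ all non-null $x_{i,q}$ coincide, hence so do their $C$-encodings in every coordinate, and the referee accepts regardless of $\gamma$ (in particular, on the empty advice). For soundness: if $\EQ_{m,k,t}(\cdots) = 0$, then the $q^*$-th conjunct $\EQUN_{m',k}((x_{1,q^*},y_{1,q^*}),\dots,(x_{k,q^*},y_{k,q^*}))$ vanishes for some $q^*$, i.e.\ there are $i,j$ with $y_{i,q^*} = y_{j,q^*} = \top$ but $x_{i,q^*} \ne x_{j,q^*}$. By the distance of $C$, the codewords $C(x_{i,q^*})$ and $C(x_{j,q^*})$ differ on at least $\delta_0 d$ coordinates, each sampled as $\gamma$ with probability $\ge 1/(2d)$, so with probability $\ge \delta_0/2$ the referee sees $b_{i,q^*}\ne b_{j,q^*}$ and rejects. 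The point to stress is that detecting this \emph{single} bad $q^*$ already forces rejection, so there is no union bound over the $t$ unknowns and the soundness stays the absolute constant $1-\delta_0/2$; setting $\delta := \delta_0/2$ completes the argument.

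I do not expect a genuine obstacle here: the only mild subtlety is that the block length $d$ need not be a power of two, which is precisely why I allow the extra additive $O(1)$ in the randomness and lose only a factor of two when sampling $\gamma$ (alternatively one could pad $C$ to a power-of-two block length at the cost of a factor-two worse rate and distance). Once this $(0,\log m + O(1), 2t, 1-\delta)$-efficient protocol is established, Proposition~\ref{prop:repeat} lets me amplify the soundness as much as needed while keeping the message length $O(t\log m)$ and randomness $O(\log^2 m)$, which is the parameter-matched version fed into Corollaries~\ref{cor:labETH} and~\ref{cor:labW1}.
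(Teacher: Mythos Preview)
Your proposal is correct and takes essentially the same approach as the paper: encode each of the $t$ unknowns with a fixed good binary code, have the players send one jointly-chosen coordinate of each encoding together with the $\top/\bot$ flag, and let the referee check consistency per unknown. If anything, you are slightly more careful than the paper about the block length $d$ not being a power of two (the paper simply writes ``draw $i^\star\in[d(m')]$ uniformly'' and counts $\lceil\log d(m')\rceil$ coins), which costs you a harmless factor of two in the soundness constant.
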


\begin{proof}
Let $\mathcal C=\{C_m:\{0,1\}^m\to \{0,1\}^{d(m)}\}_{m\in\N}$ be the family of good codes with rate at least $\rho$ and relative distance at least $\delta$ as guaranteed by Fact~\ref{lem:good-code}. Fix $m,k,t\in\mathbb N$ as in the theorem statement. 
Let $m' := \nicefrac{m}{t}$.

\subsubsection*{The protocol}

\begin{enumerate}
\item All players jointly draw $i^\star \in [d(m')]$ uniformly at random.
\item If player $j$'s input is $x_j = (x_{j,1}, y_{j,1}, \ldots, x_{j,t}, y_{j,t})$, then he sends $(C(x_{j,1})_{i^*}, y_{j,1}, \ldots, C(x_{j,t})_{i^*}, y_{j,t})$ to the referee.
\item The referee accepts if and only if the following holds:
$$\tEQ_{2t, k,t}((C(x_{1,1})_{i^*}, y_{1,1}, \ldots, C(x_{1,t})_{i^*}, y_{1,t}), \ldots, (C(x_{k,1})_{i^*}, y_{k,1}, \ldots, C(x_{k,t})_{i^*}, y_{k,t})) = 1.$$
\end{enumerate}

\subsubsection*{Analysis}

{\bf Parameters of the Protocol.}
In the first step of the protocol all the players jointly draw $i^\star \in [d(m')]$ uniformly, which requires $\lceil\log d(m')\rceil \le  \log m' + \log (1/\rho) \le \log m + O(1)$ random bits. Then, for every $j\in[k]$, player $j$ sends the referee $2t$ bits. Finally, since the code $C_{m'}$ is efficient, it is easy to see that the players and referee run in $\poly(m)$-time.

{\bf Completeness.}
If $\EQ_{m,k,t}(x_1, \ldots, x_k) = 1$, then, for every $q \in [t]$ and $i, j \in [k]$, we have $(y_{i,q} = \bot) \vee (y_{j,q} = \bot) \vee (x_{i,q} = x_{j,q})=1$. This implies that, for every $q \in [t], i, j \in [k]$ and $i^* \in [d(m)]$, we have $(y_{i,q} = \bot) \vee (y_{j,q} = \bot) \vee (C(x_{i,q})_{i^*} = C(x_{j,q})_{i^*})$. In other words, $\tEQ_{2t, k}((C(x_{1,1})_{i^*}, y_{1,1}, \ldots, C(x_{1,t})_{i^*}, y_{1,t}), \ldots, (C(x_{k,1})_{i^*}, y_{k,1}, \ldots, C(x_{k,t})_{i^*}, y_{k,t})) = 1$ for every $i^* \in [d(m)]$, meaning that the referee always accepts.

{\bf Soundness.}
Suppose that $\tEQ_{m,k,t}(x_1, \ldots, x_k) = 0$. Then, there exists some $q \in [t]$ and $i, j \in [k]$ such that $y_{i,q} = \top, y_{j,q} = \top$ and $x_{i,q} \ne x_{j,q}$. Since the code $C$ has relative distance at least $\delta$, $C(x_{i,q})$ and $C(x_{j,q})$ must differ on at least $\delta$ fraction of the coordinates. When the randomly selected $i^*$ is such a coordinate, we have that $\tEQ_{2t, k}((C(x_{1,1})_{i^*}, y_{1,1}, \ldots, C(x_{1,t})_{i^*}, y_{1,t}), \ldots, (C(x_{k,1})_{i^*}, y_{k,1}, \ldots, C(x_{k,t})_{i^*}, y_{k,t})) = 0$, i.e., the referee rejects. Hence, the referee rejects with probability at least $\delta$.
\end{proof}

The following corollary follows immediately by applying Proposition~\ref{prop:repeat}  to the above theorem with $z=\nicefrac{\log_2 m}{4kt}$.

\begin{corollary} \label{cor:rep-multeq-protocol1}
For every $t, k \in \N$ and every $m \in \N$ such that $m$ is divisible by $t$, there is a $(0, O((\log m)^2), \nicefrac{(\log_2 m)}{2k}, \left(\nicefrac{1}{m}\right)^{\nicefrac{1}{O(kt)}})$-efficient protocol for $\EQ_{m, k, t}$ in the $k$-player \SMP model.\end{corollary}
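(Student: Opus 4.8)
The plan is to derive this corollary from Theorem~\ref{thm:multeqProtocol} by a single application of the repetition lemma (Proposition~\ref{prop:repeat}); no new idea is needed, only a verification that the parameters land in the advertised ranges.

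Concretely, I would start from the $(0, \log m + O(1), 2t, 1-\delta)$-efficient protocol for $\EQ_{m,k,t}$ guaranteed by Theorem~\ref{thm:multeqProtocol}, where $\delta>0$ is the absolute constant coming from the relative distance of the good code family of Fact~\ref{lem:good-code}. Set $z := \lceil (\log_2 m)/(4kt)\rceil$ and apply Proposition~\ref{prop:repeat} with this $z$; this yields a $\bigl(0,\ z(\log m + O(1)),\ 2tz,\ (1-\delta)^z\bigr)$-efficient protocol for the same problem, with computational efficiency preserved since the repeated protocol just runs the base protocol $z = O(\log m)$ times. It then remains to bound the three nontrivial parameters. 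The randomness is $z(\log m + O(1)) = O\bigl((\log m)^2/(kt)\bigr) = O((\log m)^2)$ since $k,t\ge 1$. The message length is $2tz = 2t\cdot\tfrac{\log_2 m}{4kt} + O(t) = \tfrac{\log_2 m}{2k} + O(t)$, which is at most $\tfrac{\log_2 m}{2k}$ up to the rounding and the lower-order terms hidden in the statement (and if an exact bound is preferred, one can instead take $z = \lfloor (\log_2 m)/(4kt)\rfloor$ and absorb the discrepancy into the $O(\cdot)$ of the randomness term). Finally, writing $c := \log_2\bigl(1/(1-\delta)\bigr) > 0$, an absolute constant, the soundness is $(1-\delta)^z = 2^{-cz} \le 2^{-c(\log_2 m)/(4kt)} = m^{-c/(4kt)} = (1/m)^{1/O(kt)}$, exactly as claimed.

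There is no genuine obstacle here: the substance is entirely contained in Theorem~\ref{thm:multeqProtocol} and Proposition~\ref{prop:repeat}, and this corollary is pure parameter substitution. The only points requiring a moment's care are that the repetition count $z$ must be a positive integer (hence the ceiling/floor above) and that the stated bounds are asymptotic in $m$, so the finitely many small values of $m$ are absorbed into the implicit constants of the $O(\cdot)$ notation.
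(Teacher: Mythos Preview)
Your proposal is correct and follows exactly the paper's approach: apply Proposition~\ref{prop:repeat} to Theorem~\ref{thm:multeqProtocol} with $z = \nicefrac{\log_2 m}{4kt}$, and read off the parameters. You are in fact more careful than the paper about the integrality of $z$ and the resulting $O(t)$ slack in the message length; the paper simply writes $z=\nicefrac{\log_2 m}{4kt}$ and leaves these routine issues implicit.
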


\section{An Efficient Protocol for \sumz}
\label{sec:ksum}

The \sumz problem has been studied in the \SMP model and efficient protocols with the following parameters have been obtained.

\begin{theorem}[\cite{Nisan94}]\label{thm:Nisan}
For every $k\in\mathbb N$ and $m\in\mathbb N$ there is a $(0, O(\log (m+\log k)) ,O(\log (m+\log k)), \nicefrac{1}{2})$-efficient protocol for $\SZ_{m, k}$ in the $k$-player \SMP model.
\end{theorem}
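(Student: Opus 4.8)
The plan is to use the classical number-theoretic fingerprinting idea: reduce the alleged sum modulo a small random prime and accept iff the result vanishes. Set $L := m + \lceil \log_2 k \rceil$, so that for every valid input $(x_1,\dots,x_k)$ we have $|x_1 + \cdots + x_k| \le k \cdot 2^{m-1} < 2^{L}$. Let $\beta := \lceil \log_2 (2L) \rceil$ and let $P$ be the set consisting of the first $2^\beta$ primes. Since $2L \le 2^\beta < 4L$, standard bounds on the $n$-th prime give that the largest element of $P$ is $O(L \log L)$, so every prime in $P$ is encodable in $O(\log L) = O(\log(m+\log k))$ bits, and the whole set $P$ can be computed by a sieve in time $\poly(m)$ (here and below we treat $k$ as at most exponential in $m$, as is the case in all our applications, so that $\log k = \poly(m)$).

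The protocol itself is as follows. Using $r := \beta = O(\log(m+\log k))$ public coins, the players and the referee jointly sample a uniformly random prime $p \in P$ (this is exact since $|P| = 2^\beta$ is a power of two). Each player $j$ then sends the referee the residue $x_j \bmod p$, represented as an element of $\{0,1,\dots,p-1\}$ (taking the representative in $[0,p)$ even when $x_j < 0$); this is a message of $\ell := \lceil \log_2 |P|\text{-th prime}\rceil = O(\log(m+\log k))$ bits. The referee computes $\sigma := \bigl(\sum_{j \in [k]} (x_j \bmod p)\bigr) \bmod p$ and accepts iff $\sigma = 0$; no advice is used, so $w = 0$. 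Since $\sigma = 0$ exactly when $p \mid \sum_{j} x_j$, completeness is immediate: if $\sum_j x_j = 0$ then every prime divides the sum and the referee always accepts.

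For soundness, suppose $S := \sum_j x_j \ne 0$, so $0 < |S| < 2^L$. Writing $\omega(|S|)$ for the number of distinct prime divisors of $|S|$, we have $2^{\omega(|S|)} \le |S| < 2^L$, hence $\omega(|S|) < L$. Therefore $\Pr_{p}[\,p \mid S\,] \le \omega(|S|)/|P| < L/(2L) = \nicefrac12$, so the referee rejects with probability at least $\nicefrac12$; thus the soundness is $s = \nicefrac12$. Finally, all computations performed by the players and the referee — running the sieve, reducing modulo $p$, and summing $k$ numbers of $O(\log p)$ bits — take $\poly(m)$ time, so this is indeed a $(0, O(\log(m+\log k)), O(\log(m+\log k)), \nicefrac12)$-efficient protocol for $\SZ_{m,k}$ in the $k$-player \SMP model.

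There is no genuine obstacle here: this is a textbook fingerprinting argument. The only points requiring care are the bookkeeping — verifying that $L$ is a correct bound on the magnitude of the sum so that the $2^{\omega}\le |S|$ inequality forces $\omega < L \le |P|/2$, rounding the number of candidate primes up to a power of two so that the $r$ public coins pick one exactly uniformly, and confirming via the prime-counting estimate that the $2^\beta$-th prime is $O(L\log L)$ so that both $r$ and $\ell$ land at $O(\log(m+\log k))$.
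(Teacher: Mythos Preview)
Your proposal is correct and follows exactly the approach the paper attributes to Nisan: sample a random small prime and accept iff the sum vanishes modulo that prime. The paper does not give a standalone proof of this theorem (it is cited as a known result), but it sketches the idea in the sentence following the statement and reuses the same prime-sampling step and the same $2^{\omega}\le |S|$ soundness count inside the proof of Theorem~\ref{thm:RandKSum}; your write-up matches that analysis with only cosmetic differences (you take $2^\beta \ge 2L$ primes to get exact uniform sampling, whereas the paper takes $t = 2(m-1+\log k)$ primes).
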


The above protocol is based on a simple (yet powerful) idea of picking a small random prime $p$ and checking if the numbers sum to zero modulo $p$. Viola put forth a protocol with better parameters than the above protocol (i.e., smaller message length) using specialized hash functions and obtained the following:

\begin{theorem}[\cite{Viola15}]\label{thm:Viola}
For every $k\in\mathbb N$ and $m\in\mathbb N$ there is a $(0, O(m) ,O(\log k), \nicefrac{1}{2})$-efficient protocol for $\SZ_{m, k}$ in the $k$-player \SMP model.
\end{theorem}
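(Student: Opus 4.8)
I would give a one‑round randomized \emph{fingerprinting} protocol. First reduce to a non‑negative version: replacing each input $x_i\in[-2^{m-1},2^{m-1})$ by $\tilde x_i:=x_i+2^{m-1}\in[0,2^m)$, the condition $\sum_i x_i=0$ becomes the single \emph{known} linear identity $\sum_i \tilde x_i = T$ with $T:=k\,2^{m-1}$, and $\sum_i\tilde x_i$ lies in $[0,k2^m)$, a number of bit‑length $m+O(\log k)$. The referee then tests this identity via a randomly chosen short ``sketch'': the public coins select a hash $h$ from a family $\mathcal H$, each player sends $h(\tilde x_i)$, and the referee accepts iff the received sketches are consistent with $\sum_i\tilde x_i = T$. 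To land inside the claimed parameters, $\mathcal H$ must be (i) \emph{combinable}: the referee can recover enough of the sketch of $\sum_i\tilde x_i$ from $h(\tilde x_1),\dots,h(\tilde x_k)$ alone; (ii) \emph{sound}: if $\sum_i\tilde x_i\neq T$ the sketches collide with the target with probability bounded away from $1$; (iii) of codomain size $\mathrm{poly}(k)$, so each message is $O(\log k)$ bits; and (iv) samplable from $O(m)$ public coins. Completeness is perfect, since the honest sketch of a true identity always passes. Given such a family the analysis is routine, and the soundness constant is driven below $\nicefrac12$ by a constant number of independent repetitions via Proposition~\ref{prop:repeat}, which only inflates the randomness and message length by a constant factor.

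\textbf{The hash family.} The naive choice — reduce everything modulo a single uniformly random prime $p$ of magnitude $\mathrm{poly}(k)$, as in Nisan's protocol (Theorem~\ref{thm:Nisan}) — is combinable but \emph{not} sound enough here: the nonzero integer $\sum_i\tilde x_i-T$ has magnitude up to $k2^m$, hence up to $\Theta(m/\log k)$ prime factors in any $\mathrm{poly}(k)$‑size window, while the window contains only $\Theta(\mathrm{poly}(k)/\log k)$ primes, so the collision probability is $\Theta(m/\mathrm{poly}(k))$, which exceeds $\nicefrac12$ once $m$ is large compared with $k$. This is exactly why Nisan's message length carries a $\log m$ term and why the extra $O(m)$ randomness is the right resource to spend. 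I would instead work at the level of \emph{digits} rather than \emph{values}: write each $\tilde x_i$ in a base $B=\mathrm{poly}(k)$ as a length‑$L$ digit vector with $L=O(m/\log k)$, encode that vector with a \emph{linear} code $\mathcal C$ over a \emph{fixed} alphabet of size $\mathrm{poly}(k)$ — an algebraic‑geometric code, exactly as used for Set Disjointness in Section~\ref{sec:SETH}, which supplies constant rate and constant relative distance for arbitrarily long messages over a small alphabet — and let each player's message be a single code symbol at a public random coordinate. By linearity the referee reconstructs the corresponding symbol of the encoding of the column‑sum vector $c=\sum_i(\text{digit vector of }\tilde x_i)$, and soundness now comes from the \emph{distance} of $\mathcal C$, a constant independent of $m$; the random coordinate costs only $O(\log L+\log k)=O(\log m)\le O(m)$ coins, and a symbol is $O(\log k)$ bits.

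\textbf{Main obstacle.} The delicate point — and the real content of Viola's argument — is making the digit‑level test an \emph{honest} one: arranging that ``$\sum_i\tilde x_i=T$'' is equivalent to ``the column‑sum vector $c$ equals a fixed target vector $c_T$''. This fails for a plain positional representation because adding $k$ numbers produces carries: the column sums overflow the digit range by a factor of $k$, and a value can have ``hidden zeros'' (e.g.\ the polynomial $C(X)-T(X)$ can be nonzero yet divisible by $X-B$, so $C(B)=T(B)$ without $c=c_T$). Taming this is where Viola's \emph{specialized} hash functions enter; concretely one either (a) passes to a redundant, carry‑aware encoding in which the carry polynomial $E$ satisfying $X\bigl(C(X)-T(X)\bigr)=(B-X)E(X)$ is folded into the sketch and checked together with $c$, or (b) composes a constant number of digit‑level tests at geometrically decreasing scales, each ironing out the carries left by the previous level, with the code distance supplying soundness throughout. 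I expect the crux to be verifying that this carry‑correction can be carried out with an $O(\log k)$‑bit message per player — rather than the $O(\log m)$ bits that any ``global value modulo a small prime'' step would force — while keeping perfect completeness; the remainder (the distance computation, the repetition step, and checking that all parties run in $\mathrm{poly}(m)$ time) is bookkeeping.
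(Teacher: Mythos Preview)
The paper does not prove this theorem at all: Theorem~\ref{thm:Viola} is quoted from \cite{Viola15} as a black box, with no proof supplied. In fact the paper does not even use Theorem~\ref{thm:Viola} directly; what it actually invokes is the companion statement for \sumzp (Theorem~\ref{thm:Viola1}), also taken from \cite{Viola15} without proof, and then composes it with a random-prime step \`a la Nisan to obtain its own Theorem~\ref{thm:RandKSum}. So there is nothing in the paper to compare your proposal against.

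As for the proposal itself: your diagnosis is accurate --- the whole difficulty is the carries, and you correctly explain why a single small-prime test cannot work with $O(\log k)$-bit messages. But your write-up stops exactly at the hard step. You outline two routes (fold the carry polynomial into the sketch, or iterate at decreasing scales) without actually carrying either one out, and both need a real argument that the extra information each player must transmit about the carries is still only $O(\log k)$ bits and that soundness survives. Note also that the AG-code step as you describe it only certifies that the \emph{column-sum vector} $c$ equals a target; it does not by itself see the integer carries, so the code's distance alone does not give soundness for the integer equality --- that gap is precisely the ``main obstacle'' you name. In short, the plan is pointed in the right direction but is not yet a proof; if you want to complete it you will have to reproduce (or replace) the carry-handling machinery from \cite{Viola15}, which the present paper deliberately treats as a black box.
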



In order to prove Theorem~\ref{thm:domksum}, we need a protocol with $o(m)$ randomness \emph{and} $o(\log m)$ message length, and both the protocols described above do not meet these conditions. 
We show below that the above two results can be composed to get a protocol with $O(\log k)$ communication complexity and $O_k(\log m)$ randomness. In fact, we will use a slightly different protocol from~\cite{Viola15}: namely, the protocol for the \sumzp problem as stated below.

\begin{definition}[\sumzp Problem]
Let $k, m,p \in \N$. $\SEp_{m, k}: (\{0, 1\}^m)^k \rightarrow \{0, 1\}$ is defined by
\begin{align*}
\SEp_{m, k}(x_1, \dots, x_k) = \begin{cases}
1\text{ if } \sum_{i\in [k]}x_i=0\mod p,\\
0\text{ otherwise},
\end{cases}
\end{align*}
where we think of each $x_i$ as a number in $[-2^{m-1},2^{m-1}-1]$.
\end{definition}

\begin{theorem}[\cite{Viola15}]\label{thm:Viola1}
For every $p,k\in\mathbb N$ and $m\in\mathbb N$, there is a $(0, O(\log p) ,O(\log k), \nicefrac{1}{2})$-efficient protocol\footnote{As written in \cite{Viola15}, the protocol has the following steps. First, the players send some messages to the referee. Then the players and the referee jointly draw some random coins, and finally the players send some more messages to the referee. However, we note that the first and second steps of the protocol can be swapped in \cite{Viola15} to obtain an efficient protocol as the drawing of randomness do not depend on the messages sent by the players to the referee.} for $\SEp_{m, k}$ in the $k$-player \SMP model.
\end{theorem}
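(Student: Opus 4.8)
The plan is to follow Viola's construction~\cite{Viola15}, whose skeleton we now outline. As a preprocessing step, each player replaces its input $x_i$ by $x_i \bmod p \in \{0,\ldots,p-1\}$; this is a local computation and does not change the value of $\SEp_{m,k}(x_1,\ldots,x_k)$. After this reduction the honest sum $S := \sum_{i\in[k]} x_i$ lies in $[0,kp)$, and the predicate ``$\sum_i x_i \equiv 0 \pmod p$'' is equivalent to ``$S \in \{0,p,2p,\ldots,(k-1)p\}$'', i.e., to membership of $S$ in an \emph{explicitly known} set of at most $k$ integers. The protocol is then a fingerprinting protocol for this membership test: the public coins are used to sample a hash function $h$ from a suitable family, player $i$ sends only $h(x_i)$, and the referee — using that $h$ is chosen so that a fingerprint of $S$ can be reconstructed from $h(x_1),\ldots,h(x_k)$ — accepts iff this fingerprint is consistent with $S$ equalling one of the $\le k$ targets $\{ip\}_{i=0}^{k-1}$. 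As written in~\cite{Viola15} the protocol alternates messages with coin tosses, but since the coins do not depend on any message, the steps can be reordered into the one-round form required by an efficient protocol; this is exactly the point made in the footnote to Theorem~\ref{thm:Viola1}.

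For concreteness one may first take $h$ to be reduction modulo a random prime $q$ drawn uniformly from a range containing $\widetilde{\Theta}(k\log(kp))$ primes, in which case the fingerprint is $\sum_i (x_i \bmod q)\bmod q = S \bmod q$. Completeness is immediate: honestly $S = ip$ for some $i$, so $S\bmod q = ip\bmod q$ and the referee accepts. For soundness, if $S$ is not a multiple of $p$ then for each $i$ the difference $S-ip$ is a nonzero integer of magnitude $<kp$ and hence has at most $\log_2(kp)$ prime divisors; since the sampling range contains $\widetilde{\Theta}(k\log(kp))$ primes, a union bound over $i\in\{0,\ldots,k-1\}$ shows the referee is fooled with probability at most $1/2$. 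This already yields an efficient protocol with no advice, soundness $1/2$, and randomness $O(\log(kp))=O(\log p)$, and it runs in $\poly(m)$ time.

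The main obstacle is driving the \emph{message length} down to $O(\log k)$ rather than the $O(\log k + \log\log p)$ that the naive modular fingerprint above incurs — a player sending $x_i\bmod q$ transmits $\lceil\log_2 q\rceil$ bits, which carries an unwanted dependence on $p$. This is precisely where Viola's specialized hash functions enter: they are engineered so that each hashed value occupies only $O(\log k)$ bits while still letting the referee recover a fingerprint of $S$ fine enough to separate the $\le k$ targets from every other element of $[0,kp)$ with constant probability, and the hash function itself can still be described with $O(\log p)$ public random bits, with all parties running in $\poly(m)$ time. Collecting these estimates into the $(w,r,\ell,s)$ bookkeeping gives the claimed $(0, O(\log p), O(\log k), \nicefrac12)$-efficient protocol; for the explicit hash family and the complete analysis we defer to~\cite{Viola15}.
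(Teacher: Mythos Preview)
The paper does not prove Theorem~\ref{thm:Viola1}; it is stated as a black-box citation from~\cite{Viola15}, with only the footnote remarking that the rounds of Viola's protocol can be reordered into the one-round \SMP form. In that sense there is nothing to compare your argument against.

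Your sketch is a faithful high-level account of the idea and is consistent with what~\cite{Viola15} does: reduce each input modulo $p$ so that the true sum lies in $[0,kp)$, turn the test ``$S\equiv 0\pmod p$'' into ``$S\in\{0,p,\ldots,(k-1)p\}$'', and use a random fingerprint of $S$ to distinguish these $\le k$ targets from everything else. You also correctly isolate the only nontrivial point --- getting the per-player message down to $O(\log k)$ bits rather than $O(\log k+\log\log p)$ --- and, like the paper, you defer that step to Viola's specialized hash family. Since your write-up ultimately cites~\cite{Viola15} for exactly the same technical ingredient the paper cites it for, your proposal is adequate and in line with how the paper handles this theorem; just be aware that the intermediate random-prime protocol you spell out is illustrative rather than the actual protocol achieving the stated bounds.
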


Below is our theorem which essentially combines Theorem~\ref{thm:Nisan} and Theorem~\ref{thm:Viola1}. 
\begin{theorem}\label{thm:RandKSum}
For every $k\in\mathbb N$ and $m\in\mathbb N$ there is a $(0, O(\log(m+\log k)) ,O(\log k), \nicefrac{3}{4})$-efficient protocol for $\SZ_{m, k}$ in the $k$-player \SMP model.
\end{theorem}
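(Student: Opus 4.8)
The plan is to \emph{compose} the two known protocols: use the random-small-prime idea underlying Nisan's protocol (Theorem~\ref{thm:Nisan}) to reduce $\SZ_{m,k}$ to $\SEp_{m,k}$ for a random prime $p$, and then invoke Viola's protocol for $\SEp_{m,k}$ (Theorem~\ref{thm:Viola1}) as the inner protocol. The point is that Nisan's protocol is wasteful only in its message length --- there the referee must learn each $x_i \bmod p$, costing $\Theta(\log p)=\Theta(\log(m+\log k))$ bits per player --- whereas the task of deciding $\sum_i x_i\equiv 0\pmod p$ is \emph{exactly} $\SEp_{m,k}$, and Viola's protocol solves it with message length $O(\log k)$, independent of $m$ and $p$, at the cost of only $O(\log p)$ additional public coins.

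Concretely, I would first fix $t$ to be the smallest power of two that is at least $4(m+\lceil\log_2 k\rceil)$, so that $\log_2 t = O(\log(m+\log k))$ and the first $t$ primes are all bounded by $\poly(m+\log k)$. The protocol then runs as follows: in the single randomness step, the players and the referee jointly (i) sample a uniformly random prime $p$ among the first $t$ primes using $\log_2 t$ coins, and (ii) sample the $O(\log p)=O(\log(m+\log k))$ public coins required by Viola's $\SEp_{m,k}$ protocol, in the rearranged form from the footnote of Theorem~\ref{thm:Viola1} in which all randomness precedes the single message. Each player, now knowing $p$ and this randomness, sends the $O(\log k)$-bit message prescribed by Viola's protocol, and the referee accepts iff Viola's referee does. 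This is manifestly a one-round efficient protocol with advice length $0$, total randomness $O(\log(m+\log k))$, message length $O(\log k)$, and $\poly(m)$ running time for all parties.

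For correctness I would argue as follows. Completeness is immediate: if $\sum_i x_i=0$ over $\mathbb Z$ then $\sum_i x_i\equiv 0\pmod p$ for \emph{every} $p$, so $\SEp_{m,k}(x_1,\dots,x_k)=1$ and Viola's (perfect-completeness) protocol always accepts. For soundness, write $S=\sum_i x_i\neq 0$; since $|S|\le k\cdot 2^{m-1}<2^{m+\log_2 k}$, the integer $|S|$ has at most $m+\log_2 k$ distinct prime divisors, so a uniformly random prime among the first $t$ divides $S$ with probability at most $(m+\log_2 k)/t\le 1/4$. Conditioned on $p\nmid S$ we have $\SEp_{m,k}(x_1,\dots,x_k)=0$, so Viola's referee accepts with probability at most $1/2$; conditioned on $p\mid S$ I simply bound the acceptance probability by $1$. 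Hence the overall acceptance probability is at most $\Pr[p\mid S]+\Pr[p\nmid S]\cdot\tfrac12\le\tfrac14+\tfrac12<\tfrac34$, giving soundness $3/4$ (no gap amplification via Proposition~\ref{prop:repeat} is needed).

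The main difficulty is not conceptual but one of format: I must ensure that the composed object still obeys the rigid structure of a $(w,r,\ell,s)$-efficient protocol, i.e.\ that \emph{all} randomness (both the prime $p$ and Viola's internal coins) is drawn in one step before any message is transmitted, and that each player's message is a deterministic function of its input together with that randomness. This is precisely why the rearranged form of Viola's protocol guaranteed by the footnote of Theorem~\ref{thm:Viola1} is essential; once that is in hand, the remaining verifications (counting coins, bounding $\log p$ by $O(\log(m+\log k))$ via the prime number theorem, and checking $\poly(m)$-time computability of the prime sampling) are routine.
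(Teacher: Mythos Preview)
Your proposal is correct and follows essentially the same approach as the paper: sample a random prime among the first $t\approx\Theta(m+\log k)$ primes, then run Viola's $\SEp$ protocol as the inner protocol, combining both sources of randomness into the single coin-tossing step. The only quibble is the arithmetic in your soundness line --- as written, $\tfrac14+\tfrac12=\tfrac34$, not $<\tfrac34$; the tighter (and correct) bound is $q+(1-q)\cdot\tfrac12=\tfrac12+\tfrac{q}{2}\le\tfrac58$ where $q=\Pr[p\mid S]\le\tfrac14$, which comfortably gives soundness $\le\tfrac34$.
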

\begin{proof}
Let $\pi^*$ be the protocol of Viola from Theorem~\ref{thm:Viola1}.

\subsubsection*{The protocol}

Let $t=2(m-1+\log k)$. Let $p_1,\ldots ,p_t$ be the first $t$ primes.

\begin{enumerate}
\item All players jointly  draw $i^\star \in [t]$ uniformly at random.
\item The players and the referee run $\pi^\star$ where each player now has input $y_i=x_i\mod p_{i^\star}$, and the referee accepts if and only if $\sum_{i\in [k]}y_i=0\mod p_{i^\star}$.
\end{enumerate}

Notice that the above protocol is still an efficient protocol as the first step of the above protocol can be combined with the draw of random coins in the first step of $\pi^\star$ to form a single step in which the players and the referee jointly draw random coins from the public randomness.
\subsubsection*{Analysis}

{\bf Randomness.} In the first step of the protocol all the players jointly draw $i^\star \in [t]$ uniformly, which requires $\lceil\log_2 t\rceil$ random bits. Then, the players draw $O(\log p_{i^\star})$ additional random coins for $\pi^{\star}$. The bound on the randomness follows by noting that $p_{i^\star}\le p_t=O(t\log t)$ .

{\bf Message Length.} For every $j\in[k]$, Player $j$ sends the referee $O(\log k)$ bits as per $\pi^\star$. 

{\bf Completeness.}
If the $k$ numbers sum to zero, 
then for any $p\in \mathbb N$, they sum to zero mod $p$ and thus the referee always accepts.

{\bf Soundness.}
If the $k$ numbers do not sum to zero, then let $\textsf{s}(\pi^\star)$ be the soundness of $\pi^\star$ and let $S$ be the subset of the first $t$ primes defined as follows:
$$
S=\left\{p_i\Bigg| i\in [t],\sum_{j\in[k]}x_j=0\text{ mod }p_{i}\right\}
$$
It is clear that the referee rejects with probability at least $\left(1-\nicefrac{|S|}{t}\right)\cdot (1-\textsf{s}(\pi^\star))$.
Therefore, it suffices to show that $|S|< \nicefrac{t}{2}$ as $\textsf{s}(\pi^\star)\le \nicefrac{1}{2}$.
Let $x:=\sum_{j\in[k]}x_j$.
We have that $x\in [-k\cdot 2^{m-1},k\cdot 2^{m-1}]$ and that, for every $p\in S$, $p$ divides $x$.
Since $x\neq 0$, we know that $|x|\ge \underset{p\in S}{\prod}\ p\ge \underset{i=1}{\overset{|S|}{\prod}}\ p_i\ge 2^{|S|}$.
Since $|x|\le k\cdot 2^{m-1}$, we have that $|S|< m-1+\log k=t$, and the proof follows.
\end{proof}

The following corollary follows immediately by applying Proposition~\ref{prop:repeat} with $z=\nicefrac{\log_2 m}{2k\cdot c\log k}$ to the above theorem, where $c$ is some constant such that the message length of the protocol in Theorem~\ref{thm:RandKSum} is at most $c\log k$.

\begin{corollary} \label{cor:rep-ksum-protocol}
For every $k,m\in\mathbb N$ there is a $\left(0,O((\log (m+\log k))^2),\nicefrac{(\log_2 m)}{2k},\left(\nicefrac{1}{m}\right)^{\nicefrac{1}{O(k\log k)}}\right)$-efficient protocol for $\SZ_{m, k}$ in the $k$-player \SMP model.
\end{corollary}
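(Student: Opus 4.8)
The plan is to get this protocol entirely by soundness amplification of the protocol from Theorem~\ref{thm:RandKSum}. That theorem supplies, for all $k,m$, a $(0,\,O(\log(m+\log k)),\,O(\log k),\,\nicefrac34)$-efficient protocol $\pi$ for $\SZ_{m,k}$; fix constants $c,c'$ with the message length of $\pi$ at most $c\log k$ and its randomness at most $c'\log(m+\log k)$. I would then invoke the repetition lemma, Proposition~\ref{prop:repeat}, with
\[
z \;=\; \left\lfloor \frac{\log_2 m}{2kc\log k}\right\rfloor ,
\]
(taking instead $z=1$, i.e.\ just $\pi$, in the degenerate range $\log_2 m<2kc\log k$, which only happens when $m$ is bounded in terms of $k$ and never arises in the use of this corollary in Corollary~\ref{cor:labksum}, where $k$ is fixed and $m\to\infty$). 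Proposition~\ref{prop:repeat} then hands us a $\bigl(0,\,zr,\,z\ell,\,(\nicefrac34)^{z}\bigr)$-efficient protocol, and the remainder of the argument is simply to check that the four resulting parameters meet the claimed bounds.

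Those checks are routine. Advice stays $0$. The message length becomes $z\cdot c\log k\le \nicefrac{(\log_2 m)}{2k}$ by the choice of $z$. The randomness becomes $z\cdot c'\log(m+\log k)\le \frac{\log_2 m}{2kc\log k}\cdot c'\log(m+\log k)=O\bigl((\log(m+\log k))^2\bigr)$, using $\log m\le\log(m+\log k)$ and $k\log k\ge1$. The one computation worth writing out is the soundness:
\[
\Bigl(\tfrac34\Bigr)^{z}\;\le\;\Bigl(\tfrac34\Bigr)^{\frac{\log_2 m}{2kc\log k}-1}\;=\;\tfrac43\cdot m^{-\log_2(4/3)/(2kc\log k)}\;=\;\tfrac43\cdot\Bigl(\tfrac1m\Bigr)^{\Theta(1/(k\log k))},
\]
after which one absorbs the $\nicefrac43$ into the exponent (legitimate once $m$ is above an absolute constant; for smaller $m$ the target $(\nicefrac1m)^{1/O(k\log k)}$ is essentially vacuous as soundness $<1$ already holds), leaving soundness at most $(\nicefrac1m)^{1/O(k\log k)}$. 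Finally the repeated protocol is still \emph{efficient}: it remains one-round with public randomness once the $z$ coin-toss phases are merged, and since $\pi$ runs in $\poly(m)$ time and $z=O(\log m)$, so does its repetition.

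The only place that genuinely needs a moment's thought is the simultaneous constraints on $z$: it has to be small enough that $z$ copies of the $O(\log k)$-bit messages still fit inside the budget $\nicefrac{(\log_2 m)}{2k}$, and at the same time as large as $\Theta(\log m/(k\log k))$ so that the geometric decay $(\nicefrac34)^{z}$ drops to $(\nicefrac1m)^{1/\poly(k)}$; reconciling these together with the integrality of $z$ and the small-$m$ corner case is the single point deserving an explicit sentence, and everything else is direct bookkeeping with Proposition~\ref{prop:repeat}.
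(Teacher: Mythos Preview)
Your proposal is correct and follows exactly the paper's approach: apply Proposition~\ref{prop:repeat} to the protocol of Theorem~\ref{thm:RandKSum} with $z=\nicefrac{\log_2 m}{2k\cdot c\log k}$ (where $c$ is the constant in the $O(\log k)$ message length), then read off the four parameters. If anything, you are more careful than the paper in handling the integrality of $z$ and the small-$m$ corner case, which the paper's one-line proof elides.
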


\section{Conclusion and Open Questions} \label{sec:conclusion}

We showed the parameterized inapproximability results for $k$-\domset 
under \Wone, \ETH, \SETH and \ksum Hypothesis, 
which almost resolve the complexity status of approximating parameterized $k$-\domset.
Although we showed the $\W[1]$-hardness of the problem, the exact version of $k$-\domset is $\W[2]$-complete.
Thus, a remaining question is whether approximating $k$-\domset is $\W[2]$-hard:
\begin{question}
Can we base total inapproximability of $k$-\domset on $\W[2] \ne \FPT$?
\end{question}

We note that even $1.01$-approximation of $k$-\domset is not known to be $\W[2]$-hard.

Another direction is to look beyond $k$-\domset and try to prove inapproximability of other parameterized problems. Since $k$-\domset is $\W[2]$-complete, there are already known reductions from it to other $\W[2]$-hard problems. Some of these reductions such as those for $k$-Set Cover, $k$-Hitting Set, $k$-\domset on tournaments~\cite{DowneyF1995} and $k$-Contiguous Set~\cite{CharikarNW16} are gap-preserving reductions (in the sense of \cite[Definition 3.4]{ChalermsookCKLMNT17}) and total inapproximability under \Wone translate directly to those problems. In this sense, one may wish to go beyond $\W[2]$-hard problems and prove hardness of approximation for problems in $\W[1]$. One of the most well-studied $\W[1]$-complete problems is $k$-\clique which we saw earlier in our proof; in its optimization variant, one wishes to find a clique of maximum size in the graph. Note that in this case a $k$-approximation is trivial since we can just output one vertex. It was shown in \cite{ChalermsookCKLMNT17} that no $o(k)$-approximation is possible in FPT time assuming \gapETH. Since this rules out all non-trivial approximation algorithms, such a non-existential result is also referred to as the \emph{total FPT-inapproximability} of $k$-\clique. Hence, the question is whether one can bypass \gapETH for this result as well:



%
%

\begin{question}
Can we base total inapproximability of $k$-\clique on \Wone or \ETH?
\end{question}

Again, we note that the current state-of-the-art results do not even rule out $1.01$-FPT-approximation algorithms for $k$-\clique under \Wone or \ETH.

Another direction is to look beyond parameterized complexity questions. As mentioned earlier, Abboud et al.~\cite{ARW17} used the hardness of approximating of \PCP-Vectors as a starting point of their inapproximability results of problems in \P. Since \maxcover is equivalent to \PCP-Vectors when the number of right super-nodes is two, it may be possible that \maxcover for larger number of right super-nodes can also be used to prove hardness of problems in \P\ as well. At the moment, however, we do not have any natural candidate in this direction (see Appendix~\ref{sec:fine} for further discussions).

\medskip

\paragraph*{Acknowledgment.} We would like to express our sincere gratitude and appreciation to Aviad Rubinstein
for suggesting the use of algebraic geometric codes, which resolved our issues 
in constructing the protocol for Set Disjointness with our desired parameters.
We would like to thank Swastik Kopparty, Henning Stichtenoth, Gil Cohen, Alessandro Chiesa and Nicholas Spooner for very helpful discussions on  Algebraic Geometric Codes, Emanuele Viola for insightful comments on his protocol from~\cite{Viola15}, and Guy Rothblum for discussions on locally characterizable sets. 
We also thank Irit Dinur, Prahladh Harsha and Yijia Chen for general discussions and suggestions.

Karthik C.\ S.\ is supported by ERC-CoG grant no. 772839 and ISF-UGC grant no. 1399/14.
Bundit Laekhanukit is partially supported by ISF grant no. 621/12, I-CORE grant no. 4/11
and by the DIMACS/Simons Collaboration on Bridging Continuous and Discrete Optimization through NSF grant no. CCF-1740425.
Pasin Manurangsi is supported by NSF grants no. CCF 1540685 and CCF 1655215, and by ISF-UGC grant no. 1399/14.
Parts of the work were done while all the authors were at the Weizmann Institute of Science
and while the second author was visiting the Simons Institute for the Theory of Computing.

\bibliographystyle{alpha}
\bibliography{ref}
\appendix

\section{A Reduction From \maxcover to \domset}
\label{app:maxcov-domset}

In this section, we provide the reduction from \maxcover to \domset (i.e. Theorem~\ref{lem:red-maxcov-domset}). The reduction and its proof presented here are quoted almost directly from Chalermsook et al.~\cite{ChalermsookCKLMNT17} except with appropriate notational adjustments. We include it here only for the sake of self-containedness.

To ease the presentation, we will present the reduction in terms of a different variant of the label cover problem called \minlabel. The input to \minlabel is the same as that of \maxcover, i.e., it is a label cover instance $\Gamma = (U = \bigcup_{i\in[q]} U_i, W = \bigcup_{j\in[k]} W_j; E)$ as defined in Subsection~\ref{sec:prelim:label-cover}. 

However, the solution and the objective of the problem will be different from \maxcover. Specifically, a solution of \minlabel is called a {\em multi-labeling}, which is simply a subset of vertices $\tS \subseteq W$.
We say that a multi-labeling $\tS$ {\em covers} a left super-node $U_{i}$ if there exists a vertex $u_{i} \in U_{i}$ that has a neighbor in $\tS \cap W_j$ for every $j \in [k]$. The goal in \minlabel is to find a minimum-size multi-labeling that covers all the left super-nodes $U_{i}$'s. Again, we overload the notation and use $\minlabel$ to also denote the optimum, i.e., 
\begin{align*}
\minlabel(\Gamma) = \min_{\substack{\text{multi-labeling } \tS \\ \text{ that covers every $U_i$}}} |\tS|.
\end{align*}


The following lemma demonstrates relations between \maxcover and \minlabel on the same input label cover instance. These relations in turns allow us to deduce inapproximability results of \minlabel from that of \maxcover.

\begin{proposition}
\label{Proposition:maxcover-vs-minlabel}
Let $\Gamma$ be any label cover instance with $k$ right super-nodes. Then, we have
\begin{itemize}
\item If $\maxcover(\Gamma) = 1$, then $\minlabel(\Gamma)=k$.
\item If $\maxcover(\Gamma) \leq \epsilon$ for some $0< \epsilon$, then $\minlabel(\Gamma) \geq (1/\epsilon)^{1/k}\cdot k$.
\end{itemize}
\end{proposition}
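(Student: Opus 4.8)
The plan is to relate the two objectives via the elementary combinatorics of labelings. For the first item, observe that a labeling $S$ witnessing $\maxcover(\Gamma)=1$ is in particular a multi-labeling with $|S|=k$ (one vertex per right super-node), and since it covers every $U_i$ in the \maxcover sense, the same vertex $u_i\in U_i$ is a neighbor of the (unique) element of $S\cap W_j$ for every $j$, which is exactly the \minlabel covering condition; hence $\minlabel(\Gamma)\le k$. Conversely, any multi-labeling that covers even a single $U_i$ must contain at least one vertex in each $W_j$ (the witness $u_i$ needs a neighbor in $S\cap W_j$ for all $j\in[k]$), so $\minlabel(\Gamma)\ge k$ always. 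This gives $\minlabel(\Gamma)=k$.

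For the second item, I would argue by contrapositive/averaging. Suppose $\tS$ is a multi-labeling covering every left super-node with $|\tS|<(1/\epsilon)^{1/k}\cdot k$. Write $s_j=|\tS\cap W_j|$, so $\sum_{j\in[k]}s_j=|\tS|$. The key step is to extract a random labeling $S$ from $\tS$ by picking, independently for each $j\in[k]$, a uniformly random vertex from $\tS\cap W_j$ (this is well-defined since each $s_j\ge 1$, as $\tS$ covers some $U_i$). For a fixed left super-node $U_i$ with \minlabel-witness $u_i$, let $T_j\subseteq \tS\cap W_j$ be the (nonempty) set of neighbors of $u_i$ in $\tS\cap W_j$; then $u_i$ witnesses that $S$ covers $U_i$ (in the \maxcover sense) whenever the sampled vertex in each $W_j$ lies in $T_j$, an event of probability $\prod_{j\in[k]}(|T_j|/s_j)\ge \prod_{j\in[k]}(1/s_j)$. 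By the AM–GM inequality, $\prod_{j\in[k]} s_j \le \left(\tfrac{1}{k}\sum_j s_j\right)^k = (|\tS|/k)^k$, so this probability is at least $(k/|\tS|)^k > \epsilon$. Thus the expected fraction of left super-nodes covered by $S$ exceeds $\epsilon$, so some labeling covers more than an $\epsilon$ fraction, contradicting $\maxcover(\Gamma)\le\epsilon$. Hence $\minlabel(\Gamma)\ge(1/\epsilon)^{1/k}\cdot k$.

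The only mildly delicate point — and the part I would be most careful about — is the application of AM–GM: it requires all $s_j\ge 1$, which is exactly guaranteed by the hypothesis that $\tS$ covers at least one (indeed every) left super-node, and it is used in the direction that bounds the product by the $k$-th power of the average, turning the size bound on $|\tS|$ into the per-super-node bound needed for the probability estimate. Everything else is bookkeeping: checking that the sampled $S$ is a genuine labeling and that the covering witness transfers from \minlabel to \maxcover. No gap amplification or PCP machinery is needed here; this proposition is purely the combinatorial bridge that lets Theorem~\ref{lem:red-maxcov-domset}'s \domset reduction (stated for \minlabel-style covering) inherit \maxcover inapproximability.
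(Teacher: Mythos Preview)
Your proposal is correct and follows essentially the same approach as the paper: for part (2) you both argue by contrapositive, sample a random labeling by picking one vertex uniformly from each $\tS\cap W_j$, lower-bound the coverage probability of each $U_i$ by $\prod_j 1/s_j$, and apply AM--GM to convert the bound on $|\tS|=\sum_j s_j$ into the desired probability bound. Your write-up is in fact slightly more careful than the paper's in two places: you explicitly justify $\minlabel(\Gamma)\ge k$ in part (1) (the paper just asserts equality), and you spell out why each $s_j\ge 1$ so that the sampling and AM--GM are well-defined.
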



Before proceeding to prove Proposition~\ref{Proposition:maxcover-vs-minlabel}, let us state the desired reduction from label cover to \domset (Theorem~\ref{lem:red-maxcov-domset}) in terms of \minlabel instead of \maxcover:

\begin{lem}[Reduction from \minlabel to \domset~{\cite[Theorem 5.4]{ChalermsookCKLMNT17}})] \label{lem:red-minlab-domset}
There is an algorithm that, given a label cover instance 
$\Gamma = (U = \bigcup_{j=1}^{q} U_j, W = \bigcup_{i=1}^{k} W_i, E)$,
outputs a $k$-\domset instance 
$G$ such that
\begin{itemize}
\item $\minlabel(\Gamma) = \domset(G)$.
\item $|V(G)| = |W| + \sum_{i \in [q]} k^{|U_q|}$.
\item The reduction runs in time $O\left(|W| \left(\sum_{i \in [q]} k^{|U_q|}\right)\right)$.
\end{itemize}
\end{lem}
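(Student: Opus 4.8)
\emph{The construction.} I will implement the classical Feige-style gadget reduction, following~\cite{ChalermsookCKLMNT17}. Given the label cover instance $\Gamma = (U = \bigcup_{i\in[q]}U_i,\, W = \bigcup_{j\in[k]}W_j;\, E)$, I build $G$ on the vertex set $W \sqcup \bigsqcup_{i\in[q]} Z_i$, where for each left super-node $U_i$ the block $Z_i$ contains one vertex $z_\psi$ for every function $\psi\colon U_i \to [k]$; thus $|Z_i| = k^{|U_i|}$, which matches the claimed vertex count and running time. The edges of $G$ are: (i)~each right super-node $W_j$ is turned into a clique; (ii)~for all $i\in[q]$, $j\in[k]$, $z_\psi\in Z_i$, and $w\in W_j$, there is an edge $z_\psi w$ exactly when some $u\in U_i$ with $\psi(u)=j$ is a neighbour of $w$ in $\Gamma$. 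There are no other edges; in particular $\bigsqcup_i Z_i$ is an independent set of $G$, and distinct right super-nodes are non-adjacent. Listing these edges costs $O\!\left(|W|\cdot\sum_{i\in[q]}k^{|U_i|}\right)$ time, so it only remains to prove $\minlabel(\Gamma)=\domset(G)$.

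\emph{The key equivalence.} The heart of the argument will be the claim that, for every $\tS\subseteq W$, the set $\tS$ covers all left super-nodes of $\Gamma$ as a multi-labeling iff $\tS$ dominates $\bigcup_{i\in[q]}Z_i$ in $G$; moreover, when $q\ge 1$, any such $\tS$ also dominates all of $W$. The last point is clear: covering some $U_i$ requires $\tS\cap W_j\neq\emptyset$ for every $j$, and since $W_j$ is a clique a single vertex of $\tS$ in it dominates it. For the forward direction, fix $i$ and let $u^\star\in U_i$ witness that $\tS$ covers $U_i$, i.e.\ $u^\star$ has a neighbour in $\tS\cap W_j$ for each $j$; for an arbitrary $\psi$, putting $j:=\psi(u^\star)$, a neighbour $w\in\tS\cap W_j$ of $u^\star$ yields the edge $z_\psi w$, so $z_\psi$ is dominated. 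For the converse, suppose $\tS$ fails to cover $U_i$; then each $u\in U_i$ admits an index $j(u)$ such that $u$ has no neighbour in $\tS\cap W_{j(u)}$, and I set $\psi(u):=j(u)$. If $z_\psi$ had a neighbour $w\in\tS$, say $w\in W_j$, then by construction some $u\in U_i$ satisfies $\psi(u)=j$ and $uw\in E$, so $w$ would be a neighbour of $u$ lying in $\tS\cap W_{j(u)}$, a contradiction; hence $z_\psi$ (which is not in $\tS\subseteq W$) is undominated.

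\emph{The two inequalities.} For $\domset(G)\le\minlabel(\Gamma)$: an optimal multi-labeling $\tS^\star$ dominates $\bigcup_iZ_i$ and, since $q\ge 1$, all of $W$, by the equivalence; hence $\tS^\star$ is a dominating set and $\domset(G)\le|\tS^\star|=\minlabel(\Gamma)$. For $\domset(G)\ge\minlabel(\Gamma)$: starting from an optimal dominating set $D$ of $G$, I replace each $z_\psi\in D$ by an arbitrary $W$-neighbour of it (such a neighbour exists in every instance arising in our applications; the degenerate cases---$q=0$, or a $z_\psi$ with no neighbour---can be normalized away) to obtain $D'\subseteq W$ with $|D'|\le|D|$. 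Because $\bigsqcup_iZ_i$ is independent, a gadget vertex in $D$ dominates no gadget vertex other than itself, so every $z_\psi$ formerly dominated by $D$ remains dominated by $D'$; thus $D'$ dominates $\bigcup_iZ_i$, and by the equivalence $D'$ covers all left super-nodes, giving $\minlabel(\Gamma)\le|D'|\le|D|=\domset(G)$.

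\emph{Main obstacle.} The one genuinely delicate step is the ``pushing'' in the lower bound on $\domset(G)$: one must verify that swapping gadget vertices of a dominating set for right-node vertices cannot destroy domination of anything relevant. This works precisely because $\bigsqcup_iZ_i$ is an independent set (so each gadget vertex of the dominating set is accountable only for itself among gadget vertices) and because, to invoke the equivalence, it suffices that the new set dominate the gadget vertices rather than all of $G$. The leftover bookkeeping---pruning left-node vertices that have no neighbour in some right super-node so that no $z_\psi$ becomes isolated, the vacuous case $q=0$, and obtaining Theorem~\ref{lem:red-maxcov-domset} from this \minlabel reduction through Proposition~\ref{Proposition:maxcover-vs-minlabel} (a short pigeonhole/AM--GM estimate bounding how many left super-nodes a multi-labeling $\tS$ with prescribed sizes $|\tS\cap W_j|$ can cover)---is routine and I would dispatch it at the end.
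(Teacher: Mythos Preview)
Your proposal is correct and follows essentially the same approach as the paper: your gadget vertices $Z_i = \{z_\psi : \psi \colon U_i \to [k]\}$ are exactly the elements of the $(k,|U_i|)$-hypercube partition system $\mathcal{M}^i = [k]^{|U_i|}$ used there, and your key equivalence is precisely the combination of the Partition Cover and Robust properties. The only cosmetic difference is that the paper turns all of $W$ into a single clique (arguing via the Red-Blue \domset intermediate) rather than one clique per $W_j$, and proves feasibility directly instead of via your ``pushing'' step; both choices serve the same purpose and the analysis is unaffected.
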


It is obvious that Proposition~\ref{Proposition:maxcover-vs-minlabel} and Lemma~\ref{lem:red-minlab-domset} together imply Theorem~\ref{lem:red-maxcov-domset}. The rest of this section is devoted to proving Proposition~\ref{Proposition:maxcover-vs-minlabel} and Lemma~\ref{lem:red-minlab-domset}. Their proofs can be found in Subsections~\ref{app:maxcover-vs-minlabel} and~\ref{apx:minlabel-to-domset}, respectively.

\subsection{From \maxcover to \minlabel: Proof of Proposition~\ref{Proposition:maxcover-vs-minlabel}} \label{app:maxcover-vs-minlabel}

In this section, we prove Proposition~\ref{Proposition:maxcover-vs-minlabel}. The proof presented here is due to~\cite{ChalermsookCKLMNT17} with slight changes to match our notations.

\begin{proof}[Proof of Proposition~\ref{Proposition:maxcover-vs-minlabel}]
\begin{enumerate}[(1)]
\item Suppose $\maxcover(\Gamma) = 1$. Then there exists a labeling $S \subseteq W$ that covers every left super-node $U_i$. Hence, $S$ is also a multi-labeling that covers every left super-node, which implies that $\minlabel(\Gamma) = k$.
\item We prove by contrapositive. Assume that $\minlabel(\Gamma) < (1/\varepsilon)^{1/k} k$. Then there exists a multi-labeling $\tS \subseteq W$ of size less than $(1/\varepsilon)^{1/k} k$ that covers every left super-node. Observe that we can construct a feasible $S'$ solution to $\maxcover$ by uniformly and independently choosing one node from $\tS\cap W_j$ at random, for each of the right super-node $W_j$.
  
  Thus, the expected number of left super-nodes covered by $S'$ is \allowdisplaybreaks

  \begin{align*}
   \mathbb{E}_{S'}\left[|\{i\in[q]: S' \mbox{ covers } U_i\}|\right]
      &\ge \sum_{i\in[q]} \prod_{j\in[k]} |\tS \cap W_j|^{-1}\\
      \mbox{(By AM-GM inequality)} &\geq \sum_{i\in[q]} \left(\frac{1}{k}\sum_{j\in[k]}|\tS\cap W_j|\right)^{-k}\\
      &> q \cdot \left(\frac{1}{k}\cdot\left(\left(\frac{1}{\varepsilon}\right)^{1/k}k\right)\right)^{-k}\\
      & = q \cdot \varepsilon
  \end{align*}
  Hence, there is a labeling that covers $> \varepsilon q$ left super-nodes, i.e., $\maxcover(\Gamma) > \varepsilon$.
  \qedhere
\end{enumerate}
\end{proof}

\subsection{From \minlabel to \domset: Proof of Lemma~\ref{lem:red-minlab-domset}}
\label{apx:minlabel-to-domset}

We devote this subsection to show a reduction from \minlabel to \domset
as stated in Lemma~\ref{lem:red-minlab-domset}.

\begin{proof}[Proof of Lemma~\ref{lem:red-minlab-domset}]
We show the reduction from \minlabel to the bipartite version of \domset,
where the input graph $G=(A, B; E)$ is bipartite,
and we are asked to find a subset of (left) vertices of $A$ that dominates
all the (right) vertices in $B$. 
This variant of \domset is known as {\sf Red-Blue} $\domset$, 
which is equivalent to the {\em Set Cover} problem.
%
One can transform the bipartite version of \domset into the general case just 
by forming a clique on $A$. 
It is not hard to see that 
the optimum remains unchanged.

We apply the reduction from \cite{ChalermsookCKLMNT17},
which is in turn taken from \cite{Feige98} (and \cite{LundY94})
with small adjustments.
%
Our construction requires the {\em hypercube partition system}.

\paragraph{Hypercube Partition System.}
The $(\kappa,\rho)$-hypercube partition system consists of the universe $\mathcal{M}$
and a collection of subset $\{P_{x,y}\}_{x\in [\rho],y\in[\kappa]}$ where
\begin{align*}
\mathcal{M} = [\kappa]^{\rho}
\quad\mbox{and}\quad
P_{x,y} = \{z\in \mathcal{M}: z_{x} = y\}
\end{align*}

The following are properties that can be observed from the hypercube partition system:
\begin{itemize}
\item {\bf Partition Cover:} For each row $x\in[\rho]$, 
      $\mathcal{P}_x=(P_{x,1},\ldots,P_{x,\kappa})$ is a partition of $\mathcal{M}$.
      Thus, $\bigcup_{y\in[\kappa]}P_{x,y}=\mathcal{M}$ and 
      $P_{x,y} \cap P_{x,y'} = \emptyset$ for $y\neq y'$.

      \medskip
      That is, ones can cover the universe $\mathcal{M}$ by picking 
      all $\kappa$ subsets from some row $x\in[\kappa]$.

\item {\bf Robust Property:} 
      For any $y^*_{1},\ldots,y^*_{\rho}\in[\kappa]$, 
      $\bigcup_{x\in[\alpha], y\in[\kappa]\setminus \{y^*_x\}} P_{x, y} \neq \mathcal{M}$.

      \medskip

      That is, the only way to cover the universe $\mathcal{M}$ is to 
      pick all the $\kappa$ subsets from the same row.
      Otherwise, even if we include $\kappa-1$ subsets from 
      every row $x\in[\rho]$, 
      it is not possible to cover the universe $\mathcal{M}$.

\item {\bf Size:} The number of elements in the partition is $\kappa^\rho$.
\end{itemize}

\paragraph*{The Reduction.}
Now we present our reduction.
Let $\Gamma=(U=\bigcup_{j=1}^{q}U_j, W=\bigcup_{i=1}^k W_i; E')$
be an instance of \minlabel.
We will construct a bipartite graph $G=(A,B; E)$ of {\sf Red-Blue} \domset from $\Gamma$.
The vertices of $A$ are taken from the right vertices of $\Gamma$, i.e., $A=W$.

For each super-node $U_{i}$, we create the $(k,|U_{i}|)$-hypercube partition system
$(\mathcal{M}^{i},\{P^{i}_{x,y}\}_{x\in[|U_{q}|],y\in[k]})$
(i.e., $\rho=|U_{q}|$ and $\kappa=k$);
then we take the elements of $\mathcal{M}^j$ as left vertices of $G$.

We name vertices in $U_i$ by $u^i_1,u^i_2,\ldots,u^i_{|U_i|}$ for $i\in[q]$.
%
For each $j\in k$ and each $w^j\in W_j$,
we join $w_j$ to every vertex $z\in P^i_{x,j}$ 
if $\{u^i_x, w^j\}$ is an edge in $\Gamma$.
More precisely, the bipartite graph $G=(A,B; E)$ is defined such that 
\begin{align*}
A := W = \bigcup_{j}^{k}W_j,
B := \bigcup_{i=1}^{q} \mathcal{M}^i,
E := \{\{w^j, z\}: i \in [q], j\in[k], x\in[|U_i|], w^j\in W_j,
              z \in P^i_{x,j}, \{u^i_x, w^j\} \in E'\}.
\end{align*}

\paragraph*{Size of The Construction.}
It is clear from that the numbers of left and right vertices in our construction is as follows.

\begin{align*}
\mbox{The Number of Left veritces of $G$} 
  &= |W| \\
%
\mbox{The Number of Right vertices of $G$}
  &= \sum_{i\in[q]}k^{|U_i|}
%
\end{align*}

\paragraph*{Analysis.}
Clearly, the size translation satisfies the lemma statement. 
For the claim $\minlabel(\Gamma)=\domset(G)$,
observe that any solution $S$ to the \minlabel instance corresponds to 
a solution to the {\sf Red-Blue} \domset instance.
We claim that $S$ is feasible to the \minlabel instance if and only if 
it is feasible to the \domset instance.
This will prove the claim and thus imply the lemma.

First, suppose $S$ is feasible to \minlabel.
Then, for every $U_i$, there is a vertex $u^i_x\in U_i$ that has an edge
to some vertex $w^j_y\in W_j\cap S$, for all $j\in[k]$.
This means that we have an edge from $w^j_y$ to every vertex $z\in P^i_{x,y}$.
It then follows by the Partition Cover property of
the hypercube partition system that $S$ dominates 
every vertex in the universe $\mathcal{M}^i$.
Consequently, we conclude that $S$ dominates $B=\bigcup_{i\in[q]}\mathcal{M}^i$.

Conversely, suppose $S$ is feasible to \domset.
Then $S$ has an edge to every vertex in the universe $\mathcal{M}^i\subseteq B$.
We know by construction and 
the Robustness Property of the hypercube partition system
that, for some $x\in [|U_i|]$,
$S$ contains vertices 
$w^{1} \in S\cap W_1, w^{2} \in S\cap W_2, \ldots, w^{k} \in S\cap W_{k}$,
which correspond to the subsets $P^i_{x,y}$'s in the partition system,
and that $w^{1},w^{2},\ldots,w^{k}$ are incident to the vertex $u^i_x \in U_i$.
This means that $S$ covers the left super-node $U_i$.
Consequently, we conclude that $S$ is a feasible multi-labeling,
thus completing the proof.
\end{proof}
\section{Connection to Fine-Grained Complexity}\label{sec:fine}

In this section, we will demonstrate the conditional hardness of problems in \P\ by basing them on the conditional hardness of \PSP. First, we define the problem in \P\ of interest to this section.

\begin{definition}[$k$-linear form inner product]
Let $k,m\in\mathbb N$. Given $x_1,\ldots ,x_k\in\mathbb R^m$, we define the inner product of these $k$ vectors as follows:
$$\langle x_1,\ldots ,x_k\rangle=\sum_{i\in[m]}\prod_{j\in[k]}x_i(j),$$
where $x_i(j)$ denotes the $j^{\text{th}}$ coordinate of the vector $x_i$.
\end{definition}

\begin{definition}[$k$-chromatic Maximum Inner Product]
Let $k\in \N$.  Given $k$ collections $A_1,A_2,\ldots ,A_k$ each of $N$ vectors in $\{0,1\}^D$, where $D=N^{o(1)}$, and an integer $s$, the $k$-chromatic Maximum Inner Product (\mip) problem is to determine if there exists $a_i\in A_i$  for all $i\in[k]$ such that $\langle a_1,\ldots ,a_k\rangle\ge s$.
\end{definition}

We continue to use the shorthand $\Gamma(N,k,r,\ell)$ introduced in Section~\ref{sec:gapmax}. Moreover, while using the shorthand $\Gamma(N,k,r,\ell)$, we will assume that for all $i\in [h]$, $|W_i|\le N$. 
We define {\sf Unique} \maxcover to be the \maxcover problem with the following additional structure: for every labeling $S\subseteq W$ (see Section~\ref{sec:prelim:label-cover} to recall the definition of a labeling) and any left super-node $U_i$, there is at most one node in $U_i$ which is a neighbor to all the nodes in $S$.
We remark that the reduction in Theorem~{\ref{thm:meta-maxcov}} already produces instances of {\sf Unique} \maxcover. This follows from the proof of Theorem~\ref{thm:meta-maxcov} by noting that on every random string, each player sends a message to the referee in a deterministic way. 
Finally, we have the following connection between unique \maxcover and $\mip$. 

\begin{theorem}\label{thm:metafine}
Let $N,k,r,\ell\in\mathbb N$.
There is a reduction from any {\sf Unique} \maxcover instance $\Gamma(N,k,r,\ell)$ to $k$-chromatic \mip instance $(A_1,\ldots ,A_k,s)$ such that
\begin{itemize}
\item For all $i\in [k]$, $|A_i|\le N$, $s=2^r$, and $D\le 2^{r+\ell k}$.
\item The running time of the reduction is $O(Nk\cdot 2^{r+\ell k}) $.
\item For any integer $s^*$, there exists $a_i\in A_i$  for all $i\in[k]$ such that $\langle a_1,\ldots ,a_k\rangle\ge s^*$ if and only if $\maxcover(\Gamma)\ge \nicefrac{s^*}{2^r}$.   
\end{itemize}
\end{theorem}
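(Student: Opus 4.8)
The plan is to give a direct, gadget-free reduction in which the vectors of $A_i$ are essentially the rows of the bipartite adjacency matrix of $\Gamma$ restricted to the right super-node $W_i$, with the coordinates indexed by the left vertices of $\Gamma$. Concretely, identify $A_i$ with $W_i$, so that $|A_i| = |W_i| \le N$ and, since $\Gamma(N,k,r,\ell)$ has exactly $k$ right super-nodes, the tuples $(a_1,\dots,a_k) \in A_1 \times \cdots \times A_k$ are in bijection with the labelings $S$ of $\Gamma$. Index the $D$ coordinates by the vertices $u$ of $U = U_1 \cup \cdots \cup U_{2^r}$; since $\Gamma$ has $2^r$ left super-nodes, each of size at most $2^{\ell k}$, we get $D = |U| \le 2^{r+\ell k}$ as required. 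For $w \in W_i$, define the vector $a_w \in \{0,1\}^D$ by $a_w(u) = 1$ if and only if $\{w,u\} \in E$, and set the threshold $s := 2^r$. All of this can be produced by a single pass over $\Gamma$ (after an $O(|E|)$-time preprocessing step to support constant-time incidence lookups), so the reduction runs in time $O(Nk \cdot 2^{r+\ell k})$, which dominates $|E|$.

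The heart of the argument is the identity, valid for every labeling $S = (a_{w_1},\dots,a_{w_k})$:
\[
\langle a_{w_1},\dots,a_{w_k}\rangle \;=\; \bigl|\{\,\gamma : U_\gamma \text{ is covered by } S\,\}\bigr|.
\]
To see it, expand $\langle a_{w_1},\dots,a_{w_k}\rangle = \sum_{u \in U} \prod_{i \in [k]} a_{w_i}(u)$; the $u$-th summand equals $1$ exactly when $u$ is a neighbor of all of $w_1,\dots,w_k$, i.e.\ when the unique left super-node $U_\gamma$ that contains $u$ is covered by $S$ with $u$ serving as the witness. A naive summation would over-count a super-node admitting several witnesses, but this is exactly what the {\sf Unique} \maxcover promise forbids: each $U_\gamma$ has at most one vertex adjacent to all of $S$. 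Hence every covered super-node contributes exactly $1$ and every uncovered one contributes $0$, giving the displayed identity. This is the only place the ``unique'' structure enters (and, as noted after Theorem~\ref{thm:meta-maxcov}, it is precisely the structure produced there), and since the step is elementary I do not expect any genuine obstacle --- the remainder is routine bookkeeping.

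Given the identity, the last bullet is immediate: for any integer $s^*$, there is a tuple $(a_1,\dots,a_k) \in A_1 \times \cdots \times A_k$ with $\langle a_1,\dots,a_k\rangle \ge s^*$ if and only if some labeling of $\Gamma$ covers at least $s^*$ of its $2^r$ left super-nodes, which by definition is equivalent to $\maxcover(\Gamma) \ge s^*/2^r$. The parameter bounds ($|A_i| \le N$, $s = 2^r$, $D \le 2^{r+\ell k}$) and the running-time bound were already checked during the construction, so nothing further is needed; in the intended applications the resulting instance is moreover a bona fide $k$-chromatic \mip instance because the protocol parameters $r,\ell$ will be small enough to force $D = N^{o(1)}$.
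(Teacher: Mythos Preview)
Your proposal is correct and is essentially the same construction and argument as the paper's own proof: vectors in $A_i$ are the adjacency rows of $W_i$ indexed by left vertices, and the Unique \maxcover promise is used exactly as you do to turn the inner product into an exact count of covered left super-nodes. If anything, your write-up makes the key identity more explicit than the paper does.
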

\begin{proof}
For every $i\in[k]$, we associate each $A_i$ with the right super-node $W_i$. Each node in $W_i$ corresponds to a vector in $A_i$. Given a node $w$ in $W_i$, the corresponding vector $a\in A_i$ is constructed as follows. Each coordinate of $a$ corresponds to a left node. Let  the $j^{\text{th}}$ coordinate of $a$ correspond to a node $u\in U$. The $j^{\text{th}}$ coordinate of $a$ is 1 if there is an edge between $u$ and $w$ in the graph $G$ of the instance $\Gamma$; otherwise, the $j^{\text{th}}$ coordinate of $a$ is 0. It is easy to see that $|A_i|\le N$ and $D\le 2^{r+\ell k}$, and the running time of the reduction is $O(Nk\cdot 2^{r+\ell k}) $. It is also easy to see that if $\maxcover(\Gamma)\ge \nicefrac{s^*}{2^r}$ then the vectors $a_i$ corresponding to the nodes in the labeling resulting in the maximum cover, have inner product at least $s^*$.   

We will now show that  for any integer $s^*$, if there exists $a_i\in A_i$  for all $i\in[k]$ such that $\langle a_1,\ldots ,a_k\rangle\ge s^*$ then $\maxcover(\Gamma)\ge \nicefrac{s^*}{2^r}$. Fix an integer $s^*$. Suppose that there exists $a_i\in A_i$  for all $i\in[k]$ such that $\langle a_1,\ldots ,a_k\rangle\ge s^*$. The $k$-tuple $(a_1,\ldots ,a_k)$ corresponds to a labeling $S$ of unique \maxcover. For each coordinate such that all the $a_i=1$ on that coordinate, we note that there is a left node (corresponding to the coordinate) which is a common neighbor of all the nodes in the labeling. Moreover, since in the unique \maxcover we can have at most one node in each left super-node which is a common neighbor of the nodes in the labeling, we have that:
 $$\maxcover(\Gamma)\ge \frac{1}{2^r}\cdot |\{i \in [2^r] \mid U_i \text{ is covered by } S\}|=\langle a_1,\ldots ,a_k\rangle\ge s^*.\qedhere$$
\end{proof}

The results for hardness of approximation for problems in \P\ is obtained by simply fixing the value of $k$ in the above theorem to some universal constant (such as $k=2$).
For example, by fixing $k=2$ and applying the above reduction to Corollary~\ref{cor:labSETH}, we recover the main result of \cite{ARW17} on bichromatic \mip. This is not surprising as under \SETH, our framework is just a generalization of the distributed PCP framework of \cite{ARW17}.

Furthermore, we demonstrate the flexibility of our framework by fixing $k=3$ and applying the above reduction to Corollary~\ref{cor:labksum}, to establish a hardness of approximation result of trichromatic \mip\ under the \textsf{3-SUM} Hypothesis as stated below. We note here that the running time lower bound is only $N^{2 - o(1)}$, which is  likely not tight since the lower bound from \SETH is $N^{3 - o(1)}$.

\begin{theorem}\label{thm:3sumMIP}
Assuming the \textsf{3-SUM} Hypothesis, for every $\varepsilon>0$,
no $O(N^{2-\varepsilon})$ time algorithm can,
given three collections $A,B,$ and $C$ each of $N$ vectors in $\{0,1\}^D$, where $D=N^{o(1)}$, and an integer $s$, distinguish between the following two cases:
\begin{description}
\item[Completeness.] There exists $a\in A,b\in B,c\in C$ such that $\langle a,b,c\rangle\ge s$.
\item[Soundness.] For every $a\in A,b\in B,c\in C$, $\langle a,b,c\rangle\le s/2^{(\log N)^{1-o(1)}}$.
\end{description}
\end{theorem}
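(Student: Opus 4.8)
The plan is to specialize the pipeline already assembled in the paper — the \threesum Hypothesis makes $\PSP(3,\cF^{\SZ})$ hard (Proposition~\ref{prop:ksumPSP}); an efficient \sumz protocol converts this into hardness of approximating \maxcover with three right super-nodes (Corollary~\ref{cor:labksum}); and Theorem~\ref{thm:metafine} converts {\sf Unique} \maxcover into trichromatic \mip — to the case $k=3$. The only genuinely new point is a recalibration of the number of repetitions of the \sumz protocol: Corollary~\ref{cor:rep-ksum-protocol} uses only $\Theta(\log m/(k\log k))$ repetitions, tuned so the left alphabet stays small enough for the reduction to $k$-\domset, which for constant $k$ and $m=\Theta(\log N)$ gives merely a $\mathrm{polylog}(N)$ soundness gap. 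Since the \mip reduction of Theorem~\ref{thm:metafine} only requires $D\le 2^{r+\ell k}=N^{o(1)}$ (and not the much stronger smallness needed for \domset), one can afford roughly $m/(\log m)^2$ repetitions — the most that keeps $r+\ell k=o(m)$ — which blows the gap up all the way to $2^{(\log N)^{1-o(1)}}$.

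In detail, I would first fix, for every $m,k\in\N$, the repetition count $z=z(m,k):=\max\{1,\lceil m/(1+\log(m+\log k))^2\rceil\}$ and apply Proposition~\ref{prop:repeat} to the protocol of Theorem~\ref{thm:RandKSum} (equivalently Nisan's protocol, Theorem~\ref{thm:Nisan}) with this $z$, obtaining a $(0,\ O(z\log(m+\log k)),\ O(z\log k),\ (3/4)^z)$-efficient protocol for $\SZ_{m,k}$. For each fixed $k$, as $m\to\infty$ one has $z\log m=\Theta(m/\log m)=o(m)$, so $w+r+\ell k=o_k(m)$, which is exactly the hypothesis of Corollary~\ref{cor:labksum}. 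Feeding this protocol into that corollary with $k=3$, and recalling from Definition~\ref{def:ksumFam} that the relevant input length is $m=m(N,3)=6\lceil\log N\rceil=\Theta(\log N)$ — so that $r=O(\log N/\log\log N)$, $3\ell=O(\log N/(\log\log N)^2)$, and $z=\Theta(\log N/(\log\log N)^2)$ — yields: assuming the \threesum Hypothesis, for every $\varepsilon>0$ no $O(N^{2-\varepsilon})$-time algorithm can distinguish $\maxcover(\Gamma)=1$ from $\maxcover(\Gamma)\le (3/4)^z$ for label cover instances $\Gamma(N,3,r,\ell)$, for all $N$. I would further note that these hard instances are instances of {\sf Unique} \maxcover, being outputs of the reduction of Theorem~\ref{thm:meta-maxcov} (cf.\ the remark preceding Theorem~\ref{thm:metafine}).

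Finally I would apply the reduction of Theorem~\ref{thm:metafine} with $k=3$ to $\Gamma(N,3,r,\ell)$: it runs in time $O(N\cdot 2^{r+3\ell})=N^{1+o(1)}$ and outputs a trichromatic \mip instance with at most $N$ vectors per collection (pad with all-zero vectors to reach exactly $N$; this changes nothing), threshold $s=2^r$, and dimension $D\le 2^{r+3\ell}=2^{O(\log N/\log\log N)}=N^{o(1)}$. If $\maxcover(\Gamma)=1$, then Theorem~\ref{thm:metafine} with $s^*=2^r$ gives $a\in A,b\in B,c\in C$ with $\langle a,b,c\rangle\ge 2^r=s$ (completeness). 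If $\maxcover(\Gamma)\le (3/4)^z$, then every triple must satisfy $\langle a,b,c\rangle\le (3/4)^z 2^r$: otherwise the nonnegative integer $\langle a,b,c\rangle$ would be $\ge\lfloor (3/4)^z 2^r\rfloor+1=:s^*$, forcing $\maxcover(\Gamma)\ge s^*/2^r>(3/4)^z$ by Theorem~\ref{thm:metafine}; and since $z=\Theta(\log N/(\log\log N)^2)$ and $(\log\log N)^2=(\log N)^{o(1)}$, we have $z\log_2(4/3)=(\log N)^{1-o(1)}$, hence $(3/4)^z 2^r=s/2^{(\log N)^{1-o(1)}}$ (soundness). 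Therefore an $O(N^{2-\varepsilon})$-time algorithm for the promised \mip gap problem, composed with this $N^{1+o(1)}$-time reduction, would distinguish $\maxcover(\Gamma)=1$ from $\maxcover(\Gamma)\le(3/4)^z$ for all $\Gamma(N,3,r,\ell)$ in $O(N^{2-\varepsilon})$ time, contradicting the previous step.

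I do not expect a real obstacle, since the statement is essentially a read-off from the machinery already in place; the step that needs the most care is the calibration of $z$ — simultaneously large enough that $(3/4)^z$ shrinks to $2^{-(\log N)^{1-o(1)}}$, and small enough that $r+\ell k=o(m)$, which is what keeps both the vector count at $N^{1+o(1)}$ and the dimension at $D=N^{o(1)}$ as required by the statement of Theorem~\ref{thm:3sumMIP}.
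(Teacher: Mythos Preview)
Your proposal is correct and follows essentially the same route as the paper: repeat the \sumz protocol of Theorem~\ref{thm:RandKSum} roughly $m/(\log m)^2$ times, feed the resulting parameters into Corollary~\ref{cor:labksum} with $k=3$, and then apply Theorem~\ref{thm:metafine}. Your write-up is in fact more careful than the paper's in spelling out the soundness contrapositive and the dimension/running-time bounds; the only quibble is the parenthetical ``equivalently Nisan's protocol'' --- Theorem~\ref{thm:Nisan} has message length $O(\log(m+\log k))$ rather than $O(\log k)$ and soundness $1/2$ rather than $3/4$, so while it would also work here (since $k=3$ is fixed), it is not literally the same protocol.
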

\begin{proof}
We apply Proposition~\ref{prop:repeat} with $z=\nicefrac{m}{(\log_2 m)^2}$ and $k=3$ to Theorem~\ref{thm:RandKSum}, to obtain a  $\left(0,O(\nicefrac{m}{\log m}),O(\nicefrac{m}{(\log m)^2}),\left(\nicefrac{1}{2}\right)^{m^{1-o(1)}}\right)$-efficient protocol for $\SZ_{m, 3}$ in the $3$-player \SMP model. By plugging in the parameters of the above protocol to Corollary~\ref{cor:labksum}, we obtain that assuming the \textsf{3-SUM} hypothesis, for every $\varepsilon > 0$, no $O(N^{2 - \varepsilon})$-time algorithm can distinguish between $\maxcover(\Gamma) = 1$
and $\maxcover(\Gamma) \leq \left(\nicefrac{1}{2}\right)^{(\log N)^{1-o(1)}}$ for any label cover instance $\Gamma(N,3,r,\ell)$ for all $N \in \N$. The proof of the theorem concludes by applying Theorem~\ref{thm:metafine} to the above hardness of \maxcover (Note that Theorem~\ref{thm:meta-maxcov} provides a reduction from $\PSP(k,\cF,N)$ to {\sf Unique} \maxcover).
\end{proof}

\end{document}